 \newcommand{\bs}{\bigskip}
 \newcommand{\n}{\noindent}
 \newcommand{\s}{\smallskip}
 \newcommand{\hs}[1]{\hspace*{ #1 mm}}
 \newcommand{\vs}[1]{\vspace*{ #1 mm}}
 \newcommand{\setempty}{\mathrm{\O}}
 \newcommand{\real}{\mathbb{R}}
 \newcommand{\nat}{\mathbb{N}}
 \newcommand{\integer}{\mathbb{Z}}
 \newcommand{\rational}{\mathbb{Q}}
 \newcommand{\complex}{\mathbb{C}}
 \newcommand{\algebraic}{\mathbb{A}}
 \newcommand{\ie}{\textrm{i.e.},\hspace*{2mm}}
 \newcommand{\etalc}{\textrm{et al.}}
 \newcommand{\BB}{{\cal B}}
 \newcommand{\FF}{{\cal F}}
 \newcommand{\HH}{{\cal H}}
 \newcommand{\GG}{{\cal G}}
 \newcommand{\MM}{{\cal M}}
 \newcommand{\UU}{{\cal U}}
 \newcommand{\fp}{\mathrm{FP}}
 \newcommand{\comb}[2]{\left(\:\begin{subarray}{c} #1 \\%
      #2 \end{subarray}\right)}
 \def\bbox{\vrule height6pt width6pt depth1pt}
\theoremstyle{plain}
 \newtheorem{theorem}{Theorem}[section]
 \newtheorem{lemma}[theorem]{Lemma}
 \newtheorem{proposition}[theorem]{Proposition}
  \newtheorem{definition}[theorem]{Definition}}
 \newtheorem{claim}{Claim}
 \newenvironment{proof}{\par \noindent
            {\bf Proof. \hs{2}}}{\hfill$\Box$ \vspace*{3mm}}
 \newenvironment{proofof}[1]{\vspace*{5mm} \par \noindent
         {\bf Proof of #1.\hs{2}}}{\hfill$\Box$ \vspace*{3mm}}
 \newcommand{\pair}[1]{\langle #1 \rangle}
\newcommand{\ignore}[1]{}
\newcommand{\sharpcsp}{\#\mathrm{CSP}}
\newcommand{\APreduces}{\leq_{\mathrm{AP}}}
\newcommand{\APequiv}{\equiv_{\mathrm{AP}}}
\newcommand{\DG}{{\cal DG}}
\newcommand{\ED}{{\cal ED}}
\newcommand{\NAND}{{\cal NAND}}
\newcommand{\OR}{{\cal OR}}
\newcommand{\csp}{csp}
\newcommand{\AZ}{{\cal AZ}}
\newcommand{\econst}{\leq_{e\mbox{-}con}}
\begin{document}
\pagestyle{plain}
\setcounter{page}{1}

\begin{center}
{\Large {\bf Constant Unary Constraints and
Symmetric Real-Weighted \s\\ Counting Constraint Satisfaction Problems}}\footnote{A preliminary version under a slightly concise title appeared in the Proceedings of the 23rd International Symposium on Algorithms and Computation (ISAAC 2012),  Taipei, Taiwan, December 19--21, 2012, Lecture Notes in Computer Science, Springer-Verlag, vol. 7676, pp. 237--246, 2012.} \bs\\
{\sc Tomoyuki Yamakami}\footnote{Present Affiliation: Department of Information Science, University of Fukui, 3-9-1 Bunkyo, Fukui 910-8507, Japan} \bs\\
\end{center}

\begin{quote}
\n{\bf Abstract:}
A unary constraint (on the Boolean domain) is a function from $\{0,1\}$ to the set of real numbers. A free use of
auxiliary unary constraints given besides input instances has proven to be useful in establishing a complete classification of the computational complexity of approximately solving weighted counting Boolean constraint satisfaction problems (or \#CSPs).  In particular, two special constant  unary constraints are a key to an arity reduction of arbitrary constraints, sufficient for the desired classification.
In an exact counting model, both constant unary constraints are always assumed to be available since they can be eliminated efficiently using
an arbitrary nonempty set of constraints. In contrast,
we demonstrate in an approximate counting model, that at least one of them is efficiently approximated and thus eliminated approximately
by a nonempty constraint set.
This fact directly leads to an efficient construction of polynomial-time randomized approximation-preserving Turing reductions (or AP-reductions) from \#CSPs with designated constraints
to any given \#CSPs composed of symmetric real-valued  constraints of arbitrary arities even in the presence of arbitrary extra unary constraints.

\s

\n{\bf Keywords:} counting constraint satisfaction problem,  AP-reducible,
effectively T-constructible, constant unary constraint, symmetric constraint, algebraic real number, p-convergence
\end{quote}

\section{Roles of Constant Unary Constraints}\label{sec:introduction}

{\em Constraint satisfaction problems} (or {\em CSPs}, in short) are combinatorial problems that have been ubiquitously found in real-life situations.
The importance of these problems have led recent intensive studies from various aspects: for instance, decision CSPs \cite{DF03,Sch78}, optimization CSPs \cite{Cre95,Yam11b}, and counting CSPs \cite{CL07,CH96,DGJ09,Yam10a}.
Driven by theoretical and practical interests, in this paper, we are particularly focused on {\em counting Boolean CSPs} (abbreviated as \#CSPs) whose goal is to count the number of variable assignments that satisfy all given Boolean-valued constraints defined over a fixed series of Boolean variables.
The problem of counting the number of Boolean assignments that satisfy each given propositional formula, known as \#SAT (counting satisfiability problem), is a typical counting CSP with three Boolean-valued constraints, $AND$, $OR$, and $NOT$. As this example demonstrates,
in most real-life applications, all available constraints are
pre-determined. Hence, we naturally fix a collection of ``allowed'' constraints, say, $\FF$  and
wish to solve every \#CSP whose constraints are all chosen from $\FF$.
Such a counting problem is conventionally denoted $\sharpcsp(\FF)$ and this notation will be used throughout this paper.
Creignou and Hermann \cite{CH96} first examined the computational complexity of {\em exactly} counting solutions of unweighted \#CSPs.  Recently,
Dyer, Goldberg, and Jerrum \cite{DGJ10} studied the computational complexity of {\em approximately} computing the number of solutions of unweighted \#CSPs
using a technical reduction, known as {\em polynomial-time randomized approximation-preserving Turing reduction} (or {\em AP-reduction}, hereafter), whose formulation is originated from  \cite{DGGJ04} and it will be explained in details through Section \ref{sec:randomized-scheme}.

In a more interesting case of {\em weighted \#CSPs}, the values of constraints are expanded from Boolean values to more general values, and each  weighted \#CSP asks for the sum, over all possible assignments for Boolean variables, of products of the output values of all given constraints.
Earlier, Cai, Lu, and Xia \cite{CLX09x} gave a complete classification of
complex-weighted \#CSPs restricted on a given set $\FF$ of constraints
according to the computational complexity of {\em exactly} solving them.
Another complete classification regarding the complexity of {\em approximately} solving complex-weighted \#CSPs was presented by Yamakami \cite{Yam10a} when allowing a free use of auxiliary unary (\ie arity-1)  constraints besides initially given input constraints.
More precisely, let $\UU$ denote the set of all unary constraints. Given an arbitrary constraint $f$, the free use of auxiliary unary constraints makes  $\#\mathrm{SAT}_{\complex}$ (a complex extension of $\#\mathrm{SAT}$) AP-reducible to $\sharpcsp(f,\UU)$ unless $f$ is factored into three categories of constraints: the binary equality, the binary disequality, and unary constraints \cite{Yam10a}.
All constraints factored into constraints of those categories form a special set $\ED$. The aforementioned fact establishes the following complete classification of the approximation complexity of weighted \#CSPs in the presence of $\UU$.

\begin{theorem}{\em \cite[Theorem 1.1]{Yam10a}}\label{dichotomy-theorem}
Let $\FF$ be any set of complex-valued constraints. If $\FF\subseteq \ED$, then  $\sharpcsp(\FF,\UU)$ is solvable in polynomial time; otherwise, it is AP-reduced from $\#\mathrm{SAT}_{\complex}$.
\end{theorem}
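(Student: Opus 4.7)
The plan is to handle the two directions of the dichotomy separately. For the tractable direction, assume $\FF \subseteq \ED$. Each constraint in $\FF \cup \UU$ then factors into binary equalities, binary disequalities, and unary constraints, so any instance of $\sharpcsp(\FF, \UU)$ can be flattened into an equivalent instance whose constraints are drawn only from these three elementary types. The equality/disequality subgraph partitions the variable set into connected components; within each component the consistency of the disequality pattern either kills the contribution or reduces the summation to two products of unary values indexed by the Boolean value of a single root variable. The overall count is thus a product over components and is computable in polynomial time.

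For the hard direction, fix any $f \in \FF \setminus \ED$; it suffices to establish $\#\mathrm{SAT}_{\complex} \APreduces \sharpcsp(f, \UU)$. The principal tool is arity reduction by auxiliary unary constraints. On any single coordinate of $f$, attaching a strongly biased unary drawn from $\UU$ and compensating by a normalization approximates the effect of pinning that coordinate to a chosen Boolean value, thereby producing an $(\mathrm{arity}(f)-1)$-ary derived constraint. Iterating this procedure yields a chain of constraints of decreasing arity. The assumption $f \notin \ED$ propagates through the chain, in the sense that at some stage a binary constraint lying outside $\ED$ must be realized; known AP-hardness results for such binary non-$\ED$ constraints (complex-weighted versions of those underlying \cite{DGJ10}) then supply the desired reduction from $\#\mathrm{SAT}_{\complex}$.

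The main obstacle is that the pinning step is exact in the exact-counting setting but only approximate here. One must therefore simulate the constant unary constraints $\Delta_{0}$ and $\Delta_{1}$, the indicators of $x = 0$ and $x = 1$, from members of $\UU$ with sufficient accuracy to be useable inside gadgets. Controlling the accumulated error across polynomially many approximate pinnings, via the effective T-constructibility and p-convergence framework highlighted in the keywords, is the central technical difficulty, and it is precisely the gap between the exact and approximate settings that the present paper is set up to bridge.
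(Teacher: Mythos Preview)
Your identification of the ``main obstacle'' is mistaken. Since $\UU$ denotes the set of \emph{all} unary constraints, both $\Delta_0=[1,0]$ and $\Delta_1=[0,1]$ are already members of $\UU$ and hence directly available in any instance of $\sharpcsp(f,\UU)$. Pinning is therefore exact even in the approximate-counting model once $\UU$ is granted; no simulation by ``strongly biased'' unaries and no p-convergence control is needed at this point. The genuine gap in your sketch lies elsewhere: the assertion that ``$f\notin\ED$ propagates through the chain'' of pinnings down to a binary non-$\ED$ constraint is nontrivial and unsupported. Pinning a coordinate of a non-$\ED$ constraint can land in $\ED$, and showing that some sequence of pinnings avoids this is precisely the work that the factorization machinery of \cite{Yam10a} performs; you have not supplied that argument.

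The paper's route is also structurally different from yours. It gives an alternative proof only for the \emph{symmetric real-valued} subcase, and its point is to postpone any use of $\UU$ as long as possible. Theorem~\ref{key-Delta-elimination} first extracts, from $\FF$ alone, at least one of $\Delta_0,\Delta_1$ via effective T-constructibility (this is where p-convergence is genuinely required, because $\UU$ is not yet assumed). Lemma~\ref{higher-case} then performs an explicit case-by-case arity reduction using only that single constant unary constraint, producing a concrete $g\in\OR\cup\NAND\cup\BB$. Only at the final step is $\UU$ invoked, through Lemma~\ref{OR-and-B-Yam10a}, to obtain $\#\mathrm{SAT}\APreduces\sharpcsp(g,\UU)$. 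Thus the paper establishes the stronger intermediate statement of Theorem~\ref{main-theorem}, whereas your outline, once the pinning confusion is removed, is closer in spirit to the original argument of \cite{Yam10a} than to the present paper's alternative.
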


In this particular classification, the free use of auxiliary unary constraints provide  enormous power that makes it possible to establish a ``dichotomy'' theorem  beyond a ``trichotomy'' theorem of Dyer \etalc~\cite{DGJ10} for Boolean-valued constraints (or simply, {\em Boolean constraints}).
The proof of Theorem \ref{dichotomy-theorem}  in \cite{Yam10a} employed  two technical notions:  ``factorization'' and ``T-constructibility.''
Limited to unweighted \#CSPs, on the contrary, a key to the proof of the trichotomy theorem of \cite{DGJ10} is an efficient approximation of so-called {\em constant unary constraints}, conventionally denoted\footnote{A bracket notation $[x,y]$ denotes a unary function $g$ satisfying $g(0)=x$ and $g(1)=y$. Similarly, $[x,y,z]$ expresses a binary function $g$ for which  $g(0,0)=x$, $g(0,1)=g(1,0)=y$, and $g(1,1)=z$.}  $\Delta_0=[1,0]$ and $\Delta_1=[0,1]$.
A significant use of the constant unary constraints is a technique known as
{\em pinning}, with which we can make an arbitrary variable pinned down to a particular value, reducing the associated constraints of high arity
to those of lower arity.
To see this arity reduction, let us consider, for example, an arbitrary constraint $f$ of the form $[x,y,z]$ with three Boolean variables $x_1,x_2,x_3$.
When we pin a variable $x_1$ down to $0$ (resp., $1$) in $f(x_1,x_2,x_3)$,
we immediately obtain another constraint of the form $[x,y]$ (resp., $[y,z]$).
Therefore, an efficient approximation of those special constraints helps us first analyze the approximation complexity of $\sharpcsp(\FF,\Delta_{i_0})$ for an appropriate index $i_0\in\{0,1\}$ by the way of pinning and then eliminate $\Delta_{i_0}$ completely to obtain a desired classification theorem for $\sharpcsp(\FF)$. Their proof of approximately eliminating the constant unary constraints  is based on basic properties of Boolean arithmetic and it is not entirely clear that we can expand their proof to a non-Boolean case. Therefore, it is natural for us to raise a question of whether we can obtain a similar elimination theorem for $\sharpcsp(\FF)$ even when $\FF$ is composed of real-valued constraints.
In the following theorem, we wish to claim that at least one of the constant unary constraints  is always eliminated approximately.  This claim can be sharply  contrasted with the case of exact counting of $\sharpcsp(\FF)$,
in  which  $\Delta_0$ and $\Delta_1$ are {\em both} eliminated deterministically by a technique known as {\em polynomial interpolation}.

\begin{theorem}\label{key-Delta-elimination}
For any nonempty set $\FF$ of real-valued constraints, there exists a constant unary constraint $h\in\{\Delta_0,\Delta_1\}$ for which $\sharpcsp(h,\FF)$ is AP-equivalent to $\sharpcsp(\FF)$ (namely, $\sharpcsp(h,\FF)$ is AP-reducible to $\sharpcsp(\FF)$ and vice versa).
\end{theorem}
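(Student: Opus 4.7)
The plan is to establish the two halves of the AP-equivalence separately. The easy direction, $\sharpcsp(\FF) \APreduces \sharpcsp(h,\FF)$, holds by the identity reduction for either choice of $h\in\{\Delta_0,\Delta_1\}$, since every instance of $\sharpcsp(\FF)$ is already a legitimate instance of $\sharpcsp(h,\FF)$ that happens not to invoke $h$. The substantive content is the converse reduction $\sharpcsp(h,\FF) \APreduces \sharpcsp(\FF)$ for an appropriately chosen $h$, which amounts to approximately pinning a Boolean variable to a designated value using only constraints drawn from $\FF$.

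To build a pinning gadget, I would fix any $f\in\FF$ of arity $k$ and define the derived unary constraint
\[
g(z) \;=\; \sum_{y_2,\ldots,y_k\in\{0,1\}} f(z,y_2,\ldots,y_k),
\]
obtained by attaching $k-1$ fresh auxiliary variables to $f$; different projection axes and different choices of $f$ supply a family of candidates. Hooking $M$ independent copies of this gadget to a common target variable $z$ realizes the effective unary $z\mapsto g(z)^M$ inside any instance of $\sharpcsp(\FF)$. Substituting it for an occurrence of $\Delta_c$ on a variable $z$ in an instance $I$ of $\sharpcsp(\Delta_c,\FF)$ produces an instance $I_M$ of $\sharpcsp(\FF)$ whose weighted sum equals
\[
Z(I_M) \;=\; g(0)^{M} Z_{0} \;+\; g(1)^{M} Z_{1},
\]
where $Z_0,Z_1$ denote the weighted counts of $I$ with $z$ pinned to $0$ and $1$ respectively; several simultaneous occurrences of $\Delta_c$ are absorbed by a straightforward iteration.

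The core case is $|g(0)|\neq|g(1)|$; say $|g(0)|>|g(1)|$. Because the ratio $|g(1)/g(0)|$ is a fixed algebraic real strictly below one, taking $M$ polynomial in the instance size and in $\log(1/\varepsilon)$ renders $g(1)^{M}Z_1$ negligible against $g(0)^{M}Z_0$, so an oracle estimate of $Z(I_M)$ divided by the precomputed $g(0)^{M}$ approximates $Z_0$ within the prescribed tolerance; this simulates $\Delta_0$-pinning, yielding $h=\Delta_0$, and the symmetric situation $|g(1)|>|g(0)|$ yields $h=\Delta_1$. The p-convergent representation of algebraic reals advertised in the keywords ensures that the arithmetic on $g(0),g(1)$ and on the powers $g(0)^{M},g(1)^{M}$ can be carried out to the precision the AP-reduction framework demands.

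The genuine obstacle is the balanced case in which $|g(0)|=|g(1)|$ along every projection axis of every $f\in\FF$, so that no single-axis projection gadget distinguishes the two pinning values; this can happen for real-valued $f$ either by exact equality $g(0)=g(1)$ or by sign flip $g(0)=-g(1)$. I would break the symmetry by assembling more elaborate gadgets that identify variables across several copies of constraints from $\FF$, manufacturing derived unaries whose ratios are no longer forced to have unit modulus. The factorization machinery underlying the proof of Theorem~\ref{dichotomy-theorem}, combined with careful bookkeeping of the sign cancellations that negative real weights can introduce, should show that an unbalanced derived unary always exists whenever $\FF$ is nonempty, thereby pinpointing the correct $h\in\{\Delta_0,\Delta_1\}$; once such a gadget is in hand, the amplification argument above closes the AP-reduction.
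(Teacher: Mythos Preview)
Your unbalanced case is essentially the paper's Claim~\ref{algebraic-simul}: once some derived unary $g$ has $|g(0)|\neq|g(1)|$, powering it up yields a p-convergence series for the appropriate $\Delta_i$, and the effective T-constructibility framework (Lemma~\ref{constructibility}(3)) turns this into the AP-reduction. So far so good.

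The gap is in the balanced case. Your claim that ``an unbalanced derived unary always exists whenever $\FF$ is nonempty'' is false. Take for instance $\FF=\{[1,0,1]\}$, or more generally any $\FF$ that is \emph{complement stable} in the paper's sense: every $f\in\FF$ satisfies $f(x_1,\ldots,x_k)=\pm f(x_1\oplus1,\ldots,x_k\oplus1)$. Complement stability is closed under all gadget operations (products, identifications of variables, summing out auxiliary variables), so every constraint you can build from such an $\FF$ is again complement stable. In particular every derived unary $g$ satisfies $g(0)=\pm g(1)$, hence $|g(0)|=|g(1)|$, and no amount of ``more elaborate gadgets'' will break this. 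Your proposed route simply cannot reach either $\Delta_i$ by construction in this case.

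The paper resolves this not by gadgetry but by a direct instance-level argument (Proposition~\ref{Delta-removal-arity-all}(1)): given an instance $\Omega$ of $\sharpcsp(\Delta_0,\FF)$ with a single pinned variable $x_0$, one removes the $\Delta_0$ node to obtain $\Omega'$ and observes, using the global bit-flip symmetry, that $\csp_{\Omega'}=\csp_{\Omega}+(-1)^m\csp_{\Omega}$ where $m$ counts the complement anti-invariant constraint nodes. When $m$ is even this gives $\csp_{\Omega}=\tfrac12\csp_{\Omega'}$ outright; when $m$ is odd one adjoins a single extra anti-invariant node to flip the parity. This is the missing idea: in the complement-stable regime you must exploit the symmetry of the whole instance rather than try to defeat it locally. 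Only after isolating a complement-\emph{unstable} $f$ does the paper run the amplification-style argument you describe, and even there the induction on arity (Lemma~\ref{induction-step}) that reduces to the $k\le2$ base cases is considerably more delicate than a single projection.
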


Under a certain set of explicit conditions (given in Proposition \ref{Delta-removal-arity-all}), we further prove that $\Delta_0$ and $\Delta_1$ are simultaneously eliminated even in an approximation sense.

When the values of constraints in $\FF$ are all limited to Boolean values, Theorem \ref{key-Delta-elimination} is exactly \cite[Lemma 16]{DGJ10}. For real-valued constraints, however, we need to develop a quite different argument from \cite{DGJ10} to prove this theorem. An important ingredient of our proof, described in Section \ref{sec:const-unary-const}, is an efficient estimation of a lower bound of an arbitrary multi-variate polynomial in the values of given constraints. However, since our constraints can output negative real values, the polynomial may possibly produce arbitrary small values, and thus we cannot find a polynomial-time computable lower bound. To avoid encountering such an unwanted situation, we dare to restrict our attention onto {\em algebraic real numbers}. In the rest of this paper, all real numbers will be limited to algebraic numbers.

As a natural application of Theorem \ref{key-Delta-elimination}, we give an alternative proof to our classification theorem (Theorem \ref{dichotomy-theorem}) for {\em symmetric
real-weighted \#CSPs} when arbitrary unary constraints are freely available.
Using the constant unary constraints, we can conduct the aforementioned
arity reductions.
Since Theorem \ref{key-Delta-elimination} guarantees the availability of only one of  $\Delta_0$ and $\Delta_1$, we need to demonstrate such arity reductions of target constraints even when $\Delta_0$ and $\Delta_1$ are separately given for free. Furthermore, we intend to build such reductions with no use of  auxiliary unary constraint.

Our alternative proof proceeds roughly as follows. In the first step, we
recognize constraints $g$ of the following three  special forms: $[0,y,z]$ and $[x,y,0]$ with $x,y,z>0$ and  $[x,y,z]$ with $x,y,z>0$ as well as $xz\neq y^2$. The constraints $g$ of those forms become crucial elements of
our later analyses
because, when auxiliary unary constraints are available for free, $\sharpcsp(g,\UU)$ is  computationally at least as hard as $\#\mathrm{SAT}$ with respect to the AP-reducibility (Lemma \ref{OR-and-B-Yam10a}).

In the second step, we isolate a set $\FF$ of constraints  whose corresponding counting problem $\sharpcsp(\FF,\GG)$ is AP-reduced from a specific
problem $\sharpcsp(g,\GG)$ for an arbitrary set $\GG$ of constraints with no use of extra unary constraints. To be more exact, we wish to establish the following specific AP-reduction from $\sharpcsp(g,\GG)$ to $\sharpcsp(\FF,\GG)$.

\begin{theorem}\label{main-theorem}
Let $\FF$ be any set of symmetric real-valued constraints of arity at least $2$. If either $\FF\subseteq \DG\cup\ED_{1}^{(+)}$ or $\FF\subseteq\DG^{(-)}\cup\ED_{1}\cup\AZ \cup\AZ_1\cup\BB_0$ holds, then $\sharpcsp(\FF)$ is polynomial-time solvable. Otherwise, $\sharpcsp(\FF,\GG)$ are AP-reduced from $\sharpcsp(g,\GG)$ for any constraint set $\GG$, where $g$ is an appropriate constraint of one of the three
special forms described above.
\end{theorem}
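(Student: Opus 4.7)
The plan is to establish both directions of Theorem~\ref{main-theorem} through a careful structural analysis of symmetric real-valued constraints on $\{0,1\}$. For the tractability direction, I would verify, one subfamily at a time, that each class listed (namely $\DG$, $\ED_1^{(+)}$, $\DG^{(-)}$, $\ED_1$, $\AZ$, $\AZ_1$, and $\BB_0$) admits a closed-form evaluation of $\sharpcsp$: degenerate constraints factor the counting sum into independent one-variable sums; equality/disequality-type constraints reduce an instance to a product over connected components of an induced graph; and the remaining classes are handled either directly or by polynomial interpolation applied to structured signed sums. The two alternative unions are specifically carved so that polynomial-time solvability persists under taking arbitrary products and mixtures drawn from a single union.

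For the hardness direction, the target is an AP-reduction of $\sharpcsp(g,\GG)$ to $\sharpcsp(\FF,\GG)$ for any set $\GG$, where $g$ is some symmetric binary constraint of one of the three special forms $[0,y,z]$, $[x,y,0]$, or $[x,y,z]$ with $xz \neq y^2$ (and the indicated positivity conditions). Suppose $\FF$ fails both inclusions. I would first apply Theorem~\ref{key-Delta-elimination} to obtain a constant unary $h \in \{\Delta_0, \Delta_1\}$ for free, in the sense that $\sharpcsp(h,\FF,\GG) \APequiv \sharpcsp(\FF,\GG)$. Then, selecting a witness $f \in \FF$ responsible for the non-containment, I would exhibit a T-constructibility chain built from $f$, $h$, and other elements of $\FF$ that produces the desired binary $g$. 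Arity reduction proceeds by pinning one variable of $f$ at a time using the single available $h$, supplemented by variable identifications (self-loops) internal to $f$ that further collapse the arity while preserving symmetry.

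The analysis splits along two axes: which of $\Delta_0, \Delta_1$ Theorem~\ref{key-Delta-elimination} supplies, and which of the two tractable unions $\FF$ escapes. For each of the four resulting subcases, I would trace the sign pattern of a symmetric witness $f = [f_0, f_1, \ldots, f_n]$ and verify that the arity reduction yields a binary $[a,b,c]$ of one of the three special forms. When a naive pinning would land inside a tractable class, the hypothesis that $\FF$ lies outside both unions should supply either an alternative witness or an additional gadget within $\FF$ that diverts the reduction toward a non-tractable binary form; this is where the precise definitions of $\DG^{(-)}$, $\AZ$, $\AZ_1$, and $\BB_0$ are engineered to close off every escape route. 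The relativization to arbitrary $\GG$ is automatic, since none of the gadgets invoke constraints outside $\FF \cup \{h\}$.

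The main obstacle will be the asymmetric availability of the constant unaries: since Theorem~\ref{key-Delta-elimination} guarantees only one of $\Delta_0, \Delta_1$ and does not specify which, the argument must never depend on simultaneous access to both. Establishing, for each non-tractable sign pattern, a reduction that works using solely $\Delta_0$ and, in parallel, a reduction using solely $\Delta_1$, is the technical heart of the proof. This pairing of dual constructions, together with the requirement that the entire chain function relative to an arbitrary $\GG$ and introduce no auxiliary unary constraints beyond the single $h$, is what distinguishes the present argument from the $\UU$-enriched proof of Theorem~\ref{dichotomy-theorem}.
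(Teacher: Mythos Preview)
Your proposal is correct and follows essentially the same strategy as the paper: invoke Theorem~\ref{key-Delta-elimination} for a single $\Delta_{i_0}$, then use effective T-constructibility (pinning plus variable identification) to manufacture a binary $g\in\OR\cup\NAND\cup\BB$, with the relativization to $\GG$ carried automatically by Lemma~\ref{constructibility}(3). One structural clarification: your second axis, ``which of the two tractable unions $\FF$ escapes,'' is ill-posed, since the hardness hypothesis is precisely that $\FF$ escapes \emph{both}; the paper's actual bifurcation is whether some single $f\in\FF$ already lies outside the full union $\DG\cup\ED_{1}^{(+)}\cup\AZ\cup\AZ_1\cup\BB_0$ (then Lemma~\ref{higher-case}, proved uniformly for either choice of $\Delta_i$, suffices), or whether every $f\in\FF$ lies in that union, in which case one must combine a witness $f_1\in(\DG\cup\ED_1^{(+)})\setminus(\DG^{(-)}\cup\ED_1)$ with a witness $f_2\in\AZ\cup\AZ_1\cup\BB_0$ (Claim~\ref{OR-and-B}). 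Also, the paper does not rederive the tractability direction but imports it wholesale from \cite{CLX09x,GGJ+10} as Proposition~\ref{compute-sig-set}.
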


In Theorem \ref{main-theorem}, the constraint set $\DG$ consists of {\em degenerate} constraints, $\ED_{1}$ indicates a set of equality and disequality, $\AZ$ contains specific symmetric constraints having alternating zeros, $\AZ_1$ is similar to $\AZ$ but requiring  alternating zeros, ``plus'' signs, and ``minus'' signs, and $\BB_0$ is composed of special constraints of non-zero entries. Two additional sets $\DG^{(-)}$ and $\ED_{1}^{(+)}$ are naturally induced from $\DG$ and $\ED_{1}$, respectively.  For their precise definitions,  refer to Section \ref{sec:computability}.

\sloppy
In the third step, we recognize distinctive behaviors of two constraint sets $\DG\cup\ED_1^{(+)}$ and $\DG^{(-)}\cup \ED_1\cup\AZ \cup\AZ_1\cup \BB_0$.
The counting problems $\sharpcsp(\DG,\ED_{1}^{(+)})$ and $\sharpcsp(\DG^{(-)},\ED_{1},\AZ,\AZ_1,\BB_0)$ are both solvable in polynomial time \cite{CLX09x,GGJ+10}. In the presence of the auxiliary set $\UU$ of arbitrary unary constraints, the problem $\sharpcsp(\DG,\ED_{1}^{(+)},\UU)$,  which essentially equals $\sharpcsp(\ED,\UU)$,
remains solvable in polynomial time;
on the contrary, as a consequence of Theorem \ref{main-theorem}, the problem $\sharpcsp(\DG^{(-)},\ED_{1},\AZ,\AZ_1,\BB_0,\UU)$ is AP-reduced from $\sharpcsp(g,\UU)$ for an appropriately chosen $g$ of the prescribed form.

In the final step, since $\#\mathrm{SAT}$ is AP-reducible to $\sharpcsp(g,\UU)$ \cite{Yam10a}, the above results immediately imply Theorem \ref{dichotomy-theorem} for symmetric real-weighted \#CSPs. The above  argument exemplifies that the free use of auxiliary unary constraints can be  made only in the third step. The detailed argument is found in Section \ref{sec:computability}.

A heart of our proof is an efficient, approximate transformation  (called {\em effective T-constructibility}) of a target constraint from a given set of constraints. This effective T-constructibility is a powerful tool in showing AP-reductions between two counting problems. Since our constructibility can
locally modify underlying structures of input instances, this simple tool makes it possible to introduce an auxiliary constraint set $\GG$ in Theorem \ref{main-theorem}. A prototype of  this technical tool first appeared in \cite{Yam10a} and was further extended or modified in \cite{Yam10b,Yam11a}.

\paragraph{Comparison of Proof Techniques:}
Dyer \etalc~\cite{DGJ10} used a notion of ``simulatability'' to demonstrate  the approximate elimination of the constant unary constraints using any given set of Boolean constraints.  Our proof of Theorem \ref{key-Delta-elimination}, however, employs a notion of effectively T-constructibility.
While a key proof technique used in \cite{Yam10a} to  prove Theorem \ref{dichotomy-theorem}  is the factorization of constraints,  our proof of Theorem  \ref{main-theorem} (which leads to Theorem \ref{dichotomy-theorem})
in Section \ref{sec:computability} makes a heavy use of the constant unary constraints.  Furthermore, our proof is quite elementary because it proceeds by examining {\em all} possible forms of a target constraint. This fact makes the proof cleaner and more straightforward to follow.

\section{Fundamental Notions and Notations}\label{sec:preliminaries}

We will explain basic concepts that are necessary to read through the rest of this paper. First, let $\nat$ denote the set of all {\em natural numbers} (\ie nonnegative integers) and let $\real$ be  the set of all {\em real numbers}. For convenience, define $\nat^{+} = \nat-\{0\}$  and,  for each number $n\in\nat^{+}$,  $[n]$ stands for the {\em integer interval} $\{1,2,\ldots,n\}$.

Because our results heavily rely on Lemma \ref{constructibility}(3), we need to limit our attention within {\em algebraic real numbers}. For this purpose, a  special notation $\algebraic$ is used to indicate the set of all algebraic real numbers. To simplify our terminology throughout the paper, whenever we refer to ``real numbers,'' we actually mean ``algebraic real numbers.''

\subsection{Constraints and \#CSPs}

The term ``constraint of arity $k$'' always  refers to a function mapping
the set $\{0,1\}^{k}$ of binary strings of length $k$ to $\algebraic$.
Assuming the standard lexicographic ordering on the set $\{0,1\}^{k}$,
we conveniently express $f$ as a {\em row-vector} consisting of its output values; for instance, when $f$ has arity $2$, it is expressed as $(f(00),f(01),f(10),f(11))$.
Given any $k$-ary constraint $f=(f_1,f_2,\ldots,f_{2^k})$ in a vector form,  the notation $\|f\|_{\infty}$ means    $\max_{i\in[2^k]}\{|f_i|\}$.
A $k$-ary constraint $f$ is called {\em symmetric} if, for every input $x$ in $\{0,1\}^k$, the value $f(x)$ depends only on the Hamming weight (\ie the number of $1$'s) of the input $x$; otherwise, $f$ is called {\em asymmetric}.
For any symmetric constraint $f$ of arity $k$, we also use a succinct notation  $[f_0,f_1,\ldots,f_k]$ to express $f$, where each entry $f_i$ expresses the value of $f$ on inputs of
Hamming weight $i$. For instance, if $f=[f_0,f_1,f_2]$ is of arity two, then it holds that $f_0=f(00)$, $f_1=f(01)=f(10)$, and $f_2=f(11)$.
Of all symmetric constraints, we recognize two special unary constraints,  $\Delta_0=[1,0]$ and $\Delta_1=[0,1]$, which are called {\em constant unary constraints}.

Restricted to a set $\FF$ of constraints, a
{\em real-weighted (Boolean) \#CSP}, conventionally
denoted $\sharpcsp(\FF)$, takes a {\em finite} set $\Omega$ composed of elements of the form $\pair{h,(x_{i_1},x_{i_2},\ldots,x_{i_k})}$, where $h\in\FF$ is a function  on $k$ Boolean variables $x_{i_1},x_{i_2},\ldots,x_{i_k}$ in $X=\{x_1,x_2,\ldots,x_n\}$ with  $i_1,\ldots,i_k\in[n]$, and its goal is to compute the real value
\begin{equation}\label{eqn:csp-def}
\csp_{\Omega} =_{def} \sum_{x_1,x_2,\ldots,x_n\in\{0,1\}}
\prod_{\pair{h,x}\in \Omega} h(x_{i_1},x_{i_2},\ldots,x_{i_k}),
\end{equation}
where $x$ denotes $(x_{i_1},x_{i_2},\ldots,x_{i_k})$. To illustrate $\Omega$  graphically, we view it as a {\em labeled undirected bipartite graph} $G=(V_1|V_2,E)$ whose nodes in $V_1$ are labeled
distinctively by $x_1,x_2,\ldots,x_n$ in $X$ and nodes in $V_2$ are labeled by constraints $h$ in $\FF$ such that, for each pair  $\pair{h,(x_{i_1},x_{i_2},\ldots,x_{i_k})}$,  there are $k$ edges between
an associated node labeled $h$ and the nodes labeled $x_{i_1},x_{i_2},\ldots,x_{i_k}$. The labels of nodes are formally specified by a {\em labeling function} $\pi:V_1\cup V_2\rightarrow X\cup\FF$ with $\pi(V_1)\subseteq X$ and $\pi(V_2)\subseteq \FF$ but we often omit it from the description of $G$ for simplicity.  When $\Omega$ is viewed as this
special bipartite graph, it is called a {\em constraint frame}   \cite{Yam10a,Yam10b}.  More formally, a constraint frame $\Omega=(G,X|\FF',\pi)$ is composed of an undirected  bipartite graph $G$ with its associated labeling function $\pi:V_1\cup V_2\rightarrow X\cup \FF'$, a variable set $X=\{x_1,x_2,\ldots,x_n\}$, and a {\em finite} set $\FF'\subseteq \FF$.

To simplify later descriptions, we wish to use the following simple rule of abbreviation.
For instance, when $f$ is a constraint and both $\FF$ and $\GG$ are constraint sets, we write  $\sharpcsp(f,\FF,\GG)$ to mean $\sharpcsp(\{f\}\cup \FF\cup\GG)$.

In the subsequent sections, we will use the following succinct notations.
Let $f$ be any constraint of arity $k\in\nat^{+}$.
Given any index $i\in[k]$ and any bit $c\in\{0,1\}$,
the notation  $f^{x_i=c}$ stands for the function $g$ of arity $k-1$ satisfying that $g(x_1,\ldots,x_{i-1},x_{i+1},\ldots,x_k) = f(x_1,\ldots,x_{i-1},c,x_{i+1},\ldots,x_k)$ for every $(k-1)$-tuple $(x_1,\ldots,x_{i-1},x_{i+1},\ldots,x_k)\in\{0,1\}^{k-1}$.
For any two distinct
indices $i,j\in[k]$, we denote by $f^{x_i=x_j}$ the function $g$ defined as  $g(x_1,\ldots,x_{i-1},x_{i+1},\ldots,x_k) =
 f(x_1,\ldots,x_{i-1},x_{j},x_{i+1},\ldots,x_k)$ for every $k$-tuple $(x_1,x_2,\ldots,x_k)\in\{0,1\}^k$.
Finally, let $f^{x_i=*}$ express the function $g$ defined by   $g(x_1,\ldots,x_{i-1},x_{i+1},\ldots,x_k) = \sum_{c\in\{0,1\}}f(x_1,\ldots,x_{i-1},c,x_{i+1},\ldots,x_k)$ for every $(k-1)$-tuple $(x_1,\ldots,x_{i-1},x_{i+1},\ldots,x_k)\in\{0,1\}^{k-1}$.

\subsection{FP$_{\algebraic}$ and AP-Reducibility}\label{sec:randomized-scheme}

To connect our results (particularly, Theorems \ref{key-Delta-elimination}
and \ref{main-theorem}) to Theorem \ref{dichotomy-theorem}, we follow notational conventions used in \cite{Yam10a,Yam10b}.
First, $\fp_{\algebraic}$ denotes the collection of all {\em $\algebraic$-valued}
functions that can be computed deterministically in polynomial time.

Let $F$ be any function mapping $\{0,1\}^*$ to $\algebraic$ and let $\Sigma$ be any nonempty finite alphabet. A {\em randomized approximation scheme} (or RAS, in short) for $F$ is a randomized algorithm that takes a standard input $x\in\Sigma^*$ together with an error tolerance parameter $\varepsilon\in(0,1)$, and outputs values $w$ with probability at least $3/4$ for which
\begin{equation}\label{eqn:approx}
\min\{2^{-\varepsilon}F(x), 2^{\varepsilon}F(x)\} \leq w \leq \max\{2^{-\varepsilon}F(x), 2^{\varepsilon}F(x)\}.
\end{equation}

Given two arbitrary real-valued functions $F$ and $G$, a {\em polynomial-time randomized  approximation-preserving Turing reduction}
(or {\em AP-reduction}) from $F$ to $G$ \cite{DGGJ04}
is a randomized algorithm $M$ that takes a pair $(x,\varepsilon)\in\Sigma^*\times(0,1)$ as input,
accesses an oracle,
and satisfies the following three conditions:
(i) when the oracle is an arbitrary RAS $N$ for $G$,
$M$ is always an RAS for $F$;
(ii) every oracle call made by $M$ is of the form $(w,\delta)\in\Sigma^*\times(0,1)$ with $1/\delta \leq poly(|x|,1/\varepsilon)$ and its answer is the outcome of $N$ on $(w,\delta)$; and (iii) the running time of $M$ is upper-bounded
by a certain polynomial in $(|x|,1/\varepsilon)$, which is not dependent of  the choice of $N$. If such an AP-reduction exists, then we also say that $F$ is {\em AP-reducible} to $G$ and we write $F\APreduces G$.
If both $F\APreduces G$ and $G\APreduces F$ hold, then $F$ and $G$ are
said to be {\em AP-equivalent} and we use the special notation $F\APequiv G$.

\begin{lemma}\label{AP-property}
For any functions $F_1,F_2,F_3:\{0,1\}^*\rightarrow\algebraic$, the following properties hold.
\begin{enumerate}\vs{-1}
\item $F_1\APreduces F_1$.
\vs{-2}
\item If $F_1\APreduces F_2$ and $F_2\APreduces F_3$, then $F_1\APreduces F_3$.
\end{enumerate}
\end{lemma}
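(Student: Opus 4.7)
For part (1), reflexivity is essentially immediate. Given any RAS $N$ for $F_1$ used as an oracle, I would define the algorithm $M$ that, on input $(x,\varepsilon)$, forwards the pair $(x,\varepsilon)$ to $N$ as a single oracle query and returns the oracle's answer. Conditions (i)--(iii) of the AP-reduction definition are satisfied trivially: $M^{N}$ is precisely $N$ and so is an RAS for $F_1$ whenever $N$ is; the single oracle call satisfies $1/\varepsilon \leq poly(|x|,1/\varepsilon)$; and the running time of $M$ equals that of one oracle invocation plus constant overhead.

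For part (2), the heart of the argument is a standard composition of reductions. Let $M_{12}$ be an AP-reduction from $F_1$ to $F_2$ and $M_{23}$ an AP-reduction from $F_2$ to $F_3$. Given any RAS $N_3$ for $F_3$ as an oracle, I plan to build an AP-reduction $M_{13}$ from $F_1$ to $F_3$ by nesting simulations: on input $(x,\varepsilon)$, $M_{13}$ runs $M_{12}$ on $(x,\varepsilon)$, and each time $M_{12}$ issues an oracle query $(w,\delta)$, $M_{13}$ simulates one run of $M_{23}$ on $(w,\delta)$, answering each internal query $(u,\eta)$ produced by this simulation by an actual call to $N_3$ on $(u,\eta)$.

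The verification splits into three checks that mirror conditions (i)--(iii). For correctness, condition (i) applied to $M_{23}$ shows that the simulated subroutine $M_{23}^{N_3}$ acts as an RAS, say $N_2$, for $F_2$; then applying condition (i) of $M_{12}$ with $N_2$ as oracle shows that $M_{13}^{N_3}$ is itself an RAS for $F_1$. For the precision of oracle queries, $M_{12}$ guarantees $1/\delta \leq p_1(|x|,1/\varepsilon)$ and $M_{23}$ guarantees $1/\eta \leq p_2(|w|,1/\delta)$ for fixed polynomials $p_1,p_2$; since $|w|$ is bounded by the running time of $M_{12}$, which is polynomial in $(|x|,1/\varepsilon)$, the composed bound $1/\eta \leq poly(|x|,1/\varepsilon)$ follows by substitution. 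For the running time, an analogous composition yields a polynomial in $(|x|,1/\varepsilon)$. The main (though very modest) obstacle is threading the oracle-independence of the time bound through this two-level composition, but it is handled cleanly because clause (iii) of the AP-reduction definition explicitly demands a polynomial time bound independent of the inner RAS supplied; hence the same polynomial serves regardless of which $N_3$ is plugged in at the bottom level.
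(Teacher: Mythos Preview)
Your argument is correct and is exactly the standard composition proof one expects for reflexivity and transitivity of AP-reductions. The paper itself states Lemma~\ref{AP-property} without proof, treating these properties as routine facts about the reduction notion, so there is no paper proof to compare against; your write-up would serve perfectly well as a filled-in justification.
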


\subsection{Effective T-Constructibility}\label{sec:constructibility}

Our goal in the subsequent sections  is to prove our main theorems, Theorems \ref{key-Delta-elimination} and \ref{main-theorem}. For their desired proofs,  we will introduce a fundamental notion of {\em effective T-constructibility}, whose underlying idea comes from a graph-theoretical formulation of {\em limited T-constructibility} \cite{Yam10b}.

Let us start with the definitions of ``representation'' and ``realization''
in   \cite{Yam10b}.
Let $f$ be any constraint of arity $k$. We say that an undirected bipartite graph $G=(V_1|V_2,E)$ (together with a labeling function $\pi$) {\em represents} $f$ if $V_1$ consists only of $k$ nodes labeled with $x_1,\ldots,x_k$, which may possibly have a certain number of
dangling edges,\footnote{A dangling edge is obtained from an edge by deleting exactly one end of this edge. These dangling edges are treated as ``normal'' edges, and therefore  the degree of each node must count dangling
edges as well.}
and $V_2$ contains only a node labeled
$f$ to whom each node $x_i$ is adjacent.
Given  a set $\GG$ of constraints, a graph $G=(V_1|V_2,E)$ is said to
{\em  realize $f$ by $\GG$} if the following four
conditions are met simultaneously:
\begin{enumerate}
\item[(i)] $\pi(V_2)\subseteq \GG$,
\vs{-2}
\item[(ii)] $G$ contains at least $k$ nodes having the labels  $x_1,\ldots,x_k$, possibly together with nodes associated with other variables, say, $y_1,\ldots,y_m$; namely,
    $V_1=\{x_1,\ldots,x_k,y_1,\ldots,y_m\}$,
\vs{-2}
\item[(iii)]  only the nodes $x_1,\ldots,x_k$ are allowed to have dangling edges, and
\vs{-2}
\item[(iv)] $f(x_1,\ldots,x_k) = \lambda \sum_{y_1,\ldots,y_m\in\{0,1\}} \prod_{w \in V_2} f_{w}(z_1,\ldots,z_d)$ for an appropriate constant  $\lambda\in\algebraic-\{0\}$, where $f_w$ denotes a constraint $\pi(w)$ and $z_1,\ldots,z_d\in  V_1$.
\end{enumerate}

The {\em sign function}, denoted $sgn$, is defined as follows. For any real number $\lambda$, we set $sgn(\lambda)=+1$ if $\lambda>0$, $sgn(\lambda)=0$ if $\lambda=0$, and $sgn(\lambda)=-1$ if $\lambda<0$.
An infinite series $\Lambda = (g_1,g_2,g_3,\ldots)$ of arity-$k$ constraints
is called a {\em p-convergence series}\footnote{At a quick glance, the approximation scheme of Eq.(\ref{eqn:convergence}) appears quite differently from that of Eq.(\ref{eqn:approx}). However, by setting $\varepsilon = \lambda^m$, the value $1+\varepsilon$ approximately equals $2^{\varepsilon}$ and $1-\varepsilon$ is also close to $2^{-\varepsilon}$ for any sufficiently large number $m$.}
for a target constraint $f=(r_1,r_2,\ldots,r_{2^k})$ of arity $k$ if there exist a constant $\lambda\in(0,1)$ and a deterministic Turing machine (abbreviated as DTM) $M$ running in polynomial time such that, for every number $m\in\nat^{+}$, (i) $M$ takes an input of the form $1^m$ and outputs a  complete description of the constraint $g_m$ in a row-vector form  $(z_1,z_2,\ldots,z_{2^k})$, (ii) for every $k$-tuple $x\in\{0,1\}^k$, if $f(x)\neq0$, then $sgn(f(x))=sgn(g_m(x))$, and (iii) for every index $i\in[2^k]$, if $r_i\neq0$, then
\begin{equation}\label{eqn:convergence}
\min\{(1+\lambda^m)z_i,(1-\lambda^m)z_i\}\leq r_i\leq \max\{(1+\lambda^m)z_i,(1-\lambda^m)z_i\},
\end{equation}
and otherwise, $|z_i|\leq \lambda^m$.

We then define the effective T-constructibility of a given finite set of constraints.

\begin{definition}\label{def:constructibility}
(effective T-constructibility)
Let $\FF$ and $\GG$ be any two finite sets of constraints. We say that $\FF$ is {\em effectively T-constructible} from $\GG$ if there exists a finite series  $(\FF_1,\FF_2,\ldots,\FF_n)$ of finite constraint sets (which is
succinctly called a {\em generating series} of $\FF$ from $\GG$) such that
\begin{enumerate}\vs{-1}
\item[(i)] $\FF = \FF_1$ and $\GG=\FF_n$, and
\vs{-2}
\item[(ii)] for each adjacent pair $(\FF_i,\FF_{i+1})$, where $i\in[n-1]$, one of Clauses (I)--(II) should hold.
\end{enumerate}\vs{-1}

(I) For every constraint $f$ of arity $k$ in $\FF_i$ and for any finite graph $G$ representing $f$ with distinct variables $x_1,\ldots,x_k$, there exists another finite graph $G'$ satisfying the following two conditions:
\begin{enumerate}\vs{-1}
\item[(i')] $G'$ realizes $f$ by $\FF_{i+1}$, and
\vs{-2}
\item[(ii')] $G'$ maintains the same dangling edges as $G$ does.
\end{enumerate}\vs{-1}

(II) Let $\FF_{i+1}=\{g_1,g_2,\ldots,g_d\}$. For every constraint $f$ of arity $k$ in $\FF_i$, there exist a
p-convergence series $\Lambda=(f_1,f_2,\ldots)$ of arity-$k$ constraints and a polynomial-time DTM $M$ such that, for every number $m\in\nat^{+}$, (a) $M$ takes an input of the form $(1^m,G,(g_1,g_2,\ldots,g_d))$, where $G$ represents $f_m$ with distinct variables $x_1,\ldots,x_k$ and each $g_j$ is described in a row-vector form, and (b) $M$ outputs a bipartite graph $G_m$ such that
\begin{enumerate}\vs{-1}
\item[(i'')] $G_m$ realizes $f_m$ by $\FF_{i+1}$ and
\vs{-2}
\item[(ii'')] $G_m$ maintains the same dangling edges as $G$ does.
\end{enumerate}\vs{-1}
\end{definition}

When $\FF$ is effectively T-constructible from $\GG$, we write $\FF\econst \GG$. We are particularly interested in the case where $\FF$ is a singleton $\{f\}$, and we succinctly write $f\econst\GG$. Moreover, when $\GG$ is also a singleton $\{g\}$, we further write $f\econst g$.

\begin{lemma}\label{constructibility}
Let $\FF_1$, $\FF_2$, and $\FF_3$ be three finite constraint sets. Let $\GG$ be an arbitrary set of constraints.
\begin{enumerate}\vs{-1}
\item $\FF_1\econst \FF_1$.
\vs{-2}
\item $\FF_1\econst \FF_2$ and $\FF_2\econst \FF_3$ imply $\FF_1\econst \FF_3$.
\vs{-2}
\item If $\FF_1\econst \FF_2$, then $\sharpcsp(\FF_1,\GG)\APreduces \sharpcsp(\FF_2,\GG)$.
\end{enumerate}
\end{lemma}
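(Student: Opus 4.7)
The plan is to dispatch parts (1) and (2) by direct manipulation of generating series and reduce part (3) to a single-step analysis via (2). Part (1) is witnessed by the trivial length-one series $(\FF_1)$, for which the set of adjacent pairs is empty so condition (ii) of Definition~\ref{def:constructibility} is vacuous. Part (2) follows by gluing: given witnessing series $(\FF^{(1)}_i)_{i=1}^{n_1}$ for $\FF_1\econst\FF_2$ and $(\FF^{(2)}_i)_{i=1}^{n_2}$ for $\FF_2\econst\FF_3$, the concatenation $(\FF^{(1)}_1,\ldots,\FF^{(1)}_{n_1},\FF^{(2)}_2,\ldots,\FF^{(2)}_{n_2})$ is a valid witness for $\FF_1\econst\FF_3$, since $\FF^{(1)}_{n_1}=\FF_2=\FF^{(2)}_1$ and every adjacent pair still satisfies Clause (I) or (II).

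For part (3), by (2) it suffices to treat a single-step transition $\FF_1\to\FF_2$ governed by either (I) or (II). Given an instance $\Omega$ of $\sharpcsp(\FF_1,\GG)$ and accuracy $\varepsilon\in(0,1)$, we build an instance $\Omega'$ of $\sharpcsp(\FF_2,\GG)$ by replacing each node labelled $f\in\FF_1$ with the prescribed realization graph; the dangling-edge conditions (ii')/(ii'') guarantee that the substitution glues consistently with the rest of $\Omega$, so the $\GG$-labelled nodes and their connections are unaffected. In Case (I), clause (iv) of the realization definition, applied at each $f$-node and combined with the sum in (\ref{eqn:csp-def}), yields $\csp_\Omega=\Lambda\cdot\csp_{\Omega'}$ for a nonzero algebraic constant $\Lambda=\prod_f\lambda_f^{n_f}$ that is polynomial-time computable from the generating series data; a single oracle call on $(\Omega',\varepsilon)$ therefore delivers the desired $2^{\pm\varepsilon}$-approximation of $\csp_\Omega$.

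In Case (II) we instead replace each $f$ by its $m$-th approximant $f_m$, whose $\FF_2$-realization is supplied by the DTM $M$, producing an instance $\Omega_m$. Let $T$ be the total number of constraint nodes in $\Omega$. On assignments where every original factor is nonzero, the p-convergence bound (\ref{eqn:convergence}) compounds multiplicatively across the $T$ factors, giving total multiplicative error $(1\pm\lambda^m)^T$, which is below $2^{\varepsilon/4}$ whenever $m=\Theta(\log(T/\varepsilon))$; on assignments with at least one vanishing factor, the approximate product is bounded pointwise by $\lambda^m\|f\|_\infty^{T-1}$, yielding total additive noise at most $2^n\lambda^m\|f\|_\infty^{T}$. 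The hard part---and the reason the paper restricts attention to algebraic reals in Section~\ref{sec:preliminaries}---is to control this additive term relative to $|\csp_\Omega|$, which may be arbitrarily small (or zero) through sign cancellation among negative constraint values. Here we exploit the fact that any nonzero $\csp_\Omega$ is an algebraic number of polynomially bounded description, hence admits a polynomial-time computable lower bound $\mu_{\min}$ on its absolute value; choosing $m=\poly(|\Omega|,1/\varepsilon)$ so that $2^n\lambda^m\|f\|_\infty^T$ sits well below $\varepsilon\mu_{\min}$ forces the combined error into the $2^{\pm\varepsilon/2}$ window, and a single oracle call on $(\Omega_m,\varepsilon/2)$ then completes an AP-reduction meeting the polynomial bounds on $1/\delta$ and running time required by Section~\ref{sec:randomized-scheme}.
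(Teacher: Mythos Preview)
Your approach mirrors the paper's: parts (1)--(2) by direct manipulation of generating series, and part (3) by chaining single-step AP-reductions along the series (note that the chaining uses transitivity of $\APreduces$, Lemma~\ref{AP-property}(2), not part (2) of the present lemma), with Clause~(I) handled by graph substitution and Clause~(II) by the multiplicative/additive error split plus the algebraic lower bound (the paper's Lemma~\ref{complex-lower-bound}). One point you gloss over but the paper makes explicit: when $\csp_\Omega=0$, your single oracle call on $(\Omega_m,\varepsilon/2)$ may return a small nonzero value, since the additive noise term need not vanish, and outputting that value violates the RAS condition~(\ref{eqn:approx}); the paper's algorithm $\MM$ resolves this by thresholding the oracle answer against the computed bound $d_0$ and outputting $0$ whenever $|w|<d_0$.
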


It is important to note that, since our constraints are permitted to output negative values, the use of {\em algebraic real numbers} for the constraints may be necessary in the proof of Lemma \ref{constructibility} because the proof heavily relies on an explicit lower bound estimation of arbitrary polynomials over algebraic numbers.

In later arguments, a use of effective T-constructibility will play an essential role because a relation $f\econst g$ leads to $\sharpcsp(f,\GG)\APreduces \sharpcsp(g,\GG)$ for any constraint set $\GG$  by Lemma \ref{constructibility}(3), whereas a relation $\sharpcsp(f)\APreduces \sharpcsp(g)$ in general does not imply $\sharpcsp(f,\GG)\APreduces \sharpcsp(g,\GG)$.

For the readability, we postpone the proof of Lemma \ref{constructibility} until Appendix.

\section{Approximation of the Constant Unary Constraints}\label{sec:const-unary-const}

Let us prove our first main theorem---Theorem \ref{key-Delta-elimination}---which states that, given an arbitrary set of constraints,  we can efficiently approximate at least one of the two constant unary constraints. The theorem allows us to utilize such a constraint freely for a further analysis of constraints in Section \ref{sec:computability}.

\subsection{Notion of Complement Stability}\label{sec:complement-stable}

To prove Theorem \ref{key-Delta-elimination}, we will first introduce two useful notions regarding a certain ``symmetric'' nature of a given constraint.
A $k$-ary constraint $f$  is said to be {\em complement invariant} if $f(x_1,\ldots,x_k) = f(x_1\oplus1,\ldots,x_k\oplus1)$ holds for every input tuple  $(x_1,\ldots,x_k)$ in $\{0,1\}^k$, where the notation $\oplus$ means the {\em (bitwise) XOR}.
In contrast, we say that $f$ is {\em complement anti-invariant} if, for every input $(x_1,\ldots,x_k)\in\{0,1\}^k$, $f(x_1,\ldots,x_k) = - f(x_1\oplus1,\ldots,x_k\oplus1)$ holds.  For instance, $f=[1,1]$ is complement  invariant and $f'=[1,0,-1]$ is complement anti-invariant. In addition, we say that $f$ is {\em complement stable} if $f$ is either complement invariant or complement anti-invariant. A constraint set $\FF$  is {\em complement stable} if every constraint in $\FF$ is complement stable.
In the case where $f$ (resp., $\FF$) is not complement stable, by contrast,
we conveniently call it {\em complement unstable}.

We will split Theorem \ref{key-Delta-elimination} into two separate statements, as shown in Proposition \ref{Delta-removal-arity-all}, depending on whether or not a given nonempty set $\FF$ of constraints is complement stable.

\begin{proposition}\label{Delta-removal-arity-all}
Let $\FF$ be any nonempty set of constraints and let $f$ be any constraint of arity $k$ with $k\geq1$.
\renewcommand{\labelitemi}{$\circ$}
\begin{enumerate}
  \setlength{\topsep}{-2mm}%
  \setlength{\itemsep}{1mm}%
  \setlength{\parskip}{0cm}%

\item If $\FF$ is complement stable, then $\sharpcsp(\Delta_i,\FF)\APequiv \sharpcsp(\FF)$ holds for every index $i\in\{0,1\}$.

\item Assume that $f$ is complement unstable. If  $f$ satisfies one of two conditions ($a$)--($b$) given below, then $\Delta_i\econst f$ holds for all indices $i\in\{0,1\}$. Otherwise, there exists at least one index $i\in\{0,1\}$ for which $\Delta_i\econst  f$ holds.

\begin{enumerate}\vs{-1}
  \setlength{\topsep}{-1mm}%
  \setlength{\itemsep}{1mm}%
  \setlength{\parskip}{0cm}%

\item $k\geq2$ and $|z_1| = |z_{2^k}|$.

\item $k\geq2$  and
either $(|z_1|-|z_{2^k}|)(|z_1+z_j| - |z_{2^k-j+1}+z_{2^k}|)<0$ or
$(|z_1|-|z_{2^k}|)(|z_1+z_{2^k-j+1}| - |z_j+z_{2^k}|)<0$ holds  for a certain index $j\in[2^{k-1}]-\{1\}$.
\end{enumerate}
\end{enumerate}
\end{proposition}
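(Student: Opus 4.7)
The plan is to split the proof by the complement stability of $\FF$. For Part 1 (complement stable case), the easy direction $\sharpcsp(\FF) \APreduces \sharpcsp(\Delta_i, \FF)$ holds trivially since any instance of the former is also an instance of the latter. For the reverse direction, given an instance $\Omega'$ of $\sharpcsp(\Delta_i, \FF)$ with pinnings $\Delta_i(v_1), \ldots, \Delta_i(v_m)$, I would merge $v_1, \ldots, v_m$ into a single variable $v$ to obtain an instance $\Omega''$ of $\sharpcsp(\FF)$. Complement stability endows $\Omega''$ with a parity $S \in \{+1, -1\}$ counting anti-invariant constraints, and yields a decomposition $\csp_{\Omega''} = Z_0 + Z_1$ with $Z_0 = S \cdot Z_1$, where $Z_c$ equals the target value $\csp_{\Omega'}$ when $v$ is assigned bit $c$. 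When $S = +1$, the target is simply half the oracle answer. When $S = -1$ the merged-oracle answer vanishes; here I would attach an anti-invariant constraint $h \in \FF$ (which must exist whenever $S = -1$) to $v$ together with fresh summed-out auxiliary variables, obtaining a unary factor of the form $[c, -c]$ on $v$ that renders the modified oracle answer $2c \cdot Z_i$ for a computable $c \neq 0$; if the first candidate $h$ fails to deliver $c \neq 0$, combining two anti-invariant constraints (whose product is invariant on dangling variables) provides an alternative construction.

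For Part 2 (complement unstable $f$), the approach is to construct a non-degenerate unary constraint $g = [a, b]$ from $f$ via a slice/sum gadget and then amplify: applying the gadget $n$ times to a shared dangling variable produces $[a^n, b^n]$, which, after normalization, p-converges to $\Delta_0$ when $|a| > |b|$ and to $\Delta_1$ when $|a| < |b|$, fitting the framework of Definition \ref{def:constructibility}(II). The basic corner gadget sums out all non-dangling variables of one copy of $f$, producing $g$ whose entries are controlled by $z_1$ and $z_{2^k}$; when $|z_1| \neq |z_{2^k}|$ this already yields one $\Delta_i$, giving the ``otherwise'' clause. To obtain \emph{both} $\Delta_i$'s under conditions (a) or (b), I would design refined gadgets whose unary outputs depend instead on sums of the form $z_1 + z_j$ versus $z_{2^k-j+1} + z_{2^k}$, obtained by attaching an auxiliary copy of $f$ to some non-dangling variables so as to add the $j$-indexed entries into the corner entries. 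Condition (a), $|z_1| = |z_{2^k}|$, makes the basic gadget inconclusive, but complement unstability forces some intermediate $z_j$ to witness the asymmetry, and two refined gadgets of opposite orientation then yield both $\Delta_0$ and $\Delta_1$. Condition (b) asserts a sign-flip between the basic bias and the refined bias, so the two gadgets automatically favor opposite $\Delta_i$'s. In the ``otherwise'' case, condition (a) fails (so one of the strict inequalities $|z_1| > |z_{2^k}|$ or $|z_1| < |z_{2^k}|$ holds) and condition (b) also fails (so all relevant shifted differences have the same sign as $|z_1| - |z_{2^k}|$), leaving exactly one direction of amplification viable.

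The main obstacle is the Part 2 case analysis: constructing explicit gadgets whose unary entries are provably equal to (or proportional to) the prescribed combinations $z_1, z_1 + z_j, z_{2^k-j+1} + z_{2^k}, z_{2^k}$, especially under the sign-flip arithmetic of (b), and verifying throughout that the amplification sequences stay inside the p-convergence framework, which is delicate because $f$ may take negative values and the lower-bound estimates guaranteed only for algebraic numbers must be invoked via Lemma \ref{constructibility}(3). A secondary subtlety is the $S = -1$ subcase of Part 1, where the correction gadget must be built entirely from the very complement-stable family $\FF$ that produced the cancellation in the first place.
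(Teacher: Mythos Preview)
Your Part~1 argument is essentially the paper's: merge the pinned variables, exploit the sign $(-1)^{m_-}$ coming from the anti-invariant constraints to relate the two halves, and when that parity is odd, attach one more anti-invariant factor to break the cancellation. You are in fact more careful than the paper about the possibility that the correction constant might vanish.

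Your Part~2 strategy diverges from the paper and has a real gap in the condition-(a) case. The paper argues by \emph{induction on the arity $k$}: for $k\ge 3$ it shows (Lemma~\ref{induction-step}) that some identification $g=f^{x_i=x_j}$ of arity $k-1$ remains complement unstable and inherits conditions (a)--(b), so everything reduces to an explicit case analysis at $k\le 2$. Your direct-gadget plan does cover the ``otherwise'' clause and condition~(b): identifying the $0$-coordinates of the index $j$ to one dangling variable and the $1$-coordinates to another, then summing out the latter, produces exactly $[z_1+z_j,\,z_{2^k-j+1}+z_{2^k}]$, and together with the diagonal $[z_1,z_{2^k}]$ this suffices. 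But under condition~(a), where $|z_1|=|z_{2^k}|$, the diagonal gadget is useless and your ``refined gadgets of opposite orientation'' can \emph{also} degenerate: for instance when $z_1=z_{2^k}$ and the witnessing $j$ happens to satisfy $z_j+z_{2^k-j+1}=-2z_1$, both refined unary outputs again have entries of equal absolute value. The paper's $k=2$ analysis meets exactly such degeneracies and escapes them by passing to $f^2$ and, in further sub-cases, by first effectively T-constructing $XOR=[0,1,0]$ via its own p-convergence and then composing it with the already-obtained $\Delta_i$ to manufacture the missing one. None of that machinery is in your sketch, and without it you cannot conclude that \emph{both} $\Delta_0$ and $\Delta_1$ are obtainable under~(a).
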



Notice that Proposition \ref{Delta-removal-arity-all} together with Lemma \ref{constructibility}(3) implies Theorem \ref{key-Delta-elimination}. Proposition \ref{Delta-removal-arity-all}(1) can be proven rather easily, as presented below, whereas Proposition \ref{Delta-removal-arity-all}(2) requires a slightly more complicated argument.

\begin{proofof}{Proposition \ref{Delta-removal-arity-all}(1)}
In the following proof, we will deal only with $\Delta_0$, because the other case is similarly handled. Let $\FF$ be any nonempty set of constraints and take any input instance $\Omega$, in the form of constraint frame $(G,X|\FF',\pi)$ with $\FF'\subseteq \FF$,  given to the counting problem $\sharpcsp(\Delta_0,\FF)$. If $\Delta_0\in\FF$, then Proposition \ref{Delta-removal-arity-all}(1) is trivially true.
Henceforth, we assume that $\Delta_0\not\in\FF$.
Let us consider the case where $\FF$ is complement anti-invariant. Since the other case where $\FF$ is complement invariant is essentially the same, we omit the case.

To simplify our proof, we modify $\Omega$ as follows. First, we merge all variable nodes (\ie nodes with ``variable'' labels) adjacent to nodes labeled $\Delta_0$ into a single node having a fresh variable label. If there are more than one adjacent nodes with the label $\Delta_0$, then we delete all those nodes except for one node. After this modification, we always assume that there is exactly one node, say, $v_0$ whose label is $\Delta_0$. Now, let $v_1$ be a unique node
adjacent to $v_0$ and let $x_0$ be its variable label. For simplicity,
we keep the same notation $\Omega$ for the constraint frame obtained by this modification.

Let $m$ denote the total number of {\em nodes} in $\Omega$ whose labels are constraints  in $\FF$. By simply removing the node $v_0$ having the label $\Delta_0$ from $\Omega$, we obtain  another instance, say,  $\Omega'$, which  is obviously an input instance to $\sharpcsp(\FF)$.
Using basic properties of complement anti-invariance, we wish to prove
the following equality:
\begin{equation}\label{csp-formula}
\csp_{\Omega'} = \csp_{\Omega} + (-1)^m\csp_{\Omega}.
\end{equation}
Let us consider any ``partial'' assignment $\sigma$  to all variables appearing in $\Omega'$ except for $x_0$, that is, $\sigma:X-\{x_0\}\rightarrow\{0,1\}$.
Associated with $\sigma$,  we introduce
two corresponding Boolean assignments $\sigma_0$ and $\sigma_1$. Firstly, we obtain $\sigma_0$ from $\sigma$ by additionally assigning $0$ to $x_0$. Now, we assume that $\sigma_0$ is an satisfying assignment for $\Omega'$.  Secondly, let $\sigma_1$ be defined by assigning  $1$ to $x_0$ and $1-\sigma(z)$ to all the other variables $z$.
Note that $csp_{\Omega}$ is calculated over all assignments $\sigma_0$ induced from any partial assignments $\sigma$. Similarly, to compute $csp_{\Omega'}$, is is enough to consider all assignments $\sigma_0$ and $\sigma_1$.
Since all constraints in $\Omega'$ are complement anti-invariant,
the product of the values of all constraints by $\sigma_1$ equals
$(-1)^m$ times the product of all constraints' values by $\sigma_0$.
This establishes Eq.(\ref{csp-formula}).

If $m$ is even, then we immediately obtain the equation $\csp_{\Omega} = \frac{1}{2}\csp_{\Omega'}$ from Eq.(\ref{csp-formula}).
Next, assume that $m$ is odd and choose any constraint $g$ that is complement anti-invariant in $\FF$. We further modify $\Omega'$ into $\Omega''$ as follows. Letting $g$ be of arity $k$, we prepare a new variable, say, $x$ and add to $\Omega'$ a new element $\pair{g,(x,x,\ldots,x)}$, which essentially behaves
as $e\cdot [1,-1]$ for a certain constant $e\neq0$.  A similar argument  for  Eq.(\ref{csp-formula}) can prove that
\begin{equation*}
\csp_{\Omega''} = e\cdot \csp_{\Omega} + (-1)^{m+1}e\cdot\csp_{\Omega} = 2e\cdot \csp_{\Omega}.
\end{equation*}
Thus, from the value  $\csp_{\Omega''}$, we can efficiently compute $\csp_{\Omega}$, which equals  $\frac{1}{2e}\csp_{\Omega''}$.
The two equations $\csp_{\Omega}=\frac{1}{2}csp_{\Omega'}$ and $\csp_{\Omega}=\frac{1}{2e}csp_{\Omega''}$ clearly
establish an AP-reduction from  $\sharpcsp(\Delta_i,\FF)$ to $\sharpcsp(\FF)$.

Since the other direction, $\sharpcsp(\FF)\APreduces \sharpcsp(\Delta_i,\FF)$, is obvious, we finally obtain the desired AP-equivalence between $\sharpcsp(\Delta_i,\FF)$ and $\sharpcsp(\FF)$.
\end{proofof}


In Sections \ref{sec:basis-case-proof}--\ref{sec:general-case-proof},
we will concentrate on the proof of Proposition \ref{Delta-removal-arity-all}(2).
First, let $\FF$ denote any nonempty set of constraints. Obviously, $\sharpcsp(\FF)$ is AP-reducible to $\sharpcsp(\Delta_i,\FF)$ for every index $i\in\{0,1\}$. It therefore suffices to show the other direction (namely,  $\sharpcsp(\Delta_i,\FF)\APreduces \sharpcsp(\FF)$) for an appropriately chosen index $i$.  Hereafter, we suppose that $\FF$ is complement unstable,  and we choose a constraint $f$ in $\FF$ that is complement unstable. Furthermore, we assume that $f$ has arity $k$ ($k\geq1$). Our proof of Proposition \ref{Delta-removal-arity-all}(2) proceeds by induction on this index $k$.

\subsection{Basis Case: {\em k} = 1, 2}\label{sec:basis-case-proof}

Under the assumption described at the very end of Section \ref{sec:complement-stable}, we now target the basis case of
$k\in\{1,2\}$.  The induction case of $k\geq3$ will be discussed in Section \ref{sec:general-case-proof}. Notice that Condition ($a$) of Proposition \ref{Delta-removal-arity-all}(2) is necessary; to see this claim,
consider a constraint set $\FF=\{[1,0]\}$.

\s

(1) Assuming $k=1$, let $f=[x,y]$ with $x,y\in\algebraic$. Note that  $x\neq \pm y$. This is because, if $x= \pm y$, then $f$ has the form $x\cdot [1,\pm1]$ and $f$ must be complement stable, a contradiction. Hence,
it follows that $|x|\neq|y|$.
Henceforth, we wish to assert that $|x|>|y|$ (resp., $|x|<|y|$) leads to a conclusion that  $\Delta_0$ (resp., $\Delta_1$) is effectively T-constructible from $f$.  This assertion comes from the following simple observation.

\begin{claim}\label{algebraic-simul}
Let $x$ and $y$ be two arbitrary algebraic real numbers with $|x|>|y|$.
The constraint $\Delta_0 =[1,0]$ is effectively T-constructible from  $[1,y/x]$ via a p-convergence series $\Lambda=\{[1,(y/x)^{2n}] \mid n\in\nat^{+}\}$ for $\Delta_0$.
In the case of $\Delta_1$, a similar statement holds if $|x|<|y|$ (in place of $|x|>|y|$).
\end{claim}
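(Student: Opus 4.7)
The plan is to build an explicit realization by parallel repetition of the single constraint $g=[1,y/x]$ and then verify the three components of Clause (II) of Definition~\ref{def:constructibility}. Set $r=y/x$, so that the hypothesis $|x|>|y|$ becomes $|r|<1$; the case $r=0$ is trivial since then $g=\Delta_0$ outright, so I assume $r\neq 0$.

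First I check that $\Lambda=(g_n)_{n\geq 1}$ with $g_n=[1,r^{2n}]$ is a p-convergence series for $\Delta_0=[1,0]$. Choosing $\lambda=(1+|r|^2)/2\in(|r|^2,1)$ and relying on the fact that $r$ is algebraic so that each entry $r^{2n}$ is computable in polynomial time from $1^n$ via standard algebraic-number arithmetic, the sign condition is trivial (both $\Delta_0(0)$ and $g_n(0)$ equal $+1$, and $\Delta_0(1)=0$ imposes no sign constraint), the first-entry tolerance $1-\lambda^n\leq 1\leq 1+\lambda^n$ is automatic, and the zero-entry bound $|r|^{2n}\leq \lambda^n$ reduces to $|r|^2\leq\lambda$, which holds by construction.

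Next I describe the polynomial-time DTM that realizes each $g_n$ by $\{g\}$. On input $(1^n,G,g)$ with $G$ representing $g_n$ at a single variable node $x_1$ possibly carrying dangling edges, the machine outputs the star $G_n$ obtained from $G$ by deleting the $g_n$-labeled constraint node and attaching $2n$ fresh constraint nodes each labeled $g$, every one joined to $x_1$ by a single non-dangling edge. Since no auxiliary variables are introduced, the realization sum in clause (iv) of the definition of realization collapses to $\prod_{j=1}^{2n} g(x_1)=g(x_1)^{2n}$, which equals $1$ when $x_1=0$ and $r^{2n}$ when $x_1=1$; hence $G_n$ realizes $g_n$ with scaling constant $1\in\algebraic-\{0\}$, and the dangling edges of $x_1$ are preserved verbatim. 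As $G_n$ has $O(n)$ vertices and edges, it is produced in time polynomial in $n$.

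The one subtlety worth guarding against is the possibility that $r<0$: a single copy of $g$ would then produce the wrong sign in the second coordinate of the approximating constraint. Taking $2n$ copies rather than $n$ disposes of this, since $r^{2n}\geq 0$ regardless of the sign of $r$, which is precisely why the exponent $2n$ appears in the statement rather than $n$. The $\Delta_1$ assertion under $|x|<|y|$ is handled symmetrically by applying the same star construction to the normalized constraint $[x/y,1]$, yielding the p-convergence series $([(x/y)^{2n},1])_{n\geq 1}$ for $\Delta_1=[0,1]$.
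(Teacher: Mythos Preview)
Your proof is correct and follows essentially the same approach as the paper's: both obtain $g_n=[1,(y/x)^{2n}]$ by attaching $2n$ copies of $g=[1,y/x]$ to the single variable node, and both invoke the resulting series as a p-convergence series for $\Delta_0$. The paper's proof is a two-line sketch that declares these facts ``clear,'' whereas you spell out a concrete choice of the convergence constant $\lambda=(1+|r|^2)/2$, verify each clause of the p-convergence definition, describe the star graph $G_n$ explicitly, and explain why the even exponent $2n$ is needed to handle the case $r<0$; these are exactly the details the paper leaves implicit.
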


\begin{proof}
Let us assume that $|x|>|y|$. We set $\lambda =y/x$ and define $g_m = [1,\lambda^{2m}]$ for every index $m\in\nat^{+}$. It is clear that the series $\Lambda = \{g_m\mid m\in\nat^{+}\}$ is indeed a p-convergence series for $\Delta_0 =[1,0]$. In addition, the definition of $g_m$ yields the effective T-constructibility of $\Delta_0$ from $[1,\lambda]$. The case of $|x|<|y|$ is similarly treated.
\end{proof}

Assuming $|x|>|y|$, let $\lambda=y/x$.
Claim \ref{algebraic-simul} implies that $\Delta_0 \econst [1,\lambda]$. Since $[1,\lambda]\econst f$, we derive by Lemma \ref{constructibility}(2)
the desired conclusion that $\Delta_0\econst f$. In the case of $|x|<|y|$, it suffices to define $\lambda=x/y$.


\s

(2) Assume that $k=2$ and let $f=(x,y,z,w)$ for certain numbers $x,y,z,w\in\algebraic$. For convenience, we will examine separately the following two cases: $|x|=|w|$ and $|x|\neq |w|$.

\s

[Case: $|x|=|w|$] We want to prove the following claim, which corresponds to Condition ($a$) of Proposition \ref{Delta-removal-arity-all}(2). Recall that $f$ is complement unstable.

\begin{claim}\label{k=2_|x|=|w|}
Assuming that $|x|=|w|$, both  $\Delta_0$ and $\Delta_1$ are effectively T-constructible from $f$.
\end{claim}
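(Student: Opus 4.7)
The plan is to leverage the hypothesis $|x|=|w|$, which forces $w=\varepsilon x$ with $\varepsilon\in\{+1,-1\}$, and split the argument along these two subcases. Complement-unstability of $f$ then excludes $y=z$ when $\varepsilon=+1$ and excludes $y=-z$ when $\varepsilon=-1$; the degenerate $x=w=0$ situation (which lies in both) forces $|y|\neq|z|$ directly. The overall strategy is to build, from $f$, a pair of unary constraints that are \emph{magnitude-swaps} of each other, i.e., two unaries $[\alpha,\beta]$ and $[\pm\beta,\pm\alpha]$ with $|\alpha|\neq|\beta|$. Any such pair is automatically complement-unstable, and the $k=1$ part just established via Claim \ref{algebraic-simul} yields $\Delta_0\econst f$ from the one whose first entry is larger in magnitude and $\Delta_1\econst f$ from the other, glued together through Lemma \ref{constructibility}(2).

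For the first-level constructions, in the subcase $w=x$ I would take the row-sum $f^{x_2=*}=[x+y,x+z]$ and column-sum $f^{x_1=*}=[x+z,x+y]$, which are manifestly magnitude-swaps and complement-unstable unless $y+z=-2x$. In the subcase $w=-x$ with $x\neq 0$, the diagonal $f^{x_1=x_2}=x[1,-1]$ serves, after normalization, as an auxiliary $[1,-1]$-gadget; pairing the row-sum $[x+y,z-x]$ with the $[1,-1]$-weighted column-sum $[x-z,x+y]$ produces a magnitude-swap pair, complement-unstable unless $z-y=2x$. When $x=0$ inside the subcase $w=-x$, the plain row and column sums $[y,z]$ and $[z,y]$ already form a magnitude-swap pair by virtue of $|y|\neq|z|$.

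The main obstacle is the two exceptional regimes $y+z=-2x$ and $z-y=2x$, where the first-level gadgets collapse into complement-stable unaries. I would resolve both by passing to the second-level gadgets $MM^T$ and $M^TM$, each realizable by tying two copies of $f$ along a summed internal variable; in either exceptional regime one checks that $|g(00)|\neq|g(11)|$, which reduces to $y^2\neq z^2$ and is forced by the standing hypotheses. In the exceptional regime of $w=x$ the off-diagonals of $MM^T$ and $M^TM$ coincide (both equal $-2x^2$), so their row-sums $[y^2-x^2,z^2-x^2]$ and $[z^2-x^2,y^2-x^2]$ form a magnitude-swap pair directly. In the exceptional regime of $w=-x$ the off-diagonals carry opposite signs ($\pm 2x^2$), so I would instead pair the row-sum of $MM^T$ with the $[1,-1]$-weighted row-sum of $M^TM$, obtaining the magnitude-swap pair $[3x^2+y^2,3x^2+z^2]$ and $[3x^2+z^2,-(3x^2+y^2)]$. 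The bulk of the remaining bookkeeping lies in verifying complement-unstability of these second-level unaries, which reduces to ruling out polynomial identities such as $y^2=z^2$ or $y^2+z^2=2x^2$; each such identity is incompatible with the standing hypotheses ($y\neq z$, $y\neq -z$, and $x\neq 0$ in the exceptional regimes), most cleanly because $yz=x^2$ combined with $y+z=-2x$ would force $y=z=-x$. Once these checks are assembled, effective T-constructibility of both $\Delta_0$ and $\Delta_1$ from $f$ follows by composing all gadget steps through Lemma \ref{constructibility}(2).
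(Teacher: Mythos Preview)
Your argument is correct, and the organizing principle of producing a ``magnitude-swap'' pair of unaries is a clean way to package the two halves of the construction. The route, however, differs from the paper's in the $w=-x$ subcase.

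In the paper, the subcase $w=-x$ is handled by first squaring: if $y^{2}\neq z^{2}$ then $f^{2}$ lands in the $w=x$ subcase already treated, and only the residual symmetric case $f=[x,y,-x]$ remains. That residual case is dispatched via a third-order gadget $h(x_1,x_3)=\sum_{x_2} f(x_1,x_2)\,f^{x_1=x_2}(x_2)\,f(x_2,x_3)$ together with an effective $XOR$ construction (Claim~\ref{XOR-T-const}), with further subcases on $x=\pm y$ and $|x|\lessgtr|y|$. Your approach instead stays at the level of linear and bilinear gadgets: the diagonal $[x,-x]$ supplies a $[1,-1]$ weight, which you apply to a column-sum to obtain the swap partner of the row-sum, and the exceptional regime $z-y=2x$ is resolved by $MM^{T}$ versus $M^{T}M$ rather than by squaring $f$ entrywise. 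The upshot is that your argument is more uniform across the two $\varepsilon=\pm 1$ branches and avoids Claim~\ref{XOR-T-const} and its attendant $|x|\lessgtr|y|$ case split; the paper's argument, by contrast, reuses the already-proved $w=x$ subcase via squaring, which shortens the non-exceptional part of $w=-x$ at the cost of the heavier $XOR$ machinery for the symmetric residue. Both are valid; yours is arguably the more self-contained of the two.
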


(a) Let us assume that $x=w$. Notice that $y\neq z$, because $y=z$ implies that $f$ is complement invariant, a contradiction. Since $y\neq z$, we set $g=f^{x_1=*}$, which equals $[x+z,x+y]$. Similarly, define $h=f^{x_2=*}=[x+y,x+z]$.  Note that the equation   $(x+y)^2=(x+z)^2$ is transformed into $(y-z)(2x+y+z)=0$, which is equivalent to  $2x+y+z=0$ since $y\neq z$.
If $|x+y|<|x+z|$, then we can effectively T-construct $[1,0]$ and $[0,1]$
from $g$ and $h$, respectively, as done in Case (1). Similarly, when
$|x+y|>|x+z|$, we obtain $[0,1]$ and $[1,0]$.
In the other case where $2x+y+z = 0$, we  start with a new constraint $f'=f^2$ (which equals $(x^2,y^2,z^2,w^2)$) in place of $f$.  Obviously,
$f'$ is effectively $T$-constructible from $f$.
Let us consider the simple case where $y^2\neq z^2$. Since $f'$ is not complement stable and $2x^2+y^2+z^2\neq0$, this case is reduced to the previous case. Finally, let us consider the case where $y^2=z^2$. Since $y\neq z$, we conclude that $y=-z$. {}From $2x+y+z=0$, instantly $x=0$ follows.
Thus, $f$ must equal $(0,y,-y,0)$, which is complement anti-invariant. This  contradicts our assumption.

\s

(b) Assume that $x=-w$. First, we claim that $y\neq -z$ because, otherwise, $f$ becomes complement anti-invariant. Let us consider a new constraint $f'=f^2$. If $y^2\neq z^2$, then $f'$ is not complement stable, and thus we can reduce this case to Case (a). Hence, it suffices to assume that $y^2=z^2$. This implies $y=z$ and we thus obtain $f=[x,y,-x]$.  Next, we define $g=f^{x_1=*}=[x+y,y-x]$. Note that $f^{x_1=*}=f^{x_2=*}$.
Consider the case where $|x+y|=|y-x|$. This is equivalent to  $xy=0$. If $x=y=0$, then $f$ is complement invariant. If $x=0$ and $y\neq0$, then $f$ is also complement invariant. If $x\neq0$ and $y=0$, then $f$ is complement anti-invariant. In any case, we obtain an obvious contradiction.

Finally, we deal with the case where $|x+y|\neq |y-x|$ (which is equivalent to $xy\neq 0$). Define $f'=f^{x_1=x_2}=[x,-x]$ and set $h(x_1,x_3)= \sum_{x_2\in\{0,1\}} f(x_1,x_2)f'(x_2) f(x_2,x_3)$, which equals $x\cdot [x^2-y^2,2xy,x^2-y^2]$. Note that $h\econst f$.

(i) If $x=y$, then we obtain $g=[2x,0]$. From this $g$, we can
effectively T-construct $[1,0]$. Now, consider another constraint
$h'(x_1)= \sum_{x_2} h(x_1,x_2) g(x_2)=[0,4x^4]$, from which we effectively T-construct $[0,1]$.

(ii) Assume that $x=-y$. Since $g=[0,-2x]$, $[0,1]$ is effectively T-constructible from $g$. Next, consider $h'$ defined as above. Obviously,  $h'=[4x^4,0]$ holds, and thus we effectively T-construct $[1,0]$ from $h'$.

(iii) Consider the case where $|x|\neq |y|$. If $|x+y|>|y-x|$ (equivalently, $xy>0$), then  we effectively T-construct $[1,0]$ from $g= [x+y,y-x]$
by Claim \ref{algebraic-simul}.
Since $f$ is of the form $[x,y,-x]$, when $|x|<|y|$, we take
a series $\Lambda = \{[(x/y)^{2i},1,(-x/y)^{2i}] \mid i\in\nat^{+}\}$, which is obviously a p-convergence series for $XOR=[0,1,0]$. Similar to Claim \ref{algebraic-simul}, the following assertion holds.

\begin{claim}\label{XOR-T-const}
Let $x,y$ be arbitrary algebraic real numbers with $|x|<|y|$. The constraint $XOR$ is effectively T-constructible from $[x/y,1,\pm x/y]$ via a p-convergence series $\Lambda = \{[(x/y)^{2i},1,(\pm x/y)^{2i}] \mid i\in\nat^{+}\}$ for $XOR$.
\end{claim}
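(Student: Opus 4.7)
The plan is to mirror the proof of Claim \ref{algebraic-simul}, lifting it from unary to binary. Set $\lambda = x/y$, so $\lambda\in\algebraic$ with $|\lambda|<1$. Because $2i$ is even, $(\pm\lambda)^{2i}=\lambda^{2i}$, so the series $\Lambda$ has $g_m = [\lambda^{2m},1,\lambda^{2m}]$ irrespective of which sign appears in $[x/y,1,\pm x/y]$. I will first verify that $\Lambda=(g_1,g_2,\ldots)$ is a p-convergence series for $XOR=[0,1,0]$, then exhibit an explicit graph realizing each $g_m$ from $f:=[x/y,1,\pm x/y]$.

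For the p-convergence check, take the convergence rate constant to be $\mu = \lambda^2 \in (0,1)$. The only nonzero entry of $XOR$ is the middle one, which equals $1$; the middle entry of $g_m$ is also $1$, so the sign condition holds and the sandwich bound in Eq.\ (\ref{eqn:convergence}) is trivially satisfied there. For the two zero entries of $XOR$, the corresponding entries of $g_m$ satisfy $|\lambda^{2m}| = \mu^m$, which is exactly the ``otherwise'' bound in the definition of a p-convergence series. A polynomial-time DTM can output $g_m$ in row-vector form on input $1^m$, since $\lambda^{2m}$ is an algebraic number with a description computable in time polynomial in $m$.

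For the effective T-constructibility step, given a graph $G$ representing $g_m$ with distinct variables $x_1,x_2$ (each carrying a dangling edge), I build the graph $G_m$ by keeping these two variable nodes with their dangling edges intact and inserting $2m$ fresh constraint nodes $v_1,\ldots,v_{2m}$, each labelled $f$ and each adjacent to both $x_1$ and $x_2$ (no auxiliary variables). The product contributed by the $v_j$'s at an assignment $(a,b)\in\{0,1\}^2$ is $f(a,b)^{2m}$, which evaluates to $\lambda^{2m}$ at $(0,0)$, to $1^{2m}=1$ at $(0,1)$ and $(1,0)$, and to $(\pm\lambda)^{2m}=\lambda^{2m}$ at $(1,1)$, so $G_m$ realizes $g_m$ by $\{f\}$ with scaling constant $1$. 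The construction of $G_m$ is clearly carried out by a polynomial-time DTM on input $(1^m,G,(f))$, and the dangling edges of $G$ are preserved, matching Clauses (i'') and (ii'') of Definition \ref{def:constructibility}.

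The only delicate point, and the main obstacle to state cleanly, is to confirm that the constructed $G_m$ genuinely meets the formal realization conditions: parallel constraint nodes attached to the same two dangling variables are permitted, and the ``$\lambda$'' appearing in realization condition (iv) is $1$ (no overall rescaling). Once this is verified, combining the p-convergence of $\Lambda$ with the polynomial-time realization of each $g_m$ instantiates Clause (II) of Definition \ref{def:constructibility} with $\FF_i=\{XOR\}$ and $\FF_{i+1}=\{f\}$, establishing $XOR\econst f$ and completing the claim.
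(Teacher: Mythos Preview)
Your proof is correct and follows exactly the approach the paper intends: the paper itself does not spell out a separate argument for this claim, merely stating that it holds ``similar to Claim \ref{algebraic-simul},'' and your write-up is precisely that lift from the unary case to the binary one (set $\lambda=x/y$, form $g_m=[\lambda^{2m},1,\lambda^{2m}]$ by taking $2m$ copies of $f$, and check the p-convergence conditions for $XOR$).
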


Note that $[0,1]$ is effectively T-constructible from $\{XOR,[1,0]\}$. Since Claim \ref{XOR-T-const} yields $XOR \econst f$, we obtain $[0,1]\econst f$.

In the other case where $|x|>|y|$, $h$ has the form $\frac{1}{2y}\cdot [d,1,d]$, where $d=(x^2-y^2)/2xy$. If $x^2-y^2<2|xy|$, then a series $\Lambda =\{[d^{2i},1,d^{2i}] \mid i\in\nat^{+}\}$ is a p-convergence series for $XOR=[0,1,0]$ because of $|d|<1$.
By Claim \ref{XOR-T-const}, $XOR$ is effectively T-constructible from $[d,1,d]$. Since $[d,1,d]\econst f$, it follows that $XOR\econst f$.
From this result, the aforementioned relation $[0,1]\econst\{XOR,[1,0]\}$ implies the desired consequence that $[0,1]\econst f$.

In contrast, when $|x+y|<|y-x|$ (equivalently, $xy<0$), Claim \ref{algebraic-simul} helps us effectively T-construct $[0,1]$ from $g=[x+y,y-x]$. A similar argument as before shows that $[1,0]\econst f$ using $XOR$.

\s

[Case: $|x|\neq|w|$] In this case, it is enough to prove the following claim whose last part is equivalent to Condition ($b$).

\begin{claim}\label{k=2-and-|x|neq|w|}
Assume that $|x|\neq|w|$. There exists an index $i\in\{0,1\}$ satisfying $\Delta_i\econst f$. Moreover, if one of the following two conditions is satisfied, then both $\Delta_0$ and $\Delta_1$ are effectively T-constructible from $f$. The conditions include (i) $|x|>|w|$, and either $|x+y|<|z+w|$ or $|x+z|<|y+w|$, and (ii) $|x|<|w|$, and either $|x+y|>|z+w|$ or $|x+z|>|y+w|$.
\end{claim}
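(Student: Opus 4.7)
The plan is to extract $\Delta_0$ and $\Delta_1$ from three elementary derived constraints, each obtained from $f=(x,y,z,w)$ by an immediate T-construction: the diagonal $f^{x_1=x_2}=[x,w]$ (identify the two arguments), and the two one-variable marginals $f^{x_2=*}=[x+y,z+w]$ and $f^{x_1=*}=[x+z,y+w]$ (leave one argument as an internal, summed-out variable). Each of the corresponding relations $[x,w]\econst f$, $[x+y,z+w]\econst f$, and $[x+z,y+w]\econst f$ follows directly from Clause (I) of Definition \ref{def:constructibility} by taking $\lambda=1$ and no internal variable (for the diagonal) or one internal variable (for each marginal). All the work then consists of applying Claim \ref{algebraic-simul} to the appropriate binary pair.

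For the first assertion, apply Claim \ref{algebraic-simul} to $[x,w]$. Since $|x|\neq|w|$, it produces $\Delta_0\econst [x,w]$ when $|x|>|w|$ (via p-convergence ratio $w/x$) and $\Delta_1\econst [x,w]$ when $|x|<|w|$ (via ratio $x/w$). Transitivity through Lemma \ref{constructibility}(2) then supplies at least one $i\in\{0,1\}$ with $\Delta_i\econst f$.

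For the moreover part, consider case (i) first: the hypothesis $|x|>|w|$ already yields $\Delta_0\econst f$ by the previous paragraph. To obtain $\Delta_1\econst f$ as well, pick whichever of the two strict inequalities holds. If $|x+y|<|z+w|$, Claim \ref{algebraic-simul} applied to $[x+y,z+w]$ yields $\Delta_1\econst [x+y,z+w]$, and transitivity gives $\Delta_1\econst f$; if instead $|x+z|<|y+w|$, the same reasoning applied to $[x+z,y+w]$ gives $\Delta_1\econst f$. Case (ii), with $|x|<|w|$ and the marginal inequalities reversed, is entirely symmetric: $\Delta_1\econst f$ comes from $[x,w]$, while $\Delta_0\econst f$ comes from whichever of $[x+y,z+w]$ or $[x+z,y+w]$ satisfies the requisite strict inequality.

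The argument is essentially routine; the only point worth verifying is that possible vanishing of individual entries, such as $x+y=0$, does not obstruct the application of Claim \ref{algebraic-simul}. If exactly one entry of the pair $[x+y,z+w]$ equals zero, then the pair is already a nonzero scalar multiple of $\Delta_0$ or $\Delta_1$ and the conclusion is immediate without any p-convergence machinery, while the strict inequalities in conditions (i)--(ii) rule out the degenerate case in which both entries vanish simultaneously. The same remark covers $[x+z,y+w]$, so no genuine obstacle arises.
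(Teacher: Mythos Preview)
Your proof is correct and follows essentially the same approach as the paper: the paper likewise introduces $g_0=f^{x_1=x_2}=[x,w]$, $g_1=f^{x_1=*}=[x+z,y+w]$, and $g_2=f^{x_2=*}=[x+y,z+w]$, observes that each is effectively T-constructible from $f$, and then applies Claim~\ref{algebraic-simul} to $g_0$ for one $\Delta_i$ and to $g_1$ or $g_2$ for the other. Your added remark about the harmless vanishing of a single marginal entry is a nice clarification that the paper leaves implicit.
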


To show Claim \ref{k=2-and-|x|neq|w|}, let $g_0=f^{x_1=x_2}=[x,w]$, $g_1=f^{x_1=*}=[x+z,y+w]$, and $g_2=f^{x_2=*}=[x+y,z+w]$.
Clearly, $g_0,g_1,g_2\econst f$ holds.
In the case where $|x|>|w|$, we can effectively T-construct $\Delta_0$
from $g_0$ using Claim \ref{algebraic-simul}.
In addition, if either $|x+y|< |z+w|$ or $|x+z|< |y+w|$ holds,  we further
effectively T-construct $\Delta_1$ from either $g_1$ or $g_2$. Hence, we obtain both $\Delta_0$ and $\Delta_1$.  The case of $|x|<|w|$ is similarly treated.

\subsection{Induction Case: {\em k} $\geq$ 3}\label{sec:general-case-proof}

As in the previous subsections, let $f=(z_1,z_2,\ldots,z_{2^k})$. We will deal with the remaining case of $k\geq3$.
In the next lemma,  from a given complement unstable constraint $f$ of arity $k$, we can effectively T-construct another arity-$(k-1)$ complement unstable constraint $g$ of a special form  that helps us apply an induction hypothesis.

\begin{lemma}\label{induction-step}
Let $k\geq3$ and let $f$ be any $k$-ary constraint. If $f$ is complement unstable, then there exists another constraint $g$ of arity $k-1$ for which (i) $g$ is complement unstable, (ii) $g\econst f$, and (iii) if $f$ satisfies one of Conditions ($a$)--($b$), then so does $g$.
\end{lemma}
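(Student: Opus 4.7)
The plan is to construct $g$ by applying a single elementary arity-reducing operation to $f$ — either a merging $g = f^{x_i = x_j}$, which identifies two variable-labeled neighbors of the $f$-node in its representing bipartite graph, or a summation $g = f^{x_i = *}$, which leaves one variable-labeled neighbor with only dangling edges. In either case the construction is a one-step realization, so $g \econst f$ follows directly from the definition and condition (ii) of the lemma is immediate.

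To secure condition (iii) when $f$ satisfies (a) or (b), I focus on mergings, exploiting the fact that any merging preserves the extremal entries exactly: $g_1 = f(0,\dots,0) = z_1$ and $g_{2^{k-1}} = f(1,\dots,1) = z_{2^k}$. Thus every merging automatically inherits condition (a). For condition (b), I select the merging pair $(i_0, j_0)$ based on the witness index $j$ for $f$: writing the binary input corresponding to $z_j$ as $(0, b_2, \dots, b_k)$, I pick $(i_0, j_0) \in \{2, \dots, k\}^{2}$ with $b_{i_0} = b_{j_0}$. Such a pair exists because $k-1 \geq 2$ coordinates always have at least two equal bits by pigeonhole; then $z_j$ and $z_{2^k-j+1}$ reappear as the matching middle entries $g_{j'}$ and $g_{2^{k-1}-j'+1}$ of $g$ for some $j' \in [2^{k-2}] - \{1\}$, and the inequality witnessing (b) for $f$ transfers verbatim to $g$. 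The boundary witness input $(0,1,\dots,1)$ is handled symmetrically by merging two coordinates that are both set to $1$.

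For condition (i), complement instability of $g$, I argue by contradiction. Suppose that every merging $f^{x_i = x_j}$ and every summation $f^{x_i = *}$ were complement stable. Stability of $f^{x_i = x_j}$ with sign $\pi_{ij} \in \{+1, -1\}$ forces $f(s) = \pi_{ij} f(\bar s)$ for all $s$ with $s_i = s_j$; since $k \geq 3$, pigeonhole guarantees that every $s \in \{0,1\}^k$ has at least two equal coordinates, so each complement-pair of $f$ is governed by some $\pi_{ij}$. If the signs $\pi_{ij}$ are all identical, then either $a(s) := f(s)-f(\bar s) \equiv 0$ or $b(s) := f(s)+f(\bar s) \equiv 0$, giving complement stability of $f$, a contradiction. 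In the mixed-sign case, the additional relations coming from the stability of each summation $f^{x_i = *}$ couple $a$ or $b$ across inputs that differ only in the $i$-th coordinate, and propagating these relations through the Boolean cube (using the fact that any two vertices of $\{0,1\}^k$ are connected by a sequence of coordinate-flips) forces both $a$ and $b$ to vanish on enough entries to make $f$ stable overall, again a contradiction. Hence at least one merging or summation of $f$ must be complement unstable, and this will serve as our candidate $g$.

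The main obstacle is reconciling (iii) and (i): the merging selected by the witness-index argument for (b) could itself be stable, while a different unstable reduction might not match that particular witness. I would handle this by showing that whenever the preferred merging is stable, one of the alternative mergings either inherits (a) — which is preserved by any merging and holds whenever $f$ satisfies (a) — or satisfies (b) with a relocated witness obtained by permuting coordinates, and at least one such alternative must be complement unstable by the contradiction argument above. In the most delicate subcase, where no merging meets both requirements simultaneously, I would fall back on a summation and verify that the stability-forcing relations on $f$ derived from the failure of all mergings make $z_1$ and $z_{2^k}$ (or the corresponding summed entries) equal in magnitude, so that (a) is inherited through the summation as well.
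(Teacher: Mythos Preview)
Your overall architecture matches the paper's: both rely on the family of mergings $f^{x_i=x_j}$, with summations $f^{x_i=*}$ as a fallback. Two points deserve attention.

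First, a small but real slip in the pigeonhole step. You restrict the merging pair to $\{2,\dots,k\}^2$, arguing that $k-1\ge 2$ coordinates must contain a repeated bit. For $k=3$ and witness input $(0,0,1)$ or $(0,1,0)$ this fails, since $b_2\neq b_3$. The paper's Claim~3.5 applies pigeonhole to all $k\ge 3$ coordinates, which always yields a repeated bit; you should do the same. Merging coordinate~$1$ with another still preserves $z_1$ and $z_{2^k}$, so nothing is lost.

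Second, the mixed-sign case (all mergings stable but with differing signs $\pi_{ij}$) is where essentially all the work lies, and your treatment---``propagating these relations through the Boolean cube\ldots forces both $a$ and $b$ to vanish on enough entries''---does not constitute a proof. The paper does \emph{not} close this case by contradiction from the stability of all summations. Instead, after showing that the mixed-sign hypothesis forces $|z_i|=|z_{2^k-i+1}|$ for all $i$ (so condition~(a) is in play), it performs an explicit case split on $k$: for $k\ge 5$ a combinatorial argument shows any two entries $z_i,z_j$ already lie in a common merging, collapsing the mixed signs to a contradiction; for $k=3,4$ the paper computes exactly which entry-pairs are \emph{not} jointly covered by any merging, uses this to pin down the zero pattern of $f$, and then exhibits a specific summation $f^{x_\ell=*}$ that is complement unstable and has equal-magnitude extremal entries. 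Your sketch does not supply any of this, and the cube-propagation idea as stated does not obviously reproduce it.

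One more remark: your ``main obstacle'' paragraph is partly misplaced. If the merging you select for condition~(b) itself satisfies~(b), then $|g_1|\neq|g_{2^{k-1}}|$ is forced (the product in~(b) is strictly negative), so that merging is automatically complement unstable---there is no conflict between~(i) and~(iii) in the~(b) case. The genuine tension is confined to case~(a), exactly where the paper's detailed analysis lives.
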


We will briefly show that the induction case holds for Proposition \ref{Delta-removal-arity-all}(2), assuming that Lemma \ref{induction-step} is true.
By Lemma \ref{induction-step}, we take another arity-$(k-1)$ constraint $g$ in $\GG$ that satisfies Conditions (i)--(ii) of the lemma.
We then apply the induction hypothesis for Proposition \ref{Delta-removal-arity-all}(2) to conclude that either $\Delta_0$ or $\Delta_1$ is effectively T-constructible from $f$.
Next, we assume that $f$ violates one of Conditions ($a$)--($b$) given in Proposition \ref{Delta-removal-arity-all}(2).
If $f$ violates one of Conditions ($a$)--($b$), then the obtained constraint $g$  also violates one of those conditions. Hence, the induction hypothesis guarantees that both $\Delta_0$ and $\Delta_1$ are effectively T-constructible from $g$.

The above argument completes the induction case for Proposition \ref{Delta-removal-arity-all}(2).
Therefore, the remaining task of ours is to give the proof of Lemma \ref{induction-step}.

\begin{proofof}{Lemma \ref{induction-step}}
Let $f= (z_1,z_2,\ldots,z_{2^k})$. Since $f$ is complement unstable, there exists an appropriate index $\ell\in[2^k]$  satisfying   $z_{\ell}\neq 0$.
For each pair of indices $i,j\in[k]$ with $i<j$, we write $g^{(i,j)}$ to denote $f^{x_i=x_j}$ and then define
$\GG=\{g^{(i,j)}\mid 1\leq i<j \leq k\}$.
Note that  each constraint $g^{(i,j)}$ is effectively T-constructible
from $f$.

Let us begin with a simple observation.

\begin{claim}\label{fix-index}
Let $k\geq3$. For any index $j\in[2^k]$, there exist a constraint $g$ in $\GG$ and $k-1$ bits $a_1,a_2,\ldots,a_{k-1}$ satisfying  $z_j = g(a_1,a_2,\ldots,a_{k-1})$.
\end{claim}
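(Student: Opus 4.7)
The plan is to decode the index $j$ as a bit-vector and then invoke pigeonhole. Under the lexicographic ordering on $\{0,1\}^{k}$ adopted in Section~\ref{sec:preliminaries} for writing $f$ as a row-vector, the $j$th entry $z_j$ equals $f(b_1,\ldots,b_k)$ for a unique $k$-tuple $b=(b_1,b_2,\ldots,b_k)\in\{0,1\}^{k}$. I would begin the proof with this translation, so that from now on we work directly with the bit-vector $b$ rather than with the integer $j$.

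The second step is the crux: since $k\geq 3$ and $b$ uses only the two symbols $0$ and $1$, the pigeonhole principle produces two coordinates $p,q\in[k]$ with $p<q$ and $b_p=b_q$. This is the only place where the hypothesis $k\geq 3$ is used. (Note that the statement genuinely fails for $k=2$, since the bit-strings $01$ and $10$ have no two equal coordinates; so the threshold $k\geq 3$ is sharp here and not an artefact of the argument.)

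Third, with such a pair $(p,q)$ in hand, I would set $g := g^{(p,q)} = f^{x_p=x_q}\in\GG$ and define the $(k-1)$-tuple $(a_1,\ldots,a_{k-1}) := (b_1,\ldots,b_{p-1},b_{p+1},\ldots,b_k)$, namely $b$ with its $p$th entry deleted. Unfolding the definition of $f^{x_p=x_q}$ given in Section~\ref{sec:preliminaries}, one obtains
\[
g(a_1,\ldots,a_{k-1}) \;=\; f(b_1,\ldots,b_{p-1},b_q,b_{p+1},\ldots,b_k) \;=\; f(b_1,\ldots,b_k) \;=\; z_j,
\]
where the middle equality uses $b_q=b_p$. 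This exhibits the required $g\in\GG$ and the required bits $a_1,\ldots,a_{k-1}$, completing the proof.

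There is essentially no analytic obstacle in this claim; the whole argument is a single application of pigeonhole once the indexing has been set up. The only place where a slip is easy is the bookkeeping: one must keep straight that $f^{x_p=x_q}$ collapses the $p$th coordinate (the smaller index) and copies into it the value of the $q$th, so the deleted coordinate in $(a_1,\ldots,a_{k-1})$ must be the $p$th entry of $b$, not the $q$th. With that convention fixed, the chain of equalities displayed above is automatic.
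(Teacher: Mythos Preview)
Your argument is correct and is essentially identical to the paper's own proof: decode $z_j$ as $f$ evaluated at a $k$-bit tuple, use pigeonhole (valid since $k\ge 3$) to find two equal coordinates, and plug into the corresponding $g^{(p,q)}\in\GG$. The only differences are notational (your $b,p,q$ versus the paper's $a',i,j$), and your aside about the sharpness of $k\ge 3$ is a nice addition.
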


\begin{proof}
Since $z_j$ is an output value of $f$, there exists an input tuple $(a'_1,a'_2,\ldots,a'_k)\in\{0,1\}^k$ for which $z_j = f(a'_1,a'_2,\ldots,a'_k)$. Since $k\geq3$, there are two indices $i,j\in[2^k]$ with $i<j$ satisfying $a'_i =a'_j$. For this special pair, it follows that $g^{(i,j)}(a'_1,\ldots,a'_{i-1},a'_{i+1},\ldots,a'_k) = f(a'_1,\ldots,a'_{i-1},a'_j,a'_{i+1},\ldots,a'_k) = z_j$. It therefore suffices to set the desired constraint $g$ to be $g^{(i,j)}$ and set $(a_1,a_2,\ldots,a_{k-1})$ to be $(a'_1,\ldots,a'_{i-1},a'_{i+1},\ldots,a'_k)$.
\end{proof}


Let us return to the proof of Lemma \ref{induction-step}.
First, we assume that $f$ satisfies Condition ($b$). Take
an index $j\in[2^{k-1}]-\{1\}$ for which either $(|z_1|-|z_{2^k}|)(|z_1+z_j| -  |z_{2^k-j+1}+z_{2^k}|)<0$ or $(|z_1|-|z_{2^k}|)(|z_1+z_{2^k-j+1}| -  |z_{j}+z_{2^k}|)<0$. By Claim \ref{fix-index}, we can choose a constraint $g\in\GG$ satisfying that $z_j=g(a_1,a_2,\ldots,a_{k-1})$ for a certain bit series $a_1,a_2,\ldots,a_{k-1}$. Note that this constraint $g$ also satisfies Condition ($b$) and is complement unstable. The lemma thus follows instantly.

Hereafter, we assume that $f$ does not satisfy Condition ($b$). When $\GG$ contains a complement unstable constraint, say, $g$, it has arity $k-1$ and $g\econst f$ holds.  Moreover, if $f$ further satisfies Condition ($a$), then $g$ also satisfies the condition because $g\in\GG$. We then obtain the lemma.
It therefore suffices to assume that
$\GG$ is complement stable.

Since $f$ is complement unstable,  either of the following two cases must occur. (1) There exists an index $i\in[2^{k-1}]$ satisfying $|z_i|\neq |z_{2^k-i+1}|$. (2) It holds that $|z_i|=|z_{2^k-i+1}|$ for every index $i\in[2^{k-1}]$, but  there are two distinct indices $i_0,j_0\in[2^{k-1}]$ for which  $z_{i_0} =z_{2^k-i_0+1}\neq0$ and $z_{j_0}= - z_{2^k-j_0+1}\neq0$.

\s

(1) In the first case, let us choose an index $i\in[2^{k-1}]$ satisfying  $|z_i|\neq |z_{2^k-i+1}|$. Claim \ref{fix-index} ensures the existence of  a constraint $g$ in $\GG$ such that $z_i = g(a_1,a_2,\ldots,a_{k-1})$ for  appropriately chosen $k-1$ bits $a_1,a_2,\ldots,a_{k-1}$. This implies that $z_{2^k-i+1} = g(a_1\oplus1,a_2\oplus1,\ldots,a_{k-1}\oplus1)$. By the choice of $i$, $g$ cannot be complement stable. Obviously, this is a contradiction against our assumption that $\GG$ is complement stable.

\s

(2) In the second case, let us take any two indices $i_0,j_0\in[2^{k-1}]$ satisfying  that $z_{i_0} =z_{2^k-i_{0}+1}\neq0$ and $z_{j_0}= - z_{2^k-j_{0}+1}\neq0$. We will examine two possible cases separately.

(i) Assume that a certain constraint $g\in \GG$ satisfies  both  $z_{i_0} = g(a_1,a_2,\ldots,a_{k-1})$ and $z_{j_0} = g(b_1,b_2,\ldots,b_{k-1})$
for appropriately chosen $2(k-1)$ bits $a_1,a_2,\ldots,a_{k-1},b_1,b_2,\ldots,b_{k-1}$. {}From the properties of $z_{i_0}$ and $z_{j_0}$, it follows that $g$ is complement unstable, and this fact clearly leads to a contradiction.

(ii) Finally, assume that Case (i) does not hold. This case is much more involved than Case (i).
By our assumption,   $|z_{i}| = |z_{2^k-i+1}|$ holds for all indices $i\in[2^{k-1}]$.  This assumption makes $f$ satisfy Condition ($a$).
To make the following argument simple, we will introduce several notations. First, we denote by $H'$ the set of all index pairs $(i,j)$ in $[2^k]\times[2^k]$  such that both $z_i = g(a_1,a_2,\ldots,a_{k-1})$ and $z_j = g(b_1,b_2,\ldots,b_{k-1})$ hold for  a certain constraint $g$ in $\GG$ and certain $2(k-1)$ bits  $a_1,a_2,\ldots,a_{k-1},b_1,b_2,\ldots,b_{k-1}$.
Notice that $(i,i)\in H'$ holds for every index $i\in[2^k]$.
Since Case (i) fails, we obtain $(i_0,j_0)\notin H'$.
Associated with the set $H'$, we define two new sets
$H = [2^k]\times[2^k] - H'$ and
$\hat{H} = \{i\in[2^k]\mid \exists j[(i,j)\in H]\}$.  Since $|z_i|=|z_{2^k-i+1}|$ for all $i\in[2^{k-1}]$,  $\hat{H}$ can be expressed as the disjoint union $\hat{H}_0\cup\hat{H}_{+}\cup\hat{H}_{-}$, where  $\hat{H}_0 = \{i\in\hat{H}\mid z_i=z_{2^k-i+1}=0\}$,  $\hat{H}_{+} = \{i\in \hat{H}\mid z_i = z_{2^k-i+1}\neq0\}$, and $\hat{H}_{-} = \{i\in\hat{H}\mid z_i = -z_{2^k-i+1}\neq0\}$. Concerning   $\hat{H}_{+}$ and $\hat{H}_{-}$, the following useful properties hold.

\begin{claim}\label{H+-and-H-}
Let $i,j\in[2^k]$ be any two indices with $i\leq 2^{k-1}$ and assume that $(i,j)\in H'$.
\begin{enumerate}\vs{-1}
\item If $j\in \hat{H}_{+}$ then $z_i = z_{2^k-i+1}$ holds.
\vs{-2}
\item If $j\in \hat{H}_{-}$ then $z_i = -z_{2^k-i+1}$ holds.
\end{enumerate}
\end{claim}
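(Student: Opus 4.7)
The plan is to leverage the standing assumption of this subcase---that every constraint in $\GG$ is complement stable---together with the explicit form $g^{(p,q)} = f^{x_p = x_q}$ of the elements of $\GG$. Since $(i,j)\in H'$, by definition there exist a single $g = g^{(p,q)}\in\GG$ and two $(k-1)$-bit tuples $\mathbf{a},\mathbf{b}$ with $z_i = g(\mathbf{a})$ and $z_j = g(\mathbf{b})$. The first step is a small indexing observation: because $g^{(p,q)}$ identifies the $p$-th and $q$-th arguments of $f$, the $k$-bit $f$-input that realizes $g(\mathbf{a})$ has its $p$-th and $q$-th coordinates equal; flipping every bit of $\mathbf{a}$ preserves this equality and yields precisely the bitwise complement of the original $f$-input, whose lexicographic index is $2^{k}-i+1$. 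Hence $g(\mathbf{a}\oplus\mathbf{1}) = z_{2^{k}-i+1}$, and likewise $g(\mathbf{b}\oplus\mathbf{1}) = z_{2^{k}-j+1}$.

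With this in hand, I split on the complement-stability type of $g$. For Part (1), assume $j\in \hat{H}_{+}$, so that $z_j = z_{2^{k}-j+1}\neq 0$. If $g$ were complement anti-invariant, then the indexing observation applied to $\mathbf{b}$ would give $z_{2^{k}-j+1} = g(\mathbf{b}\oplus\mathbf{1}) = -g(\mathbf{b}) = -z_j$, so $2z_j = 0$, contradicting $z_j\neq 0$. Therefore $g$ must be complement invariant. Applying the indexing observation to $\mathbf{a}$ now yields $z_{2^{k}-i+1} = g(\mathbf{a}\oplus\mathbf{1}) = g(\mathbf{a}) = z_i$, which is exactly the desired conclusion.

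Part (2) is symmetric. If $j\in\hat{H}_{-}$ then $z_j = -z_{2^{k}-j+1}\neq 0$, and complement invariance of $g$ would force $z_{2^{k}-j+1} = g(\mathbf{b}\oplus\mathbf{1}) = g(\mathbf{b}) = z_j$, hence $z_j = 0$, a contradiction. Thus $g$ is complement anti-invariant, and the indexing observation applied to $\mathbf{a}$ gives $z_{2^{k}-i+1} = g(\mathbf{a}\oplus\mathbf{1}) = -g(\mathbf{a}) = -z_i$, as claimed.

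The only delicate point is the indexing observation at the start---verifying that the $(k-1)$-bit complement $\mathbf{a}\oplus\mathbf{1}$ really does correspond, under $g^{(p,q)}$, to the $k$-bit complement of the $f$-input that produced $z_i$. I don't expect this to be a genuine obstacle: it is a one-line check using that $g^{(p,q)}$ inserts the common value at both positions $p$ and $q$. The normalization $i\le 2^{k-1}$ plays no role in the argument, since the conclusions are invariant under swapping $i$ and $2^{k}-i+1$. Once the indexing observation is in place, the rest is a short case analysis driven purely by complement stability of $\GG$ and the nonvanishing conditions encoded in $\hat{H}_{+}$ and $\hat{H}_{-}$.
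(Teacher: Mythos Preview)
Your proposal is correct and follows essentially the same approach as the paper. You make explicit the indexing observation $g(\bar{\mathbf{a}})=z_{2^k-i+1}$, which the paper uses only implicitly (it jumps from ``$g$ is complement invariant'' directly to ``$z_i=z_{2^k-i+1}$''); and because you apply that observation uniformly, you avoid the paper's separate treatment of the case $z_i=0$, which the paper disposes of first via the standing assumption $|z_i|=|z_{2^k-i+1}|$ of Case~(2).
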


\begin{proof}
Assume that $z_i=0$. From $z_i =0$ follows $z_{2^k-i+1}=0$, because  $|z_i|=|z_{2^k-i+1}|$. We then obtain $z_i=\pm z_{2^k-i+1}$; thus,
the claim is trivially true. Henceforth, we consider the case where  $z_i\neq0$. Since $(i,j)\in H'$, there exist a constraint $g\in \GG$ and bits $a_1,a_2,\ldots,a_{k-1},b_1,b_2,\ldots,b_{k-1}$ for which  $z_i =g(a_1,a_2,\ldots,a_{k-1})$ and $z_j = g(b_1,b_2,\ldots,b_{k-1})$.
Recall that $g$ is complement stable by our assumption. In the case where $j\in \hat{H}_{+}$,  since $z_j = z_{2^k-j+1}\neq0$ holds,  $g$ must be complement invariant. Thus, for the index $i$, we obtain $z_i = z_{2^k-i+1}$. By a similar argument, when $j\in \hat{H}_{-}$, $g$ must be complement anti-invariant and thus $z_i = -z_{2^k-i+1}$ holds.
\end{proof}

Here, we claim that $\hat{H}_{+}$ and $\hat{H}_{-}$ are both nonempty. To see this claim, recall that the indices $i_0$ and $j_0$ satisfy $(i_0,j_0)\notin H'$, and thus they belong to $\hat{H}$; more specifically, it holds that $i_0\in\hat{H}_{+}$ and $j_0\in\hat{H}_{-}$ since $z_{i_0} = z_{2^k-i_0+1}$ and $z_{j_0} = -z_{2^k-j_0+1}$. By symmetry, we also conclude that $2^{k}-i_0+1\in\hat{H}_{+}$ and $2^{k}-j_0+1\in \hat{H}_{-}$.  In
Claim \ref{not-in-H-z_i}, we present another useful property of $\hat{H}$.

\begin{claim}\label{not-in-H-z_i}
For any index $i\in[2^k]$, if $i\notin \hat{H}$, then $z_i=0$ holds.
\end{claim}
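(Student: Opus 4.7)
My plan is to argue by contradiction: suppose some $i\in[2^k]$ satisfies $i\notin\hat{H}$ yet $z_i\neq0$, and force both $z_{2^k-i+1}=z_i$ and $z_{2^k-i+1}=-z_i$, which would then give $z_i=0$. Unpacking the definition, $i\notin\hat{H}$ means $(i,j)\in H'$ for every $j\in[2^k]$; in particular $(i,i_0)\in H'$ and $(i,j_0)\in H'$, where $i_0\in\hat{H}_{+}$ and $j_0\in\hat{H}_{-}$ were isolated just above the claim.

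From $(i,i_0)\in H'$ one extracts a constraint $g\in\GG$ and bit tuples $a,b\in\{0,1\}^{k-1}$ with $g(a)=z_i$ and $g(b)=z_{i_0}$. Since $g$ arises as $f^{x_s=x_t}$ for some pair $s<t$, the operation of merging two inputs of $f$ commutes with bitwise complementation, and under the standard lexicographic order on $\{0,1\}^k$ complementing every bit sends the index $\ell$ to $2^k-\ell+1$. Consequently $g(a_1\oplus1,\ldots,a_{k-1}\oplus1)=z_{2^k-i+1}$ and $g(b_1\oplus1,\ldots,b_{k-1}\oplus1)=z_{2^k-i_0+1}$. Because $\GG$ is complement stable and $z_{i_0}=z_{2^k-i_0+1}\neq0$ (from $i_0\in\hat{H}_{+}$), the constraint $g$ cannot be complement anti-invariant, since that would impose $z_{i_0}=-z_{i_0}$; hence $g$ is complement invariant and therefore $z_{2^k-i+1}=z_i$.

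Applying exactly the same reasoning to $(i,j_0)\in H'$, but now using $j_0\in\hat{H}_{-}$, namely $z_{j_0}=-z_{2^k-j_0+1}\neq0$, forces the witnessing constraint of $\GG$ to be complement anti-invariant, giving $z_{2^k-i+1}=-z_i$. Combining the two equalities yields $2z_i=0$, hence $z_i=0$, contradicting the standing assumption. The only point that needs care is the index-tracking observation that complementing the reduced input of $g=f^{x_s=x_t}$ corresponds to complementing the full $f$-input and so flips the lexicographic index from $\ell$ to $2^k-\ell+1$; this is already implicit inside the proof of Claim \ref{H+-and-H-}, and one may alternatively just invoke that claim after observing that its conclusion extends verbatim to every $i\in[2^k]$ (the symmetry $|z_i|=|z_{2^k-i+1}|$ on which the proof rests holds for all $i\in[2^k]$, not merely for $i\leq2^{k-1}$). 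This extension is the only mild obstacle; once it is in hand, the argument is a direct pairing of $i$ with the two indices $i_0$ and $j_0$ of opposite ``parity'' in $\hat{H}$.
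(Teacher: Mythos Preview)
Your proof is correct and follows essentially the same route as the paper: from $i\notin\hat{H}$ you obtain $(i,i_0)\in H'$ and $(i,j_0)\in H'$, then use the complement stability of every $g\in\GG$ together with $i_0\in\hat{H}_{+}$ and $j_0\in\hat{H}_{-}$ to force both $z_i=z_{2^k-i+1}$ and $z_i=-z_{2^k-i+1}$. The only cosmetic difference is that the paper simply assumes without loss of generality that $i\leq 2^{k-1}$ and invokes Claim~\ref{H+-and-H-} directly, whereas you unpack that claim inline and remark on extending it to all $i$; either way the argument is the same.
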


\begin{proof}
Let $i$ be any index in $[2^k]-\hat{H}$. Without loss of generality, we assume that $i\leq 2^{k-1}$.
Toward a contradiction, we assume that $z_i\neq 0$.
As noted earlier,  $\hat{H}_{+}$ and $\hat{H}_{-}$ are nonempty.
Now, let us take two indices $j_1\in\hat{H}_{+}$ and $j_2\in\hat{H}_{-}$ and consider two pairs $(i,j_1)$ and $(i,j_2)$.
For the first pair $(i,j_1)$, if $(i,j_1)\notin H'$ holds, then $(i,j_1)$ must be in $H$, and thus $i$ belongs to $\hat{H}$. Since this is clearly a contradiction, $(i,j_1)\in H'$ follows. Similarly, we can obtain $(i,j_2)\in H'$ for the second pair $(i,j_2)$.
Claim \ref{H+-and-H-} then implies $z_i = z_{2^k-i+1}$ as well as $z_i = - z_{2^k-i+1}$. {}From these equations, $z_i=0$ follows. This is also a contradiction. Therefore, the claim is true.
\end{proof}

The rest of the proof proceeds by examining three cases, depending on the value of $k\geq3$.

\s

(a) Let us consider the base case of $k=3$ with $f=(z_1,z_2,\ldots,z_8)$.  By a straightforward calculation,  $H'$ is comprised of pairs $(2,3)$, $(2,4)$, $(2,5)$, $(3,4)$, $(3,5)$, $(3,7)$, $(4,6)$, $(4,7)$, $(5,6)$, $(5,7)$, $(6,7)$ and, moreover,  all pairs obtained from those listed pairs, say,  $(i,j)$ by exchanging two entries $i$ and $j$. Thus,
$\hat{H}$ equals $\{2,3,4,5,6,7\}$. Claim \ref{not-in-H-z_i} yields the equality $z_1=z_8=0$.
Now, we assume that $\hat{H}_0\neq\setempty$. In the case where $\hat{H}_0=\{4,5\}$, $\hat{H}_{+}$ is either $\{2,7\}$ or $\{3,6\}$ because $\hat{H}_{-}$ is nonempty. Now, we define
$h_2=f^{x_2=*}$, which equals $(z_3,z_2,z_7,z_6)$. Note that $h_2$ satisfies Condition ($a$).
If $h_2$ is complement stable, then it must hold  that either $z_i=z_{9-i}$ for all $i\in[4]$ or $z_i= -z_{9-i}$ for all $i\in[4]$. This implies that $f$ is complement stable, a contradiction. Therefore, $h_2$ is complement unstable.
The other cases ($\hat{H}_0=\{2,7\}$ and $\hat{H}_0=\{3,6\}$) are similar.

Next, assume that $\hat{H}_0=\setempty$.  Recall that $|\hat{H}_{+}|>0$ and $|\hat{H}_{-}|>0$ and note that $|\hat{H}_{+}|\neq |\hat{H}_{-}|$.
Let us consider the case where $|\hat{H}_{+}|>|\hat{H}_{-}|$. If $\hat{H}_{+}=\{2,4,5,7\}$ and $\hat{H}_{-}=\{3,6\}$, then we define $h_2=f^{x_2=*}$, which is $(z_3,z_2+z_4,z_5+z_7,z_6)$. Obviously, Condition ($a$) holds for $h_2$.  Since  $3\in\hat{H}_{-}$,  $z_3=-z_6$ holds; moreover, since $2,4\in\hat{H}_{+}$,  it follows that $z_2+z_4=z_5+z_7$.
We then conclude that  $h_2$ is not complement stable.
Similarly, if $\hat{H}_{+}=\{2,3,6,7\}$ and $\hat{H}_{-}=\{4,5\}$ (resp.,   $\hat{H}_{+}=\{3,4,5,6\}$ and $\hat{H}_{-}=\{2,7\}$), then consider $h_1=f^{x_1=*} = (z_5,z_2+z_6,z_3+z_7,z_4)$ (resp., $h_3=f^{x_3=*}=(z_2,z_3+z_4,z_5+z_6,z_7)$).  This constraint
$h_1$ is also complement unstable and satisfies Condition ($a$),
as requested.
The other case where $|\hat{H}_{-}|>|\hat{H}_{+}|$ is similarly treated.

\s

(b) Consider the case where $k=4$. It is not difficult to show
that $\hat{H} = \{4,6,7,10,11,13\}$.
Let us define $h = f^{x_1=*}$. This constraint $h=(w_1,w_2,\ldots,w_8)$ contains eight entries $w_i=z_{i}+z_{8+i}$ for all $i\in[8]$.  In particular,  $w_1=z_1+z_9$, $w_2=z_2+z_{10}$, $w_3=z_3+z_{11}$, $w_4=z_4+z_{12}$, $w_5=z_5+z_{13}$, $w_6=z_6+z_{14}$, $w_7=z_7+z_{15}$, and $w_8=z_8+z_{16}$. By Claim \ref{not-in-H-z_i}, it follows that $h=(0,z_{10},z_{11},z_{4},z_{13},z_{6},z_{7},0)$. Condition ($a$) is clearly met for this constraint $h$.
If $h$ is complement stable, either $z_{i}=z_{16-i+1}$ for all $i\in\{4,6,7\}$ or $z_{i}=-z_{16-i+1}$ for all $i\in\{4,6,7\}$.  This is impossible because $z_{i_0}=z_{16-i_0+1}\neq0$ and $z_{j_0}= -z_{16-j_0+1}\neq0$. Therefore, $h$ is complement unstable.

\s

(c) Assume that $k\geq5$. let us claim that $H=\setempty$. Assume otherwise. Let $(i,j)\in H$ and consider two $k$-bit series $a=a_1a_2\cdots a_k$ and $b=b_1b_2\cdots b_k$ satisfying that $z_i=f(a)$ and $z_j=f(b)$. Note that, for every distinct pair $s,t\in[k]$, $a_s=a_t$ implies $b_s\neq b_t$. For convenience, let $P_r=\{s\in[k]\mid a_s=r\}$ for each bit $r\in\{0,1\}$. Here, we examine only the case where $|P_0|\geq |P_1|$ since the other case is similar. Since $k\geq5$, it follows that $|P_0|\geq k/2\geq 3$; namely, there are at least three elements in $P_0$. For simplicity, let $1,2,3\in P_0$. Since $a_1=a_2=a_3=0$, there must be two distinct indices $i_1,i_2\in\{1,2,3\}$ for which $b_{i_1} = b_{i_2}$. Write $b^{(i_2)}$ for the $(k-1)$-bit series $b_1b_2\cdots b_{i_2-1}b_{i_2+1}\cdots b_k$. Similarly, we define $a^{(i_2)}$. By the choice of $(i_1,i_2)$, it holds that $z_i=f^{x_{i_1}=x_{i_2}}(a^{(i_2)})$ and $z_j=f^{x_{i_1}=x_{i_2}}(b^{(i_2)})$. This fact implies that $(i,j)\in H'$, a contradiction. Therefore,
we conclude that $H=\setempty$; that is, $H'=[2^k]\times[2^k]$. However, this contradicts our assumption that $(i_0,j_0)\notin H'$.

\s

This completes the proof of Lemma \ref{induction-step} and thus finishes the proof of Proposition \ref{Delta-removal-arity-all}(2).
\end{proofof}

Throughout the proof of Theorem \ref{key-Delta-elimination}, we have required the use of algebraic real numbers only in the proofs of Claims \ref{algebraic-simul} and \ref{XOR-T-const}. It is not known so far that the theorem  is still true for arbitrary real numbers.

\section{AP-Reductions without Auxiliary Unary Constraints}\label{sec:computability}

As a direct application of Theorem \ref{key-Delta-elimination},  we wish to prove our second theorem---Theorem \ref{main-theorem}---presented in Section \ref{sec:introduction}. To clarify the meaning of this theorem, we need to formalize the special constraints described in Section \ref{sec:introduction}. Let us introduce the following sets of constraints. Recall  that all constraints dealt with in this paper are assumed to output only {\em algebraic real values}.
\begin{enumerate}
\item Let $\DG$ denote the set of all constraints $f$ that are expressed by products of unary functions. A constraint in $\DG$ is called {\em degenerate}. When $f$ is symmetric, $f$  must  have one of the following three  forms:  $[x,0,\ldots,0]$, $[0,\ldots,0,x]$, and $y\cdot [1,z,z^2,\ldots,z^k]$ with $yz\neq 0$.
By restricting $\DG$, we define $\DG^{(-)}$ as the set of constraints of the forms $[x,0,\ldots,0]$, $[0,\ldots,0,x]$, $y\cdot[1,1,\ldots,1]$, and $y\cdot [1,-1,1,\ldots, -1\,\text{or}\,1]$, where $y\neq0$.
Naturally, both $\Delta_0$ and $\Delta_1$ belong to $\DG^{(-)}$.
\vs{-2}
\item The notation $\ED_{1}$ indicates the set of the following constraints:  $[x,\pm x]$,  $[x,0,\ldots,0,\pm x]$ of arity $\geq2$, and $[0,x,0]$ with $x\neq0$.
As a natural extension of $\ED_{1}$, let $\ED_{1}^{(+)}$ be composed of constraints  $[x,y]$,  $[x,0,\ldots,0,y]$ of arity $\geq2$, and $[0,x,0]$ with $x,y\neq0$. Notice that $[x,y]$ also belongs to $\DG$.
\vs{-2}
\item Let $\AZ$ be made up of all constraints of arity at least $3$ having
the forms $[0,x,0,x,\ldots,0\,\text{or}\,x]$ and  $[x,0,x,0,\ldots,x\,\text{or}\,0]$ with $x\neq0$.  The term ``$\AZ$'' indicates ``alternating zeros.'' Similarly, $\AZ_1$ denotes the set of all constraints of arity at least $3$ of the forms $[0,x,0,-x,0,x,0,\ldots,0\,\text{or}\, x\,\text{or}\,-x]$ and  $[x,0,-x,0,x,0\ldots,-x\,\text{or}\, x\,\text{or}\,0]$ with alternating $0$, $x$, and $-x$, where $x\neq0$.
\vs{-2}
\item The set $\BB_0$ consists of all constraints  $[z_0,z_1,\ldots,z_{k}]$ with $k\geq2$ and  $z_0\neq0$ that satisfy either (i) $z_{2i+1}=z_{2i+2}=(-1)^{i+1}z_0$ for all $i\in \nat$ satisfying $2i+1\in[k]$ or $2i+2\in[k]$,     or
    (ii) $z_{2i}=z_{2i+1}=(-1)^{i}z_0$ for all $i\in \nat$ with $2i\in[k]$ or $2i+1\in[k]$.
    As simple examples, $[1,1,-1]$ and $[1,-1,-1]$ belong to $\BB_0$.
\vs{-2}
\item Let $\OR$ denote the set of all constraints of the form $[0,x,y]$ with $x,y>0$. For instance, a special constraint $OR=[0,1,1]$ belongs to $\OR$.
\vs{-2}
\item Let $\NAND$ consist of all constraints of the form $[x,y,0]$ with $x,y>0$.
A Boolean constraint $NAND=[1,1,0]$ is in $\NAND$.
\vs{-2}
\item Let $\BB$ be comprised of all constraints of the form $[x,y,z]$ with $x,y,z>0$ and $xz\neq y^2$.
\end{enumerate}

In the presence of $\UU$, we obtain  (i) $\#\mathrm{SAT}\APreduces \sharpcsp(OR,\UU)$  \cite[Lemma 4.1]{Yam10a}, (ii)  $\sharpcsp(OR,\UU)\APreduces \sharpcsp(g,\UU)$ for every constraint $g$ in $\OR\cup\NAND$ \cite[Lemma 6.3]{Yam10a}, and (iii)  $\sharpcsp(OR,\UU)\APreduces \sharpcsp(g,\UU)$ for every constraint $g\in\BB$  \cite[Proposition 6.8]{Yam10a}. In conclusion, we can derive the following lemma.

\begin{lemma}\label{OR-and-B-Yam10a}
For every constraint $g$ in $\OR\cup\NAND\cup\BB$, it holds that $\#\mathrm{SAT}\APreduces \sharpcsp(g,\UU)$.
\end{lemma}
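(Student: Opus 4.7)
The proof is essentially a repackaging of the three previously cited results, combined via transitivity of AP-reducibility. The plan is to chain the reductions so that \#SAT eventually reaches $\sharpcsp(g,\UU)$ through the intermediate problem $\sharpcsp(OR,\UU)$, where $OR=[0,1,1]$ plays the role of a common ``gateway'' counting problem.

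First, I would invoke fact (i), that is, the established AP-reduction $\#\mathrm{SAT}\APreduces \sharpcsp(OR,\UU)$ from \cite[Lemma 4.1]{Yam10a}. This handles the source side once and for all, independently of which class the target constraint $g$ lives in. Next, I would split the argument into two cases according to the membership of $g$. In the first case, where $g\in\OR\cup\NAND$, fact (ii) from \cite[Lemma 6.3]{Yam10a} directly provides the AP-reduction $\sharpcsp(OR,\UU)\APreduces \sharpcsp(g,\UU)$. In the second case, where $g\in\BB$, fact (iii) from \cite[Proposition 6.8]{Yam10a} supplies the same reduction $\sharpcsp(OR,\UU)\APreduces \sharpcsp(g,\UU)$. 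These two cases exhaust the set $\OR\cup\NAND\cup\BB$.

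Finally, I would invoke the transitivity of AP-reducibility stated in Lemma \ref{AP-property}(2) to compose the reduction from fact (i) with the appropriate reduction obtained in the case analysis. The composition yields $\#\mathrm{SAT}\APreduces \sharpcsp(g,\UU)$ uniformly for every $g\in\OR\cup\NAND\cup\BB$, which is exactly the claim of the lemma.

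There is no real obstacle to carry out: the lemma is a corollary consolidating three existing reductions into a single clean statement that will be convenient for the later application in Section \ref{sec:computability}. The only mildly nontrivial check is that the constraint $OR=[0,1,1]$ itself belongs to $\OR$ (so that (i) is indeed an instance of the target form), which is immediate from the definition of $\OR$ as the set of $[0,x,y]$ with $x,y>0$.
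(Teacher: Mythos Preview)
Your proposal is correct and matches the paper's approach exactly: the paper presents facts (i)--(iii) immediately before the lemma and then states that the lemma follows ``in conclusion,'' without giving any further proof. Your explicit case split and appeal to transitivity (Lemma~\ref{AP-property}(2)) is precisely the intended derivation.
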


The first part of Theorem \ref{main-theorem} concerns the tractability of $\sharpcsp(\FF)$ when one of the two containments  $\FF\subseteq\DG\cup\ED_{1}^{(+)}$ and   $\FF\subseteq\DG^{(-)}\cup \ED_{1}\cup\AZ \cup\AZ_1\cup\BB_0$ holds.  For such a counting problem $\sharpcsp(\FF)$, it is already known in \cite[Theorem 5.2]{CLX09x}
and \cite[Theorem 1.2]{GGJ+10} that
$\sharpcsp(\FF)$ is solvable in polynomial time.

\begin{proposition}\label{compute-sig-set}{\rm \cite{CLX09x,GGJ+10}}\hs{1}
Let $\FF$ be any set of symmetric real-valued constraints. If  either  $\FF\subseteq \DG\cup\ED_{1}^{(+)}$ or $\FF\subseteq\DG^{(-)}\cup\ED_{1}\cup\AZ \cup\AZ_1\cup\BB_0$ holds, then $\sharpcsp(\FF)$ belongs to $\fp_{\algebraic}$.
\end{proposition}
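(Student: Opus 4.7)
The plan is to derive this tractability result from the pre-existing exact-counting dichotomy classifications for symmetric $\#$CSPs established in \cite{CLX09x} and \cite{GGJ+10}. In both cases the strategy is the same: verify that every signature on our list is a member of a signature class already certified tractable in those works, and then invoke the corresponding polynomial-time algorithm verbatim.

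For the first hypothesis $\FF\subseteq \DG\cup\ED_{1}^{(+)}$, I would first absorb each degenerate constraint $f\in\DG$ into the surrounding structure: since $f$ factors as $\prod_{i} g_{i}(x_{i})$ over unary functions, its contribution to $\csp_{\Omega}$ becomes a product of unary weights on the incident variables. The remaining signatures from $\ED_{1}^{(+)}$ have shape $[x,y]$ (unary), $[x,0,\ldots,0,y]$ (a generalized equality supported only on the all-$0$ and all-$1$ assignments), or $[0,x,0]$ (forcing exactly one $1$ on a variable pair). Each such signature is a weighted equality or ``exact-one'' constraint, placing $\FF$ inside the product-of-unaries-plus-equalities family on the tractable side of \cite[Theorem~5.2]{CLX09x}, whose algorithm then computes $\csp_{\Omega}$ in polynomial time.

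For the second hypothesis, I would match each listed family against the Fibonacci-gate-style tractable patterns identified in \cite[Theorem~1.2]{GGJ+10}. The set $\DG^{(-)}$ contains degenerate signatures with $\{-1,0,+1\}$ entries; $\ED_{1}$ contains signed equalities and the pair-XOR; and $\AZ$, $\AZ_{1}$, $\BB_{0}$ encode precisely the alternating-zero patterns and second-order sign recurrences characteristic of Fibonacci-type signatures. Reading off the value sequence of each symmetric signature, one verifies case by case that it satisfies the recurrence (or a trivial degeneration thereof) demanded by the tractable side of \cite{GGJ+10}; in particular the sign patterns $(-1)^{i+1}$ and $(-1)^{i}$ built into $\BB_{0}$ are the normalized Fibonacci signs, while $\AZ$ and $\AZ_{1}$ are their zero-interlaced relatives. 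The polynomial-time algorithm of \cite{GGJ+10} then finishes this case.

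The principal obstacle is the case analysis for the second hypothesis: five distinct families must be matched, possibly after a diagonal rescaling, against the explicit tractable signatures of \cite{GGJ+10}. The delicate cases are $\AZ_{1}$ and $\BB_{0}$, where zeros are mixed with alternating signs, and one must be careful to apply the correct normalization before the Fibonacci template becomes visible. Once this bookkeeping is complete, no new algorithmic ingredient is required beyond what \cite{CLX09x} and \cite{GGJ+10} already supply.
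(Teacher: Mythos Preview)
Your proposal is correct and is essentially what the paper does: the paper offers no proof of this proposition at all, simply attributing it to \cite[Theorem~5.2]{CLX09x} and \cite[Theorem~1.2]{GGJ+10}, and your plan is exactly to verify that each of the listed signature families lies inside the tractable classes certified there. One minor caution on terminology: the tractable side of \cite{GGJ+10} is stated in terms of specific closed-form signature families (arising from orthogonal basis changes of product-type signatures), not ``Fibonacci gates'' as such---that vocabulary belongs to the Holant line of Cai--Lu--Xia---so when you carry out the case check for $\AZ$, $\AZ_1$, and $\BB_0$ you should match against the actual hypotheses of \cite[Theorem~1.2]{GGJ+10} rather than a Fibonacci template; the underlying verification is nonetheless the routine bookkeeping you describe.
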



Now, we come to the point of proving the second part of Theorem \ref{main-theorem}.
Let us first analyze the approximation complexity of $\sharpcsp(f)$ for an arbitrary symmetric constraint $f$ that are not included in $\DG\cup\ED_{1}^{(+)}\cup\AZ $. Note that, when $f$ is in $\DG\cup\ED_{1}^{(+)}\cup\AZ \cup\AZ_1\cup\BB_0$,  $\sharpcsp(f)$ belongs to $\fp_{\algebraic}$ by Proposition \ref{compute-sig-set}.
The following is a key claim required for the proof of Theorem \ref{main-theorem}.

\begin{lemma}\label{higher-case}
Let $f$ be any symmetric real-valued constraint of arity at least $2$. If $f\notin\DG\cup\ED_{1}^{(+)}\cup \AZ\cup \AZ_1\cup\BB_0 $, then, for any index $i\in\{0,1\}$,
there exists a constraint $g$ in $\OR\cup\NAND\cup\BB$ such that $g$ is effectively T-constructible from $\{f,\Delta_i\}$.
\end{lemma}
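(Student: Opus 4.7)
For each $i\in\{0,1\}$, the plan is to construct a symmetric binary constraint in $\OR\cup\NAND\cup\BB$ from $\{f,\Delta_i\}$ using the gadgets admitted by effective T-constructibility (variable identification, internal summation over $\{0,1\}$-valued variables, and self-tensoring). Access to $\Delta_0$ permits pinning to $0$, yielding prefixes $[f_0,\dots,f_{k-1}]$ of $f$, while access to $\Delta_1$ permits pinning to $1$, yielding suffixes $[f_1,\dots,f_k]$. All five excluded sets $\DG$, $\ED_{1}^{(+)}$, $\AZ$, $\AZ_{1}$, $\BB_{0}$ are closed under the reversal $[f_0,\dots,f_k]\mapsto[f_k,\dots,f_0]$, so the two cases are structurally symmetric and I describe only $i=0$.

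In the base case $k=2$, where $f=[f_0,f_1,f_2]$ is already binary, I perform a direct case analysis on the zero pattern. The hypothesis $f\notin\DG\cup\ED_{1}^{(+)}$ leaves only three situations: ``$f_0=0$ and $f_1,f_2\neq 0$'', ``$f_2=0$ and $f_0,f_1\neq 0$'', or ``all three non-zero''. The first two give $f^2\in\OR$ and $f^2\in\NAND$ respectively. In the third situation, $f^2\in\BB$ whenever $f_0f_2\neq\pm f_1^2$; the case $f_0f_2=f_1^2$ forces $f$ to be geometric (hence in $\DG$), and for the case $f_0f_2=-f_1^2$ I form the composition
\[
 h'(x,z)=\sum_{y\in\{0,1\}}f(x,y)f(y,z)=\bigl[f_0^2+f_1^2,\;f_1(f_0+f_2),\;f_1^2+f_2^2\bigr].
\]
The algebraic identity $h'_0h'_2-(h'_1)^2=(f_0f_2-f_1^2)^2=4f_1^4>0$ ensures $h'_0h'_2\neq (h'_1)^2$, so $h'\in\BB$ if $h'_1>0$, $(h')^2\in\BB$ if $h'_1<0$, and $h'_1=0$ would force $f_0+f_2=0$, hence $f=[a,\pm a,-a]\in\BB_{0}$, contradicting the hypothesis.

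For the inductive step $k\geq 3$, I would use three complementary gadget families. The simplest are single-step arity reductions: a $\Delta_0$-pinning (producing the prefix $[f_0,\dots,f_{k-1}]$) or a single-variable summation $f^{x_j=*}$ (producing $[f_0+f_1,\dots,f_{k-1}+f_k]$), in the hope that one of them yields an arity-$(k-1)$ constraint again outside the excluded list, so that the induction hypothesis closes via Lemma~\ref{constructibility}. If both reductions land in the excluded list, which occurs primarily in near-alternating configurations like $f=[y,-y,y,\dots,y(-1)^{k-1},f_k]$ with $f_k\neq y(-1)^k$, I instead apply the matched-identification gadget
\[
 r(x,z)=\sum_{u_1,\dots,u_{k-1}\in\{0,1\}}f(x,u_1,\dots,u_{k-1})\,f(z,u_1,\dots,u_{k-1}),
\]
which is already symmetric and binary with entries $r(a,b)=\sum_{j=0}^{k-1}\binom{k-1}{j}f_{a+j}f_{b+j}$. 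By Cauchy--Schwarz, $r(0,1)^2\leq r(0,0)r(1,1)$ with equality only in the geometric case, so $f\notin\DG$ already gives $r(0,0)r(1,1)>r(0,1)^2$; combined with one self-tensoring to clear a negative $r(0,1)$, this places $r$ or $r^2$ in $\BB$ whenever $r(0,1)\neq 0$.

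The main obstacle is the subcase $r(0,1)=0$, which puts $r=[r(0,0),0,r(1,1)]$ in $\ED_{1}^{(+)}$ rather than $\BB$; this can genuinely occur for non-excluded $f$ with alternating zero patterns such as $f=[1,0,2,0,1]$. In that subcase I would fall back on prefix-plus-summation (for $[1,0,2,0,1]$: pin to $[1,0,2,0]$, then sum to $[1,2,2]\in\BB$), exploiting the fact that $f\notin\AZ\cup\AZ_{1}$ forces the non-zero entries of $f$ to break the uniform value, thereby creating a ``defect'' that the sum-reduction exposes as a $\BB$-window once the matched-identification gadget has failed. The hard part is a careful bookkeeping verifying that, across all the remaining exotic configurations (mixed sign patterns, sparse zeros, and borderline $\BB_{0}$-like signatures), at least one of the three gadget families succeeds; each of the five excluded sets removes exactly one family of obstructions that would otherwise make every gadget produce a constraint outside $\OR\cup\NAND\cup\BB$.
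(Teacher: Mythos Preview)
Your base case $k=2$ is correct; the composition $h'(x,z)=\sum_y f(x,y)f(y,z)$ together with the identity $h'_0h'_2-(h'_1)^2=(f_0f_2-f_1^2)^2$ is a clean way to dispatch the sub-case $f_0f_2=-f_1^2$, and the reversal symmetry reducing $i=1$ to $i=0$ is valid since all five excluded families are closed under reversal.

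The inductive step, however, is only a sketch, and your own wording (``the hard part is a careful bookkeeping\dots'') concedes this. The Cauchy--Schwarz bound on the matched-identification gadget $r$ is a genuine observation, but the three gadgets you name (prefix, summation, $r$) together with the stated ``prefix-plus-summation'' fallback do not cover all cases. A concrete obstruction already at $k=3$ is $f=[1,-1,1,3]$: the prefix $[1,-1,1]$ and the summation $[0,0,4]$ are both in $\DG$, while $r(0,1)=\binom{2}{0}(-1)+\binom{2}{1}(-1)+\binom{2}{2}\cdot 3=0$ puts $r=[4,0,12]$ in $\ED_1^{(+)}$; iterating prefix and summation on these intermediate constraints only produces further degenerate constraints or the zero function. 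The paper closes such cases by gadgets absent from your list: for this particular $f$ it uses the asymmetric composition $h_3(x_1,x_3)=\sum_{x_2}f(x_1,x_1,x_2)f(x_2,x_3,x_3)=[2,-2,10]$, whose square lies in $\BB$. More generally, the paper's proof of the inductive step is an exhaustive case analysis (treating $k=3$ and $k\ge 4$ separately across many sub-cases), and a recurring device there is to first manufacture the \emph{other} constant $\Delta_{1-i}$ from $f^{x_1=\cdots=x_k}=[f_0,f_k]$ via Claim~\ref{algebraic-simul}, thereby enabling two-sided pinning. Your proposal neither incorporates that device systematically nor supplies the case analysis that would be needed in its absence, so as written it does not establish the lemma.
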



Theorem \ref{main-theorem} then follows, as shown below, by combining Theorem  \ref{key-Delta-elimination} and Proposition \ref{compute-sig-set} with an application of Lemma \ref{higher-case}.

\begin{proofof}{Theorem \ref{main-theorem}}
Since Proposition \ref{compute-sig-set} has already shown the first part of Theorem \ref{main-theorem}, we are focused on the last part of the theorem. To prove this part, we assume that $\FF\not\subseteq\DG\cup\ED_{1}^{(+)}$ and $\FF\not\subseteq\DG^{(-)}\cup \ED_{1}\cup\AZ \cup\AZ_1\cup\BB_0$.
Notice that, by our assumption,
$\FF$ should contain  a certain constraint whose entries are not all zero. Given a constraint set $\GG$, by applying Theorem \ref{key-Delta-elimination} to $\FF\cup\GG$, we obtain an index $i_0\in\{0,1\}$ for which  $\sharpcsp(\Delta_{i_0},\FF,\GG)\APequiv \sharpcsp(\FF,\GG)$.

If there exists a constraint $f$ in $\FF$ not in $\DG\cup\ED_{1}^{(+)}\cup \AZ \cup\AZ_1\cup\BB_0$, then we apply Lemma \ref{higher-case}
to obtain an appropriate constraint $g\in \OR\cup\NAND\cup\BB$ for which $g$ is effectively T-constructible from $\{f,\Delta_{i_0}\}$. By Lemma \ref{constructibility}(3), the theorem immediately follows.
Therefore, it is sufficient to consider the case where
$\FF\subseteq \DG\cup\ED_{1}^{(+)}\cup \AZ \cup\AZ_1\cup \BB_0$.
Now, let us choose two constraints $f_1,f_2\in\DG\cup\ED_{1}^{(+)}\cup\AZ \cup\AZ_1\cup\BB_0$ in $\FF$ for which $f_1\notin\DG^{(-)}\cup \ED_{1}\cup\AZ \cup\AZ_1\cup\BB_0$ and $f_2\notin\DG\cup\ED_{1}^{(+)}$. Note that $f_1\in \DG\cup\ED_1^{(+)}$ and $f_2\in \AZ\cup\AZ_1\cup\BB_0$.
Hereafter, we will prove the following claim.

\begin{claim}\label{OR-and-B}
There exists a constraint $g$ in $\OR\cup\NAND\cup\BB$ that is effectively T-constructible from $\{f_1,f_2,\Delta_{i_0}\}$.
\end{claim}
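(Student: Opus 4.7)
The plan is to extract from $f_1$ a ``twist'' unary $u\sim[1,z]$ with $z\neq 0,\pm 1$, and then, regardless of the class of $f_2$, combine its square $u^{2}=[1,z^{2}]$ (obtained by attaching two copies of $u$ to the same internal variable) with a suitable reduction of $f_2$ in an explicit realization that produces a symmetric binary constraint proportional to $[z^{2},1,z^{2}]$ or $[1,z^{2},1]$. Since $z^{2}>0$ and $z^{4}\neq 1$, such a constraint lies in $\BB$. Using $u^{2}$ rather than $u$ is essential: it makes the output entries uniformly positive for every $z\neq 0,\pm 1$, sidestepping an otherwise lengthy case split on the sign of $z$.

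\emph{Step 1 (extracting the twist unary).} Because $f_1\in\DG\cup\ED_{1}^{(+)}$ but $f_1\notin\DG^{(-)}\cup\ED_{1}$, $f_1$ has one of two forms: either the non-trivial degenerate $y[1,z,z^{2},\ldots,z^{k}]$ with $yz\neq 0$ and $z\neq\pm 1$, or the all-or-nothing $[x,0,\ldots,0,y]$ with $xy\neq 0$ and $y\neq\pm x$. In the first case, pinning all but one of $f_1$'s arguments to $i_0$ via $\Delta_{i_0}$ yields a unary proportional to $[1,z]$. In the second case, iteratively identifying all arguments of $f_1$ using the $f_1^{x_i=x_j}$ operation yields the unary $[x,y]\sim[1,y/x]$ with $y/x\neq 0,\pm 1$. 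Either way we obtain the required $u\sim[1,z]$.

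\emph{Step 2 (constructing $g$ from $f_2$ and $u^{2}$).} We treat $f_2$ by class. If $f_2\in\AZ$ of arity $k\geq 3$, the realization
\[
g(x_1,x_2) \;=\; \sum_{y_1,\ldots,y_{k-2}\in\{0,1\}} f_2(x_1,y_1,\ldots,y_{k-2},x_2)\prod_{i=1}^{k-2} u^{2}(y_i)
\]
evaluates, upon splitting by the parity of $s=\sum_i y_i$ and using $\sum_{s\text{ even}}\binom{k-2}{s}z^{2s}=(A+B)/2$ and $\sum_{s\text{ odd}}\binom{k-2}{s}z^{2s}=(A-B)/2$ with $A=(1+z^{2})^{k-2}$ and $B=(1-z^{2})^{k-2}$, to $a\,[(A-B)/2,(A+B)/2,(A-B)/2]$ (or its entry-swapped form), where $a$ is the nonzero value of $f_2$. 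If $f_2\in\AZ_{1}$ of arity $k\geq 3$, attach two copies of $f_2$ to the same $k$ variables to obtain $f_2^{2}$; its value at each input is the square of $f_2$'s value, so $f_2^{2}\in\AZ$, and we reduce to the previous case. If $f_2\in\BB_{0}$, first pin $k-2$ arguments of $f_2$ to $i_0$ via $\Delta_{i_0}$; a direct verification from the two alternating-sign patterns shows that $\BB_{0}$ is closed under pinning, so the pinned constraint is a nonzero scalar multiple of some $h_2\in\{[1,-1,-1],[1,1,-1]\}$. The tree-shaped realization
\[
g(x_1,x_2) \;=\; \sum_{y,y'\in\{0,1\}} h_2(x_1,y)\,h_2(y,y')\,h_2(y,x_2)\,u^{2}(y')
\]
evaluates, by direct computation, to $-2[z^{2},1,z^{2}]$ when $h_2=[1,-1,-1]$ and to $2[1,z^{2},1]$ when $h_2=[1,1,-1]$.

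\emph{Step 3 and the main obstacle.} In every case, all entries of $g$ have the same sign: in the $\AZ$ case because $A>\lvert B\rvert>0$ (from $1+z^{2}>\lvert 1-z^{2}\rvert$), and in the $\BB_{0}$ case directly from the formulas above. After absorbing the overall sign into the scalar $\lambda\in\algebraic-\{0\}$ of the realization, $g$ is a positive symmetric binary; its non-degeneracy quantity $(\text{first})(\text{third})-(\text{middle})^{2}$ evaluates to $z^{4}-1$, $1-z^{4}$, or $-(1-z^{4})^{k-2}$ depending on the subcase, which are all nonzero for $z\neq\pm 1$. Hence $g\in\BB\subseteq\OR\cup\NAND\cup\BB$. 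The principal obstacle bypassed here is the sign variability of $z$: a naive path realization $\sum_y h_2(x_1,y)\,u(y)\,h_2(y,x_2)$ produces entries such as $[1+z,\,z-1,\,1+z]$, which lies in $\BB$ only for $|z|>1$; replacing $u$ by $u^{2}$ makes the outcome uniformly positive and works for all $z\neq 0,\pm 1$. A secondary subtlety, the sign alternation inside $\AZ_{1}$, is resolved by the $f_2\mapsto f_2^{2}$ squaring move that collapses $\AZ_{1}$ to $\AZ$.
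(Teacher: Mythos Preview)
Your proof is correct and takes a genuinely different route from the paper's. Both proofs begin the same way (extracting the unary $[x,y]\sim[1,z]$ from $f_1$) and both square $f_2$ to reduce $\AZ_1$ to $\AZ$, but the treatment of the remaining two cases diverges. For $f_2\in\AZ$, the paper pins $f_2$ down to arity $3$ with $\Delta_{i_0}$, builds a symmetrized ternary $h=x\cdot[x^2,0,y^2,0]\notin\Gamma$, and then \emph{invokes Lemma~\ref{higher-case}} to finish; you bypass that heavy lemma entirely by weighting the internal variables of $f_2$ with copies of $u^2$ and reading off a binary $[(A\pm B)/2,(A\mp B)/2,(A\pm B)/2]$ that lands directly in $\BB$. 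For $f_2\in\BB_0$, the paper uses the path gadget $\sum_{x_3}f_2(x_1,x_3)f_1(x_3)f_2(x_2,x_3)=[x+y,x-y,x+y]$ and asserts membership in $\BB$ (which, as written, needs an extra squaring step to ensure positivity of the entries); your tree gadget with the extra $h_2(y,y')u^2(y')$ branch yields $-2[z^2,1,z^2]$ or $2[1,z^2,1]$, which is manifestly positive and in $\BB$ once the scalar is absorbed. The payoff of your approach is twofold: it avoids the recursive appeal to Lemma~\ref{higher-case} in the $\AZ$ case, making the argument self-contained, and the systematic use of $u^2$ rather than $u$ eliminates the sign case-split that the path gadget would otherwise require. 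The paper's approach, on the other hand, keeps the gadgets smaller at the cost of leaning on the already-proved Lemma~\ref{higher-case}.
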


\begin{proof}
The proof of the claim proceeds as follows.
In general, $f_1$ has one of the following three forms: $[x,y]$, $[x,0,\ldots,0,y]$, and $x\cdot [1,z,z^2,\ldots,z^k]$ with $x,y\neq0$, $|x|\neq |y|$, $|z|\neq 1$, and $k\geq2$.
Notice that we can effectively T-construct $[x,y]$ from $[x,0,\ldots,0,y]$.
If $f_1$ is of the form $y\cdot [1,z,z^2,\ldots,z^k]$ with $y\neq0$, $|z|\neq 1$, and $k\geq2$, then no matter which of $\Delta_0$ and $\Delta_{1}$ is available, we can effectively T-construct $[1,z]$. Hence, we can
assume that $f_1$ has the form $[1,z]$.
In what follows, it suffices to assume that $f_1=[x,y]$ with $xy\neq0$ and $|x|\neq |y|$.

\s

(1) When $f_2\in \AZ$, there are two possibilities $f_2= u\cdot [0,1,0,1,\ldots,0\,\text{or}\,1]$ and  $f_2= u\cdot [1,0,1,0,\ldots,1\,\text{or}\,0]$ to occur. Since $f_2\notin \DG\cup \ED_1^{(+)}$, the arity $k$ of $f_2$ must be at least $3$. For simplicity, we set $u=1$. When $k>3$, we use the given constant constraint $\Delta_{i_0}$
to reduce $f_2$
to either $[1,0,1,0]$ or $[0,1,0,1]$. Here, let us  consider only the case where $f_2=[1,0,1,0]$ because the other case is similarly handled.

\sloppy
Here, we define two constraints $f(x_1,x_2,x_3) = f_1(x_1)f_2(x_1,x_2,x_3)$ and $h(x_1,x_2,x_3) = f(x_1,x_2,x_3)f(x_2,x_3,x_1)f(x_3,x_1,x_2)$. A simple calculation shows that
$h$ equals $x\cdot [x^2,0,y^2,0]$. Since $x^2\neq y^2$, we conclude that $h\notin \DG\cup\ED_{1}^{(+)}\cup\AZ \cup \AZ_1\cup\BB_0$.
Now, we apply Lemma \ref{higher-case} and
then obtain $g\econst \{h,\Delta_{i_0}\}$ for a certain constraint $g$ in $\OR\cup\NAND\cup \BB$. Since $h\econst \{f_1,f_2\}$, Lemma \ref{constructibility}(2) implies that $g\econst \{f_1,f_2,\Delta_{i_0}\}$.

\s

(2) If $f_2\in\AZ_1$, then we take $f_2^2$, which belongs to $\AZ$, and reduce this case to Case (1).

\s

(3) Assume that $f_2$ is in $\BB_0$. The constraint $f_2$ has either form $u\cdot [1,1,-1,\ldots,\pm1]$ or $u\cdot [1,-1,-1,\ldots,\pm1]$ with $u\neq0$. Apply $\Delta_{i_0}$ to $f_2$ appropriately and reduce $f_2$ to either $\pm u\cdot [1,1,-1]$ or $\pm u \cdot [1,-1,-1]$. Here, we assume that the constants ``$\pm u$'' to be $1$ for simplicity.  If $f_2=[1,1,-1]$, then we
define $h(x_1,x_2) = \sum_{x_3\in\{0,1\}} f_2(x_1,x_3)f_1(x_3)f_2(x_2,x_3)$. which equals $[x+y,x-y,x+y]$. Since $y\neq0$, it follows that $(x+y)^2-(x-y)^2=4xy\neq0$. Since $|x|\neq |y|$, $h$ belongs to $\BB$. If $f_2=[1,-1,-1]$, then $h$ equals $[x+y,y-x,x+y]$. A similar argument shows that $h$ also belongs to $\BB$. By the definition of $h$, it holds that $h\econst \{f_1,f_2,\Delta_{i_0}\}$.
\end{proof}

Let us return to the proof of the theorem. Let $\GG$ be any constraint set. By Claim \ref{OR-and-B}, an appropriately chosen constraint $g$ in $\OR\cup\NAND\cup\BB$ can be effectively T-constructible from $\{f_1,f_2,\Delta_{i_0}\}$. Lemma \ref{constructibility}(3) then implies that $\sharpcsp(g,\GG)\APreduces \sharpcsp(f_1,f_2,\Delta_{i_0},\GG)$. It follows from $f_1,f_2\in\FF$ that $\sharpcsp(f_1,f_2,\Delta_{i_0},\GG)\APreduces \sharpcsp(\FF,\Delta_{i_0},\GG)\APequiv \sharpcsp(\FF,\GG)$, where
the last AP-equivalence comes from the choice of $i_0$. Similarly, $g\econst \{f_1,f_2,\Delta_{i_0}\}$ implies $\sharpcsp(g,\GG)\APreduces \sharpcsp(f_1,f_2,\Delta_{i_0},\GG)$. Thus, combining all AP-reductions yields the desired consequence that $\sharpcsp(g,\GG)\APreduces \sharpcsp(\FF,\GG)$.
\end{proofof}

Briefly, we will describe how to prove Theorem \ref{dichotomy-theorem} even though a sketchy proof outline has been given in Section \ref{sec:introduction}. First, we consider the case where $\FF\subseteq \ED$. Since $\UU\subseteq \DG$, the problem $\sharpcsp(\ED,\UU)$
is AP-equivalent to $\sharpcsp(\DG,\ED^{(+)}_1)$ and, by Proposition \ref{compute-sig-set}, it is solvable in polynomial time.
On the contrary, assume that $\FF\nsubseteq \ED$.
Since $\DG\cup\ED_1^{(+)}\subseteq \ED$, it follows
that $\FF\nsubseteq \DG\cup\ED_1^{(+)}$.
Consider the first case where $\FF\nsubseteq \DG^{(-)} \cup\ED_1\cup \AZ\cup\AZ_1\cup\BB_0$.  Since $\FF\nsubseteq \DG\cup\ED_1^{(+)}$,
Theorem \ref{main-theorem} ensures the existence of a special constraint $g$ in $\OR\cup\NAND\cup\BB$ satisfying  $\sharpcsp(g,\UU)\APreduces \sharpcsp(\FF,\UU)$.
Using Lemma \ref{OR-and-B-Yam10a}, we obtain $\#\mathrm{SAT} \APreduces \sharpcsp(g,\UU)$.
By the choice of $g$, we easily conclude that $\#\mathrm{SAT}$ is AP-reducible to $\sharpcsp(\FF,\UU)$.
Next, we consider the second case where $\FF\subseteq \DG^{(-)} \cup\ED_1\cup \AZ\cup\AZ_1\cup\BB_0$. Since $\FF\nsubseteq \DG\cup \ED_{1}^{(+)}$, let us take a constraint $f_2$ from $\FF - (\DG\cup\ED_{1}^{(+)})$.
It follows that $f_2\in \AZ\cup\AZ_1\cup\BB_0$.
Next, we choose another constraint $f_1$ from $\UU-(\DG^{(-)}\cup\ED_1)$. It is clear by its definition that $f_1\notin \DG^{(-)}\cup \ED_1\cup\AZ\cup\AZ_1\cup\BB_0$.
Applying Claim \ref{OR-and-B}, we obtain a constraint
$g$ in $\OR\cup\NAND\cup\BB$ for which $g\econst \{f_1,f_2,\Delta_0,\Delta_1\}$. Lemma \ref{constructibility}(3) then implies that  $\sharpcsp(g,\UU)\APreduces \sharpcsp(f_1,f_2,\Delta_{0},\Delta_{1},\UU)$. Since $f_1,f_2\in\FF$ and $\Delta_0,\Delta_1\in\UU$, we conclude that $\sharpcsp(g,\UU)\APreduces \sharpcsp(\FF,\UU)$. Together with Lemma \ref{OR-and-B-Yam10a}, the desired AP-reduction $\#\mathrm{SAT} \APreduces \sharpcsp(\FF,\UU)$ follows.


To finish our entire argument, we still need to prove Lemma \ref{higher-case}.

\begin{proofof}{Lemma \ref{higher-case}}
Let $f$ be any symmetric real-valued constraint of arity $k\geq2$. For convenience, we write $\Gamma$ for $\DG\cup\ED_{1}^{(+)}\cup\AZ\cup \AZ_1\cup\BB_0$.
Throughout this proof, we assume that $f\notin\Gamma$ and that, for a fixed index $i_0\in\{0,1\}$,  $\Delta_{i_0}$ is given to use. Our proof proceeds by induction on $k$.

\paragraph{Case of $k=2$.}\label{sec:binary-case}

Let $f$ be any binary constraint not in $\Gamma$.
There are three major cases to consider separately, depending on the number of zeros in the output values of $f$.

\s

(B1) Consider the case where there are two zeros in the entries of $f$.
Obviously, $f$ must have one of the following three forms: $[x,0,0]$ ($\in \DG$),  $[0,0,x]$ ($\in\DG$), and $[0,x,0]$ ($\in\ED_{1}$) with $x\neq0$, yielding a contradiction against the assumption on $f$.

\s

(B2) Consider the case where there is exactly one zero in $f$. Note that $f$ must have one of the following forms: $[0,x,y]$, $[x,0,y]$, and $[x,y,0]$, where $xy\neq0$. For the first and the last forms, $f^2$ respectively belongs to $\OR$ and $\NAND$.
Moreover, $[x,0,y]$ is in $\ED_{1}^{(+)}$. The lemma thus follows.

\s

(B3)
Finally, consider the case where there is no entry of zero in $f=[x,y,z]$.
When $|xz|\neq y^2$, the constraint $f^2=[x^2,y^2,z^2]$ obviously
belongs to $\BB$; thus, it suffices to set $g$ in the lemma to be $f^2$.   If $xz=y^2$, then  $f$ has the form $[x,y,y^2/x]$ and thus it is in $\DG$. Finally, if $xz=-y^2$, then we obtain $f=[x,y,-y^2/x]$. Here, we wish to claim that $|x|\neq|y|\neq|-y^2/x|$, because this claim
establishes the membership of $f$ to $\BB$.
If $x=y$, then we obtain $-y^2/x=-x$ and thus $f=x\cdot[1,1,-1]$, which is obviously in $\BB_0$, a contradiction. When $x=-y$, we obtain $f=x\cdot [1,-1,-1]$, leading to another contradiction. Note that $y=y^2/x$ implies $x=y$, and $y=-y^2/x$ implies $y=-x$. Therefore, our claim is true.

\paragraph{Case of $k=3$.}\label{sec:single-sig-3}

We assume that $f$ has arity $3$. For convenience,
notations $x,y,z,w$ that will appear below as real values are assumed to be non-zero.

\s

(T1)
Consider the case where $f$ has exactly three zeros; that is,  $f$ is one of the following four forms: $[x,0,0,0]$, $[0,x,0,0]$, $[0,0,x,0]$, and $[0,0,0,x]$ with $x\neq0$. If $f\in\{[x,0,0,0],[0,0,0,x]\}$, then $f$ falls into $\DG$, a contradiction. Now, assume that $f=[0,x,0,0]$. We then define the desired constraint $g$ as $f^{x_1=*}$, which equals $x\cdot[1,1,0] = x\cdot NAND$, and thus it belongs to $\NAND$. Clearly, $g\econst f$ holds without use of $\Delta_{i_0}$.
The case of $f=[0,0,x,0]$ is treated similarly using $\OR$.

\s

\sloppy
(T2) Let us consider the case where $f$ has exactly two zeros; namely, $f$ has one of the forms: $[x,0,0,y]$, $[x,y,0,0]$, $[0,0,x,y]$, $[x,0,y,0]$, $[0,x,0,y]$, and $[0,x,y,0]$ with $xy\neq0$.
We exclude the case of $f=[x,0,0,y]$ because it belongs to $\ED_{1}^{(+)}$.  Remember that $\Delta_{i_0}$ is available to use.

(a) If $f=[x,y,0,0]$, then we define
$g=f^{x_1=x_2=x_3}$, which yields $x\cdot \Delta_0$. Since $x\neq0$, we can freely use $\Delta_0$. From the set $\{f,\Delta_0\}$, we can effectively T-construct a new constraint $h=[x,y,0]$. Thus, the constraint $h^2=[x^2,y^2,0]$, which is also effectively T-constructible, belongs to $\NAND$. The case $f=[0,0,x,y]$ is handled similarly using $\OR$ instead of $\NAND$.

(b) Assume that $f=[0,x,y,0]$. From $\{\Delta_{i_0},f\}$, we can effectively T-construct either $[0,x,y]$ or $[x,y,0]$, which is then reduced to Case (B2).

(c)
Finally, let $f=[x,0,y,0]$. In the case where $x=y$, we obtain $f= x\cdot [1,0,1,0]$, which belongs to $\AZ$. Moreover, if $x=-y$, then $f=x\cdot [1,0,-1,0]$ is in $\AZ_1$. In those cases, we clearly obtain a contradiction.
Hence, $|x|\neq |y|$ must hold. Let us consider another constraint  $g=f^{x_1=*}$, which equals $[x,y,y]$. Since $|xy|\neq y^2$, the constraint $g^2=[x^2,y^2,y^2]$ belongs to $\BB$. Since $g^2\econst f$, the lemma instantly follows.
The other case $f=[0,x,0,y]$ is similarly treated.

\s

(T3) Consider the case where $f$ has exactly one zero; that is,  $f=[x,y,z,0],[0,x,y,z],[x,y,0,z],[x,0,y,z]$ with $xyz\neq0$.

(a) If $f$ is of the form $[x,y,z,0]$, then we define $g=f^{x_1=x_2}$, which equals $(x,y,z,0)$. We then define $h(x_1,x_2) = g(x_1,x_2)^2g(x_2,x_1)^2$, implying $h=[x^4,y^2z^2,0]$. This constraint $h$ is in $\NAND$. By duality, we effectively T-construct $h'=[0,(xy)^2,z^4]$ from $[0,x,y,z]$. Clearly, $h'$ is a member of $\OR$.

(b) If $f=[x,0,y,z]$, then define $g(x_1,x_2) = \sum_{x_3,x_4\in\{0,1\}} f(x_1,x_3,x_4)f(x_2,x_3,x_4)$. This new constraint $g$ has the form $[A,B,C]$ with $A=x^2+y^2$, $B=yz$, and $C=2y^2+z^2$. Note that $A,C>0$ and $B\neq0$ because of $xyz\neq0$. We want to claim that $AC\neq B^2$. To show this inequality, assume that $AC=B^2$, that is, $(x^2+y^2)(2y^2+z^2)=y^2z^2$, or equivalently $2y^4+2x^2y^2+x^2z^2=0$. Since $z^2=-\frac{2y^2(x^2+y^2)}{x^2}$, it follows that $z^2<0$, a contradiction; hence, we obtain $AC\neq B^2$. Now, consider $g^2=[A^2,B^2,C^2]$. This constraint $g^2$ is clearly in
$\BB$. By duality, we can handle the case of $f=[x,y,0,z]$.

\s

(T4) Let us consider the case where $f$ has no zero; namely,  $f$ is of the form $[x,y,z,w]$ with
$xyzw\neq0$.
For the subsequent argument, we let $h_1(x_1,x_2) = \sum_{x_3,x_4\in\{0,1\}}f(x_1,x_3,x_4)f(x_2,x_3,x_4)$, $h_2(x_1,x_2) = \sum_{x_3\in\{0,1\}} f(x_1,x_3,x_3)f(x_2,x_3,x_3)$, and $h_3(x_1,x_3) = \sum_{x_2\in\{0,1\}}f(x_1,x_1,x_2) f(x_2,x_3,x_3)$.  Note that $h_1$, $h_2$, and $h_3$ are effectively T-constructible from $f$ alone.

(a)
If $|xz|\neq y^2$ and $|yw|\neq z^2$, then  define $h=f^{x_1=i_0}$ (i.e., $\sum_{x_1\in\{0,1\}} f(x_1,x_2,x_3)\Delta_{i_0}(x_1)$) and then obtain either $[x,y,z]$ or $[y,z,w]$. By our assumption, $h^2$ belongs to $\BB$.

(b) Now, assume that $|xz|\neq y^2$ and $|yw|=z^2$.

(i) We consider the first case where $yw=z^2$.
Without loss of generality, we can assume that $x=1$. Since $w=z^2/y$, $f$ equals $[1,y,z,z^2/y]$.
Note that $h_3$ is of the form $[A,B,C]$ with $A=1+y^2$, $B=z(z+1)$, and $C=z^2(1+z^2/y^2)$. Clearly, $A,C>0$ holds.
Now, let us study the case where $B\neq0$.
Since $y^2[AC-B^2]$ equals $z^2(y^2-z)^2$, it holds
that $AC=B^2$ iff $z=y^2$. Since $|z|\neq y^2$ by our assumption, we conclude that $AC\neq B^2$.
The constraint $h_3^2 =[A^2,B^2,C^2]$ therefore belongs to $\BB$.
Next, we consider the other case of $B=0$. Since $B=z(z+1)$, we immediately obtain $z=-1$; thus, $f$ must be of the form $[1,y,-1,1/y]$. Note that  $h_1=[A',B',C']$, where $A'=2(1+y^2)$, $B'=-(y+1/y)$, and $C'=2+y^2+1/y^2$. Obviously, $A',C'>0$ and $B'\neq0$ because $y\neq -1/y$.
The term $y^2[AC-B^2]$ is expressed as $y^2(y^2+1)^2(2y^2+1)$, which is obviously non-zero. We thus conclude that $AC\neq B^2$. Therefore, the constraint $h_1^2=[(A')^2,(B')^2,(C')^2]$ is in  $\BB$.
From $h_1^2\econst f$ and $h_3\econst f$, the lemma instantly follows.

(ii)
The second case is that $yw=-z^2$. As did before, we assume that $x=1$. Since $w=-z^2/y$,  $f$ equals $[1,y,z,-z^2/y]$. We then focus on the constraint  $h_3=[A,B,C]$, where $A=1+y^2$, $B=z(1-z)$, and $C=z^2(1+z^2/y^2)$.
Firstly, we examine the case of $B\neq 0$. Note that the value $y^2[AC-B^2]$ equals $z^2(y^2+z)$; thus, it holds that $AC=B^2$ iff $z=-y^2$. Since $|xz|\neq y^2$ and $A,C>0$, we conclude that  $|AC|\neq B^2$.
This places the constraint $h_3^2=[A^2,B^2,C^2]$ into $\BB$.
Secondly, we study the other case where $B=0$, or equivalently, $z=1$
because $B=z(1-z)$.
In this case, we use another constraint $h_2=[A',B',C']$, which is actually of the form $[2,y-1/y,y^2+1/y^2]$. If $|y|\neq 1$, then $B'\neq0$ holds.
Since $y^2[A'C'-(B')^2]$ has a non-zero value $(y^2+1)^2$,
the constraint $h_2^2$ is a member of $\BB$.
On the contrary, assume that $|y|=1$. When $y=1$,  $f$ equals $[1,1,1,-1]$, from which we obtain $h_1=[4,2,4]$. Obviously, $h_1$ belongs to  $\BB$. If $y=-1$, then we obtain $f=[1,-1,1,1]$, and thus $h_1=[4,-2,4]$ is also
in $\BB$.

(c) Consider the remaining case where $xz=\delta y^2$ and $yw=\delta' z^2$ for certain constants $\delta,\delta'\in\{\pm1\}$.

(i)  In the case where $xz=y^2$ and $yw=z^2$,  the constraint $f=[x,y,y^2/x,y^3/x^2]$ is obviously in $\DG$, a contradiction.

(ii) When $xz=y^2$ and $yw=-z^2$, we obtain $f=[x,y,y^2/x,-y^3/x^2]$.
Let us consider $h_2$. For simplicity, let $x=1$. The constraint $h_2$ therefore
has the form $[A,B,C]$, where $A=1+y^4$, $B=y-y^5$, and $C=y^2+y^6$. Note that $A,C>0$. It follows from $AC-B^2 = 4y^6$ that $AC$ is different from $B^2$.
Moreover, note that $B=0$ iff $y=1$.
Therefore, when $y\neq1$, the constraint $h_2^2=[A^2,B^2,C^2]$ belongs to
$\BB$. When $y=1$, $f$ has the form $[1,1,1,-1]$. The constraint $h_1$, which equals $[4,2,4]$, is clearly in $\BB$.
The proof for the case where $xz=-y^2$ and $yw=z^2$ is essentially the same.

(iii)
Finally, we consider the case where $xz=-y^2$ and $yw=-z^2$. This case
implies $f=[x,y,-y^2/x,-y^3/x^2]$.  By assuming $x=1$, we obtain $h_1=[A,B,C]$, where $A=1+2y^2+y^4$, $B=y(y^2-1)^2$, and $C=y^2+2y^4+y^6$. Obviously, $A,C>0$ holds. By a simple calculation, we obtain $AC-B^2 = 8y^4(y^4+1)$. It thus follows that $AC\neq B^2$.
Note that $B=0$ iff $y=\pm1$. If $|y|\neq1$, then the constraint
$h_1^2=[A^2,B^2,C^2]$ belongs to $\BB$. When $y=1$ and $y=-1$, we obtain $f=[1,1,-1,-1]$ and $f=[1,-1,-1,1]$, respectively, which are both in $\BB_0$, a contradiction.

\paragraph{Case of $k\geq4$.}

For convenience, let $u=f(0^k)$ and $w=f(1^k)$. There are four fundamental cases to examine, depending on the values of $u$ and $w$.
\s

[Case: $u=0$ and $w\neq0$]
Since the other case where $u\neq0$ and $w=0$ is symmetric, we omit that case.
First, we note that $g$ cannot belong to $\BB_0$ because $u=0$.
Now, let us consider the constraint
$h' = f^{x_1=x_2=\cdots =x_k} =[0,w]$.
Since $w\neq0$, from this constraint $h'$, we can effectively T-construct $\Delta_1=[0,1]$.
We then set  $g$ as $f^{x_1=1}$ (equivalently,  $\sum_{x_1\in\{0,1\}}f(x_1,\ldots,x_k)\Delta_1(x_1)$).
Since $g\notin\Gamma$ implies the desired consequence, in what follows, we assume that $g\in\Gamma$.

(a) In  the case where $g\in\DG$,  $g$ cannot be $[x,0,0,\ldots,0]$ because $w\neq0$. If $g=[0,0,\ldots,0,x]$, then $f$ must equal $[0,0,0,\ldots,0,x]$, which belongs to $\DG$, a contradiction. The remaining case is that $g$ has the form $x\cdot [1,y,y^2,\ldots,y^{k-1}]$ with $xy\neq0$.
Notice that $f=x\cdot [0,1,y,y^2,\ldots,y^{k-1}]$. If $y\neq -1$, then
we define $h=f^{x_1=*}$, which is $x\cdot[1,y+1,y(y+1),\ldots,y^{k-2}(y+1)]$. Since $y(y+1)\neq (y+1)^2$, $h$ is not in $\DG$. Because $y\neq0$, we obtain $y+1\neq1$; moreover, $y+1=y(y+1)$ implies $y+1\neq-1$. Therefore, $h$ cannot belong to $\BB_0$. In conclusion, $g$ is not in $\Gamma$.
We then apply the induction hypothesis to obtain the lemma. On the contrary, when $y=-1$, we consider another constraint $f^2=x^2\cdot[0,1,1,\ldots,1]$. This case can be reduced to the previous case of $y\neq-1$.

(b) Let us consider the case where $g$ belongs to $\ED_{1}^{(+)}$.
Since $k\geq4$, $g$ cannot have the form $[0,x,0]$.
If $g$ equals $[x,0,\ldots,0,w]$, then $f$ must be $[0,x,0,\ldots,0,w]$.
Define $h=f^{x_1=*}$, implying  $h=[x,x,0,\ldots,0,w]$. Since $h\notin\Gamma$, the induction hypothesis leads to the desired consequence.

(c) The next case to examine is that $g$ is in $\AZ$. The cases of $g=[x,0,x,0,\ldots,x,0]$ and $g=[0,x,0,x,\ldots,x,0]$ never occur
because of $w\neq0$.
If $g$ has the form $[x,0,x,0,\ldots,x]$, then $f$ must be of the form $[0,x,0,x,0,\ldots,x]$; thus, $f$ belongs to $\AZ$, a contradiction.  Moreover, if $g=[0,x,0,x,\ldots,0,x\,\text{or}\,0]$, then $f$ equals $[0,0,x,0,x,\ldots,0,x\,\text{or}\,0]$.  Let us define $h=f^{x_1=*}$, which equals $x\cdot [0,1,1,\ldots,1]$. Since  $h\notin\Gamma$,  we can apply the induction hypothesis to $h$.

(d) In the case of $g\in\AZ_1$, it holds that either $g=[x,0,-x,0,\ldots,w]$ or  $g= [0,x,0,-x,0,\ldots,w]$, where $w\in\{\pm x\}$. In the former case, $f$ is of the form $[0,x,0,-x,0,\ldots,w]$ and falls into $\AZ_1$, a contradiction.
In the latter case, we obtain $f=[0,0,x,0,-x,0,\ldots,w]$. Consider a new constraint $h=f^{x_1=*}$, which equals $[0,x,x,-x,-x,\ldots, w\,\text{or}\,w\pm x]$. Clearly, this does not belong to $\Gamma$, and thus we can apply the induction hypothesis to $h$.

\s

[Case: $uw\neq0$]
This case is split into three subcases: $|u|=|w|$, $|u|<|w|$, and $|u|>|w|$. The last subcase is essentially the same as the second one, we omit it.

(i) Let us assume that $|u|=|w|$. If $u=-w$, then the constraint $f'=f^2$
satisfies $f'(0,\ldots,0)=f'(1,\ldots,1)$, and thus it falls into the case of $u=w$. Henceforth, we will discuss only the case of $u=w$.
Here, we assume that $\Delta_0$ is available for free (\ie $i_0=0$). The  other case $i_0=1$ is similar. Note that the constraint $g=f^{x_1=0}$ is effectively T-constructible from $\{f,\Delta_0\}$. When $g\notin \Gamma$, the induction hypothesis can be directly applied to $g$; therefore, we assume below that $g$ is a member of $\Gamma$.

(a) We start with the case where $g\in\DG$. When $g=[u,0,\ldots,0]$, $f$ has the form $[u,0,\ldots,0,u]$. This constraint $f$ thus belongs to $\ED_{1}$, a contradiction against $f\notin\ED_1^{(+)}$.
When $g=[0,0,\ldots,0,x]$ with $x\neq0$, since $f=[0,0,\ldots,0,x,u]$, we can effectively T-construct $\Delta_1=[0,1]$ from $f$. We then define $h=f^{x_1=*}$ of the form $[0,0,\ldots,x,x+u]$. If $x\neq -u$, then the constraint $h'= h^{x_1=x_2=\cdots =x_{k-3}=1}$ has the form $[0,x,x+u]$, and thus its induced constraint $(h')^2$ belongs to $\OR$. Otherwise, we consider $f'=f^2=[0,0,\ldots,u^2,u^2]$ and reduce this case to the previous
case of $x\neq-u$.
Next, assume that $g= x\cdot[1,z,z^2,\ldots,z^{k-1}]$ for $x,z\neq0$.
In what follows, when $z=-1$, we use $f'=f^2$ instead of $f$.
If $z\neq -1$, then $f$ is of the form $x\cdot [1,z,z^2,\ldots,z^{k-1},u/x]$. Since $f\notin \DG$, it follows that $u/x\neq z^k$ (equivalently, $u\neq xz^{k}$). Let us consider $h=f^{x_1=*}$ of the form $x(1+z)\cdot [1,z,z^2,\ldots,z^{k-2},A]$, where $A=\frac{xz^{k-1}+u}{x(1+z)}$. Notice that $x=u/z^{k}$ iff $A= z^{k-1}$. This equivalence leads to $A\neq z^{k-1}$. Therefore, $h$ does not belong to $\Gamma$. We simply apply the induction hypothesis to $h$.

(b) Assume that $g\in\ED_{1}^{(+)}$. Since $g$ equals $[u,0,\ldots,0,x]$,  $f$ has the form $[u,0,\ldots,0,x,u]$. Let us define $h=f^{x_1=*}$, which equals $[u,0,\ldots,x,u+x]$. When $u\neq-x$, we apply the induction hypothesis to $h$ because of $h\notin\Gamma$.  On the contrary, if $u=-x$, then we consider another constraint $f^2$ ($=[u^2,0,\ldots,0,x^2,u^2]$) and reduce this case to the previous case of $u\neq-x$.

(c) Next, we assume that $g$ is in $\AZ$. Note that $g$ cannot have the form  $g=[u,0,u,0,\ldots,0]$ because, otherwise, $f$ equals $[u,0,u,0,\ldots,0,u]\in\AZ$, a contradiction.  When $g$ is of the form $[u,0,u,0,\ldots,0,u]$,   we easily obtain $[u,2u]$ from the constraint $h=f^{x_1=*}=[u,u,\ldots,u,2u]$. By Claim \ref{algebraic-simul}, we effectively T-construct $\Delta_1=[0,1]$ from $[u,2u]$.
Now, we apply $\Delta_1$ repeatedly to $f$ and then obtain $g'=[0,u,u]= u\cdot OR$, which obviously belongs to $\OR$.

(d) When $g$ is in $\AZ_1$, $g$ is not of the form $[u,0,-u,0,\ldots,-u,0]$ since, otherwise, $f$ is $[u,0,-u,\ldots,-u,0,u]$ and it belongs to $\AZ_1$, a contradiction. If $g$ equals $[u,0,-u,\ldots,0,\pm u]$, then $f$ has the form $[u,0,-u,\ldots,0,\pm u,u]$. We then take $f^2=[u^2,0,u^2,\ldots,0,u^2,u^2]$ and reduce this case to Case (c). Next, assume that $g=[u,0,-u,\ldots,u,0]$. Since $f=[u,0,-u,\ldots,u,0,u]$, the constraint $f^{x_1=*,x_2=*}$ equals $[0,-2u,0,\ldots,0,2u,2u]$, from which we immediately obtain $[0,2u] = 2u\cdot \Delta_{1}$. Using this $\Delta_{1}$, we can effectively T-construct $[0,2u,2u] = 2u\cdot OR$,
which obviously is in $\OR$.

(e) Assuming that $g\in \BB_0$, let $g=[u,-u,-u,u,\ldots,z_{k-1},z_k]$, where $z_{k-1},z_k\in\{u,-u\}$. As the first case, consider the case where   $(z_{k-1},z_k)=(-u,u)$; namely, $f$ has the form $u\cdot [1,-1,-1,1,\ldots,-1,1,1]$. Obviously, $f$ belongs to $\BB_0$, a  contradiction. A similar argument works for the case of $(z_{k-1},z_k)=(-u,-u)$. Now, assume that $(z_{k-1},z_k)=(u,u)$. Since $f= u\cdot[1,-1,-1,\ldots,1,1,1]$, the constraint $f^{x_1=*}$ must be
of the form $u\cdot[0,-2,0,\ldots,0,2,2]$. From this constraint, we effectively T-construct $\Delta_1=[0,1]$. Hence, $[0,2,2]$ is also  effectively T-constructed  from $\{f,\Delta_1\}$.
The resulted constraint $[0,2,2]$ is clearly in $\OR$.
Finally, consider the case of $(z_{k-1},z_k)=(u,-u)$. Since $f= u\cdot[1,-1,-1,\ldots,1,-1,1]$, the constraint $f^{x_1=*,x_2=*,\ldots,x_{k-1}=*}$ has the form $[0,y]$ for a
certain non-zero value $y$.
This allows us to use $\Delta_1$ freely. Applying $\Delta_1$ repeatedly, we easily obtain $[-2,-2,0]$ from $f^{x_1=*,x_2=*}= u\cdot [-2,-2,\ldots,0]$. This constraint $[-2,-2,0]=-2\cdot NAND$ belongs to $\NAND$.

(ii) Let us consider the second case where $|u|<|w|$.  Here, we define $h'(x_1)=f(x_1,x_1,\ldots,x_1)$. Since  $h'=[u,w]$ holds, we effectively T-construct $\Delta_1=[0,1]$ from $h'$ by Claim \ref{algebraic-simul}. As a result,  from $\{f,\Delta_1\}$, we further effectively T-construct  two constraints $g=f^{x_1=1}$  and $h=f^{x_1=*}$.
In the case of $g\notin\Gamma$, the desired result follows from the induction hypothesis. Hereafter, we assume that $g$ is indeed in $\Gamma$.

(a) If $g\in\DG$, then $g$ must be either  $[0,0,\ldots,0,w]$ or $x\cdot [1,z,z^2,\ldots,z^{k-1}]$. In the former case,
$f$ equals $[u,0,0,\ldots,0,w]$, and thus it belongs to $\ED_{1}^{(+)}$. This is a clear contradiction.
In the latter case, $f$ has the form $x\cdot [u/x,1,z,z^2,\ldots,z^{k-1}]$.

If $x=uz$, then $f$ equals $\frac{x}{z}\cdot [1,z,z^2,\ldots,z^{k-1}]$, which belongs to $\DG$, a contradiction.
Therefore, we obtain $x\neq uz$. Now assume that $z=-1$.
Since $x\neq uz$, $x+u\neq0$ holds. Note that the constraint
$h$ is of the form $x\cdot [\frac{u+x}{x},0,0,\ldots,0]$, from which
we can effectively T-construct $[\frac{u+x}{x},0]$. This unary constraint allows us to use  $\Delta_0=[1,0]$ freely. An application of $\Delta_0$ to $f$ generates $g'=x\cdot [u/x,1,z,z^2,\ldots,z^{k-2}]$. Since $g'\notin\Gamma$, we can apply the induction hypothesis.
Next, assume that $z\neq-1$. By a simple calculation, we obtain
$h = x(1+z)\cdot [\frac{u+x}{x(1+z)},1,z,\ldots,z^{k-2}]$. It follows from $x\neq uz$ that $\frac{u+x}{x(1+z)} \neq \frac{1}{z}$.   Hence, $h$ is not in $\Gamma$. The induction hypothesis then leads to the desired consequence.

(b) Consider the case of $g\in\ED_{1}^{(+)}$. Let $g$ be  $[x,0,\ldots,0,w]$ for a certain constant $x\neq0$; thus, $f$ equals
$[u,x,0,\ldots,0,w]$. If $u\neq-x$, then the constraint $h$ has the form $[u+x,x,0,\ldots,0,w]$. Since $u+x\neq0$, $h$ does not belong to $\Gamma$. We then apply the induction hypothesis to $h$. On the contrary, when $u=-x$, we instead substitute $f^2=[u^2,x^2,0,\ldots,0,w^2]$  for  $f$ and make this case reduced to the previous case of $u\neq-x$.

(c) Let us consider the case where $g$ is in $\AZ$. Since $g$ cannot have the form $[0,x,0,x,\ldots,x,0]$, we first assume that $g=[0,x,0,x,\ldots,x]$. The original constraint $f$ then has the form $[u,0,x,0,x,\ldots,x]$
with $x=w$.   Notice that $x\neq u$ because $|u|<|w|=|x|$.
Since $0<|u|<|x|$, the constraint $h=[u,x,x,\ldots,x]$ does not belong to $\Gamma$. The induction hypothesis can be applied to $h$.
In the case of $g=[x,0,x,0,\ldots,x]$, on the contrary, we consider $g'=(f^2)^{x_1=*}$, which is $[u^2+x^2,x^2,\ldots,x^2]$. From this $g'$, we obtain $[u^2+x^2,x^2]$. Claim \ref{algebraic-simul} again allows us to use $\Delta_0 =[1,0]$. Apply $\Delta_0$ repeatedly to the constraint $f^2$. We then obtain $[u^2,x^2,0]$, which clearly belongs to $\NAND$.

(d) Assume that $g$ belongs to $\AZ_1$.  Note that $g=[0,x,0,-x,0,\ldots,0]$
and $g=[x,0,-x,0,x,\ldots,0]$ are both impossible. First, we assume that  $g=[0,x,0,-x,\ldots,\pm x]$; thus, $f$ has the form $[u,0,x,0,-x,\ldots,\pm x]$, where $w=\pm x$. Since $|u|<|w|=|x|$, the constraint $h$ ($=[u,x,x,-x,-x,\ldots,\pm x]$) cannot belong to $\Gamma$. We then apply the induction hypothesis to $h$. Next, assume that $g=[x,0,-x,0,x,\ldots,\pm x]$. Since $f$ has the form $[u,x,0,-x,0,x,\ldots,\pm x]$, we consider a new constraint $g'=(f^2)^{x_1=*}$, which equals $[u^2+x^2,x^2,x^2,\ldots,x^2]$. From this constraint $g'$, we obtain $[u^2+x^2,x^2]$. The constant unary constraint $\Delta_0=[1,0]$ can be effectively T-constructed from $[u^2+x^2,x^2]$ by Claim \ref{algebraic-simul}. Using this $\Delta_0$, we obtain from $f^2$ the constraint $[u^2,x^2,0]$, which belongs to $\NAND$.

(e) Assuming that $g\in \BB_0$, we first consider the case where   $g=[x,-x,-x,x,\ldots,\pm x]$ with $x = \pm w$.
Note that $f$ must have the form $x\cdot [u/x,-1,-1,1,\ldots,-1,\pm1]$. Since $f\notin\AZ_1$, $u\neq x$ (equivalently, $u+x\neq 2x$) follows. If $u+x\neq -2x$, then the constraint  $h = x\cdot [(u+x)/x,-2,0,2,0,\cdots,0\,\text{or}\,\pm2]$ cannot belong to $\Gamma$, and thus we can apply the induction hypothesis to $h$.
The remaining case is $u+x=-2x$ (equivalently, $u=-3x$). In this case, we obtain $f= x\cdot [-3,1,-1,-1,1,\cdots,\pm1]$. Since $f^{x_1=x_2=\cdots =x_{k}}$ is $[-3,\pm1]$, we can effectively T-construct $\Delta_0=[1,0]$ by Claim \ref{algebraic-simul}. Applying $\Delta_0$ repeatedly to $f$, we obtain a new constraint $h'=[-3,1,-1]$, which clearly belongs to $\BB$.
The case where $g=[x,x,-x,-x,\ldots,\pm x]$ can be similarly handled.

\s

[Case: $u=w=0$]
Here, we assume that $\Delta_{i_0}=\Delta_0$. The other case of $\Delta_{i_0}= \Delta_1$ is similarly handled.
Now, we effectively T-construct $g=f^{x_1=0}$ from $\{f,\Delta_0\}$. Note that $g(0^{k-1})=0$. As done before, it suffices to consider the case where
$g$ is in $\Gamma$.
Clearly, $g\notin\ED_1^{(+)}\cup\BB_0$, and thus $g$ must be in $\DG\cup \AZ\cup\AZ_1$.
In the following argument, $h$ refers to $f^{x_1=*}$.

(a) Assume that $g$ is in $\DG$. Since $g(0^{k-1})=0$, $g$ is of the form $[0,0,\ldots,0,x]$ with $x\neq0$.  Since $f$ equals $[0,0,\ldots,0,x,0]$,
$h$ coincides with $[0,0,\ldots,x,x]$. Obviously,  $h\notin\Gamma$, and thus the induction hypothesis can be applied to $h$.

(b) When $g$ belongs to $\AZ$,  $g$ must have the form $[0,x,0,x,\ldots,0]$, because $g=[0,x,0,x,\ldots,x]$ implies $f=[0,x,0,x,\ldots,x,0]\in\AZ$, which leads to  a contradiction. Since  $f=[0,x,0,x,\ldots,0,0]$, we obtain $h=[x,x,x,\ldots,x,0]$. Since $h\notin\Gamma$, the induction hypothesis then leads to the desired consequence.

(c) Assume that $g\in\AZ_1$. If $g=[0,x,0,-x,\ldots,\pm x]$, then the original constraint $f=[0,x,0,-x,\ldots,\pm x,0]$ is already in $\AZ_1$, a contradiction. Hence, $g$ must have the form $[0,x,0,-x,\ldots,0]$, yielding   $h=[x,x,-x,\ldots,0]$. Clearly, $h$ does not belong to $\Gamma$. Finally, we  apply the induction hypothesis to $h$.
\end{proofof}

\subsection*{Appendix: Proof of Lemma \ref{constructibility}}

In what follows, we will give the missing proof of Lemma \ref{constructibility}. For any constraint $f$ of arity $k$, the notation $\max|f|$ indicates the maximum value $|f(x)|$ over all inputs  $x\in\{0,1\}^k$.

\s

(1)--(2)  These properties (reflexivity and transitivity) directly come from the definition of effective T-constructibility.

\s

(3) Let $(\HH_1,\HH_2,\ldots,\HH_n)$ be a generating series of $\FF_1$ from $\FF_2$. We need to show that $\sharpcsp(\HH_i,\GG) \APreduces \sharpcsp(\HH_{i+1},\GG)$  for each adjacent pair $(\HH_i,\HH_{i+1})$, where $i\in[n-1]$. By Lemma \ref{AP-property}, $\APreduces$ is transitive; thus,
it follows that $\sharpcsp(\HH_1,\GG)\APreduces \sharpcsp(\HH_n,\GG)$.
This is
clearly equivalent to $\sharpcsp(\FF_1,\GG)\APreduces \sharpcsp(\FF_2,\GG)$, as requested.

Taking an arbitrary pair $(\HH_i,\HH_{i+1})$ with $i\in[n-1]$,
we treat the first case where $(\HH_i,\HH_{i+1})$ satisfies Clause (I) of Definition \ref{def:constructibility}. Consider a constraint frame $\Omega=(G,X|\HH',\pi)$ with $\HH'\subseteq \HH_i\cup\GG$. For convenience, let $\HH'=\{f_1,f_2,\ldots,f_d\}$. Take $f_i$ inductively and consider all  subgraphs of $G$ that represents $f_i$. Choose such subgraphs one by one. Now, let $G_{f_i}$ be such a subgraph. By Clause (I), there exists another finite graph $G'$ that realizes $f_i$ by $\HH_{i+1}$. We replace $G_{f_i}$ in $G$ by $G'_{f_i}$. After all the subgraphs representing $f_i$ are replaced, the obtained graph, say, $G'$ constitutes a new constraint frame $\Omega'$. It is not difficult to show that $csp_{\Omega'}$ equals $csp_{\Omega}$. We continue this replacement process for all $f_i$'s. In the end, we conclude that $\sharpcsp(\HH_i,\GG)\APreduces \sharpcsp(\HH_{i+1},\GG)$.

We will examine the second case where  $(\HH_i,\HH_{i+1})$ satisfies Clause (II) of Definition \ref{def:constructibility}. In what follows, for ease of
our argument, we assume that $\HH_i=\{f\}$ and we want to claim that $\sharpcsp(f,\GG)\APreduces \sharpcsp(\HH_{i+1},\GG)$. Take a p-convergence series $\Lambda$ for $f$, which is effectively T-constructible from $\HH_{i+1}$.
Our claim is split into two parts: (a) $\sharpcsp(f,\GG)\APreduces \sharpcsp(\Lambda,\GG)$ and (b) $\sharpcsp(\Lambda,\GG)\APreduces \sharpcsp(\HH_{i+1},\GG)$.
We will prove these parts separately. Since (b) is easy, we start with (b).

\s

\sloppy
(b) We intend to show that $\sharpcsp(\Lambda,\GG)\APreduces \sharpcsp(\HH_{i+1},\GG)$. Let $\Lambda =(f_1,f_2,\ldots)$ and  $\HH_{i+1}=\{g_1,g_2,\ldots,g_d\}$. Now, we take any constraint frame $\Omega = (G,X|\GG',\pi)$ with $\GG'\subseteq \Lambda\cup\GG$, given to $\sharpcsp(\Lambda,\GG)$. Since the constraint set $\GG'$ is finite, for simplicity, we assume that $\GG'$ is composed of constraints $h_1,h_2,\ldots,h_s,f_{i_1},f_{i_2},\ldots,f_{i_t}$, where $s\in\nat$, $t\in\nat^{+}$, and each constraint $h_i$ belongs to $\FF-\Lambda$. For this  constraint frame $\Omega$, we will explain how to compute the value $csp_{\Omega}$. Since $\Lambda$ is effectively T-constructible  from $\GG$,  there exists  a polynomial-time DTM $M$ that, for each index $j\in[t]$, generates an appropriate graph $\tilde{G}_{i_j}$ realizing $f_{i_j}$ from any graph $G_{i_j}$ representing $f_{i_j}$.

Each node $v$ labelled $f_{i_j}$ ($j\in[t]$) in $G$ corresponds to a unique subgraph $G_{i_j}$, including all dangling edges adjacent to $v$, that  represents $f_{i_j}$.
By running $M$ on $G_{i_j}$,  we obtain another subgraph $\tilde{G}_{i_j}$ realizing $f_{i_j}$, which contains all the dangling edges of $G_{i_j}$.  It is therefore possible to generate from $G$ another bipartite graph $\tilde{G}$ in which every subgraph $G_{i_j}$ of $G$ representing $f_{i_j}$ is replaced by its associated subgraph $\tilde{G}_{i_j}$ obtained from $G_{i_j}$ by $M$.  We denote by $\Omega'$ the constraint frame obtained from $\Omega$ by replacing $G$ with $\tilde{G}$ and by modifying $\pi$ accordingly.  The definition of ``realizability'' implies that $csp_{\Omega'}$ equals $\gamma\cdot csp_{\Omega}$ for an appropriate number $\gamma\in\algebraic$.  Since $\tilde{G}$ contains only constraints in $\HH_{i+1}\cup\GG$,  $\Omega'$ must be a valid input instance to $\sharpcsp(\HH_{i+1},\GG)$. As a result,
we conclude that $\sharpcsp(\Lambda,\GG)$ is AP-reducible to $\sharpcsp(\HH_{i+1},\GG)$.

\s

(a) We want to claim that $\sharpcsp(f,\GG)\APreduces \sharpcsp(\Lambda,\GG)$. This claim is proven by modifying the proof of
\cite[Lemma 9.2]{Yam10a}.
Hereafter, assume that $f$ is of arity $k$ and let $\Lambda=(g_1,g_2,\ldots)$.  For convenience, we define $AC=\{x\in\{0,1\}^k\mid f(x)\neq0\}$. By Eq.(\ref{eqn:convergence}), there exists a constant $\lambda\in(0,1)$ such that, for every $m\in\nat^{+}$ and every  $x\in\{0,1\}^k$, certain constants  $c,d\in\{\pm1\}$ satisfies the following condition:
\begin{quote}
\begin{itemize}
\item[(*)\;\;] $(1+\lambda^m c)g_m(x) \leq f(x) \leq (1+\lambda^m d)g_m(x)$ for all $x\in AC$, and $|g_m(x)|\leq \lambda^m$ for all $x\in\{0,1\}^k-AC$.
\end{itemize}
\end{quote}
Without loss of generality, we can assume that $\gamma$ is an algebraic real number.

Let us take  any constraint frame $\Omega=(G,X|\GG',\pi)$ with $G=(V_1|V_2,E)$ and $\GG'\subseteq \{f\}\cup\GG$ given as an input instance to $\sharpcsp(f,\GG)$. It is enough to consider the case where $f$ appears
in $\GG'$.  Let $p_f$ denote the total number of nodes in $V_2$ whose
labels are $f$.
For simplicity, write $L$ for the set of all $2^k$-tuples $\ell=(\ell_{x_1},\ell_{x_2},\ldots,\ell_{x_{2^k}})\in\nat^{2^k}$ satisfying that $\sum_{i\in[2^k]}\ell_{x_i}= p_f$, where each $x_i$ denotes the  lexicographically $i$th string in $\{0,1\}^k$. In addition, we set $L_f =\{\ell\in L\mid \forall i\in[2^k]
\;[\  f(x_i)=0\rightarrow \ell_{x_i}=0\ ]\}$.
It is not difficult to show by Eq.(\ref{eqn:csp-def})
that $csp_{\Omega}$ can be expressed in   the form
$\sum_{\ell\in L_f} \alpha_{\ell}(\prod_{x\in AC}f(x)^{\ell_x})$ for appropriately chosen numbers $\alpha_{\ell}\in \algebraic$, provided that $0^0$ is treated as $1$ for technical reason.

We set  $a_0=2^k!\,2^{4k}$ and $b_0= [1+ (2\max|f|)^{|V_2|}]\cdot \sum_{\ell\in L-L_f}|\alpha_{\ell}|$, which are obviously independent of $m$. Meanwhile, we arbitrarily fix an integer $m\in\nat^{+}$ that satisfies both $\lambda^ma_0<1$ and $\lambda^mb_0<1$, and we denote by $\Omega_m$  the constraint frame obtained from $\Omega$ by replacing every node labeled $f$ with a new node having the label $g_m$.
Concerning this $\Omega_m$, its value $csp_{\Omega_m}$ coincides with the sum  $\Gamma_{1,m}+\Gamma_{2,m}$, where
\[
\Gamma_{1,m} = \sum_{\ell\in L_f} \alpha_{\ell} \prod_{x\in AC}g_m(x)^{\ell_x} \;\;\text{ and }\;\;
\Gamma_{2,m} = \sum_{\ell\in L-L_f} \alpha_{\ell} \prod_{x\in \{0,1\}^k\wedge \ell_x>0}g_m(x)^{\ell_x}.
\]

Next, we will establish a close relationship between $csp_{\Omega}$ and $\Gamma_{1,m}$; more specifically, we intend to prove the following key claim.

\begin{claim}\label{Gamma-1m-a0}
It holds that $(1+\lambda^m B)\Gamma_{1,m}\leq csp_{\Omega}\leq (1+\lambda^m B')\Gamma_{1,m}$ for appropriate numbers $B,B'\in\algebraic$ satisfying $|B|,|B'|\leq a_0$. Therefore, $sgn(csp_{\Omega}) = sgn(\Gamma_{1,m})$ holds.
\end{claim}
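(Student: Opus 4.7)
The plan is to compare $csp_\Omega$ and $\Gamma_{1,m}$ term by term using the multiplicative closeness between $f(x)$ and $g_m(x)$ guaranteed by condition~(*). For each $x\in AC$, condition~(*) lets me write $f(x)=(1+\lambda^m\eta_x)\,g_m(x)$ for some algebraic $\eta_x$ with $|\eta_x|\leq 1$; these $\eta_x$ depend on $x$ and $m$ but not on $\ell$. Substituting into the expression $csp_\Omega=\sum_{\ell\in L_f}\alpha_\ell\prod_{x\in AC}f(x)^{\ell_x}$ separates the sum into the ``main'' part $\Gamma_{1,m}=\sum_{\ell\in L_f}\alpha_\ell\prod_{x\in AC}g_m(x)^{\ell_x}$ and a correction factor, namely
\[
csp_\Omega \;=\; \sum_{\ell\in L_f}\alpha_\ell \Bigl(\prod_{x\in AC}(1+\lambda^m\eta_x)^{\ell_x}\Bigr)\prod_{x\in AC}g_m(x)^{\ell_x}.
\]

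Next, I would expand each inner product $\prod_{x\in AC}(1+\lambda^m\eta_x)^{\ell_x}$ via the binomial theorem and collect the zeroth-order term, which reproduces $\Gamma_{1,m}$, separately from the higher-order terms, which carry a factor of at least $\lambda^m$. Writing this correction as $\lambda^m\cdot R_m$, I would bound $R_m$ using three ingredients: (i) the coarse estimate $|\eta_x|\leq 1$, (ii) the fact that the total exponent $\sum_x\ell_x=p_f$ is bounded and that the binomial expansion produces at most $(2^k)!$ distinct monomial shapes once we group by the pattern of which $x$'s appear, and (iii) the auxiliary constraint $\lambda^m a_0<1$, which makes geometric tails collapse into a single leading $\lambda^m$. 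Combining these, the correction can be absorbed into a single factor of the form $1+\lambda^m B$ (or $1+\lambda^m B'$ on the other side) with $|B|,|B'|\leq a_0=2^k!\,2^{4k}$, yielding the two-sided inequality $(1+\lambda^m B)\Gamma_{1,m}\leq csp_\Omega\leq(1+\lambda^m B')\Gamma_{1,m}$.

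The sign assertion is then immediate: the choice $\lambda^m a_0<1$ forces $|1+\lambda^m B|, |1+\lambda^m B'|>0$ with a positive sign, so $csp_\Omega$ and $\Gamma_{1,m}$ differ by multiplication by a strictly positive real, giving $sgn(csp_\Omega)=sgn(\Gamma_{1,m})$.

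The main obstacle I anticipate is step~(ii) of the bound on $R_m$: showing that the coefficient absorbing the binomial expansion and the weights $\alpha_\ell$ can be controlled by a constant $a_0$ depending only on the arity $k$, rather than by $p_f$ or $|L_f|$. This is where the restriction to $L_f$ (rather than the full $L$) is crucial, since on $L_f$ every factor $g_m(x)^{\ell_x}$ is attached to an $x\in AC$ where $f(x)\neq 0$, so the correction does not amplify against a vanishing denominator. The role of algebraic-real values is precisely to ensure that the lower-bound estimate on $|\Gamma_{1,m}|$ is uniform enough in $m$ to let the relative error $\lambda^m R_m/\Gamma_{1,m}$ be controlled by the constant $a_0$; the companion bound $\lambda^m b_0<1$ will be used in the subsequent (separate) estimate involving $\Gamma_{2,m}$, which I do not treat here.
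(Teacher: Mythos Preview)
Your strategy parallels the paper's---use condition (*) to relate $\prod_x f(x)^{\ell_x}$ to $\prod_x g_m(x)^{\ell_x}$ multiplicatively and control the defect---but the execution has a real gap at your step (ii). The per-$\ell$ correction
\[
\rho_\ell \;=\; \frac{1}{\lambda^m}\Bigl[\prod_{x\in AC}(1+\lambda^m\eta_x)^{\ell_x}-1\Bigr]
\]
has leading term $\sum_{x\in AC}\ell_x\eta_x$, whose size is of order $p_f=\sum_x\ell_x$; and $p_f$ is the number of occurrences of $f$ in the input instance $\Omega$, which is \emph{not} bounded by any function of the arity $k$. Hence $|\rho_\ell|$ is not controlled by $a_0=2^k!\,2^{4k}$. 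Your remark that ``the binomial expansion produces at most $(2^k)!$ distinct monomial shapes'' limits only the number of distinct bases $\eta_x$, not the magnitude of the exponents $\ell_x$ that multiply them. The paper's proof inserts an intermediate step you skip: it first applies the scalar inequality $1+nz\le(1+z)^n\le 1+enz$ (for $z$ small relative to $1/n$, with $e\in\{1/2,n\}$) to collapse each power $(1+\lambda^m c_{\ell,x})^{\ell_x}$ into a single linear factor $1+\lambda^m\ell_x e_{\ell,x}$, and only then expands the resulting product, which now runs over $x\in AC$ and so has at most $2^k$ factors. That collapse is precisely what brings the bound down to something of the shape $2^k!\,2^{4k}$.

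Separately, your closing paragraph misidentifies where algebraicity enters. The lower bound of Lemma~\ref{complex-lower-bound} is applied to $csp_\Omega$, not to $\Gamma_{1,m}$, and it is invoked only \emph{after} the present claim, in the design of the randomized algorithm (to define the threshold $d_0$). No lower bound on $|\Gamma_{1,m}|$ is available or needed here; the claim is obtained from (*) and the combinatorics of the expansion alone. In particular, you cannot rescue the absorption step $csp_\Omega=(1+\lambda^m B)\Gamma_{1,m}$ by appealing to a lower bound on $|\Gamma_{1,m}|$: with mixed signs among the terms $\alpha_\ell\prod_x g_m(x)^{\ell_x}$, the ratio $R_m/\Gamma_{1,m}$ is not controlled by term-wise bounds, which is why the paper works throughout with two-sided \emph{inequalities} (tracking signs via $c_{\ell,x},d_{\ell,x}\in\{\pm1\}$) rather than with your exact identity.
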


\begin{proof}
It is obvious that the second part of the claim follows  from the first part, because $\lambda^m|B| \leq \lambda^ma_0<1$ and similarly $\lambda^m|B'|<1$ by our choice of $m$. Henceforth, we aim at proving the first part. Fix $\ell\in L_f$ arbitrarily. {}From Condition (*), for appropriate selections of $c_{\ell,x}$'s and $d_{\ell,x}$'s in $\{\pm1\}$, we obtain
\begin{equation}\label{eqn:M_0-vs-M_1}
\prod_{x\in AC}(1+\lambda^mc_{\ell,x})^{\ell_{x}} g_m(x)^{\ell_{x}} \leq \prod_{x\in AC}f(x)^{\ell_{x}}\leq \prod_{x\in AC}(1+\lambda^md_{\ell,x})^{\ell_{x}} g_m(x)^{\ell_{x}}.
\end{equation}
Note that, when all elements in $\FF'$ are limited to {\em nonnegative}  constraints, we can always set $c_{\ell,x}=-1$ and $d_{\ell,x}=1$. Eq.(\ref{eqn:M_0-vs-M_1}) leads to upper and lower bounds of $csp_{\Omega}$:
\begin{equation}\label{eqn:alpha-csp-Omega}
\sum_{\ell\in L_f} \alpha_{\ell}\prod_{x\in AC}(1+\lambda^mc_{\ell,x})^{\ell_{x}} g_m(x)^{\ell_{x}} \leq  csp_{\Omega}
\leq \sum_{\ell\in L_f} \alpha_{\ell}\prod_{x\in AC}(1+\lambda^md_{\ell,x})^{\ell_{x}} g_m(x)^{\ell_{x}}.
\end{equation}
Let us further estimate the first and the last terms in Eq.(\ref{eqn:alpha-csp-Omega}). Let us handle the first term.
By considering the binomial expansion of $(1+z)^{n}$, it holds that, for any numbers $n\in\nat^{+}$ and $z\in\real$ satisfying that $-1/n\leq z\leq 2/n$, there exists a number $e\in\{1/2,n\}$ such that $1+nz\leq (1+z)^{n}\leq 1+enz$ (more precisely, if $z\geq0$ then $e=n$; otherwise, $e=1/2$).
Hence, by choosing appropriate numbers $e_{\ell,x} \in \{\pm1/2,\pm\ell_x\}$, we obtain
\begin{eqnarray*}
\prod_{x\in AC}(1+\lambda^m c_{\ell,x})^{\ell_x}g_m(x)^{\ell_x} \geq
\prod_{x\in AC}(1+\lambda^m \ell_{x}e_{\ell,x})g_m(x)^{\ell_x}
= \prod_{x\in AC}(1+\lambda^m \ell_{x}e_{\ell,x})\prod_{x\in AC}g_m(x)^{\ell_x}
\end{eqnarray*}
since $m$ satisfies that $-1<\lambda^m\ell_{x}e_{\ell,x}<1$.

For a further estimation, let us focus on the value $\prod_{x\in AC}(1+\lambda^m z_x)$ for any series $\{z_x\}_{x\in AC}\subseteq [-2^{2k},2^{2k}]_{\integer}$. Since $\prod_{x\in AC}(1+\lambda^m z_x)$ has the form $1+\sum_{i=1}^{|AC|}\sum_{y_1,y_2,\ldots,y_i\in AC}\lambda^{im}z_{y_1}z_{y_2}\cdots z_{y_i}$,  where all indices  $y_1,y_2,\ldots,y_i$ are distinct, if we set $\tilde{B} = \sum_{i=1}^{|AC|}\sum_{y_1,y_2,\ldots,y_i\in AC}\lambda^{(i-1)m}|z_{y_1}z_{y_2}\cdots z_{y_i}|$, then we derive that   $1-\lambda^m \tilde{B}\leq \prod_{x\in AC}(1+\lambda^m z_x)
\leq 1+\lambda^m \tilde{B}$.
Note that $|\lambda^mz_{x}| \leq \lambda^m2^{2k}\leq \lambda^ma_0<1$
for any $x\in AC$ since $z_{x}\in [-2^{2k},2^{2k}]_{\integer}$.
It therefore follows that
\[
\sum_{y_1,\ldots,y_i\in AC}\lambda^{(i-1)m}|z_{y_1}\cdots z_{y_i}|\leq \sum_{y_1\in AC}|z_{y_1}|\sum_{y_2,\ldots,y_i\in AC}1\leq |AC|2^{k}\comb{|AC|}{i}\leq |AC|2^{2k}|AC|!.
\]
We then conclude that $\tilde{B}$ satisfies that
$|\tilde{B}|\leq \sum_{i=1}^{|AC|} |AC|2^{2k}|AC|! \leq |AC|^22^{2k}|AC|! \leq a_0$ since $|AC|\leq 2^k$. {}From this fact, there exists a series $\{B_{\ell}\}_{\ell\in L_f}\subseteq \algebraic$ with $|B_{\ell}|\leq a_0$ such that
\[
\prod_{x\in AC}(1+\lambda^{m}\ell_x e_{\ell,x}) \prod_{x\in AC}g_m(x)^{\ell_x} \geq
(1+\lambda^m B_{\ell})\prod_{x\in AC}g_m(x)^{\ell_x}.
\]
Finally, we choose an appropriate number $B\in\algebraic$ with $|B|\leq a_0$ that satisfies
\begin{equation*}
\sum_{\ell\in L_f}\alpha_{\ell}(1+\lambda^mB_{\ell})\prod_{x\in AC}g_m(x)^{\ell_x} \geq (1 + \lambda^m B)\sum_{\ell\in L_f}\alpha_{\ell}\prod_{x\in AC}g_m(x)^{\ell_x} = (1 + \lambda^m B)\Gamma_{1,m}.
\end{equation*}

Concerning the third term in Eq.(\ref{eqn:alpha-csp-Omega}), a  similar argument used for the first term shows the existence of an algebraic real
number $B'\in\algebraic$ such that $|B'|\leq a_0$ and
\begin{equation*}
\sum_{\ell\in L_f}\alpha_{\ell}\prod_{x\in AC}(1+\lambda^m d_{\ell,x})^{\ell_x}g_m(x)^{\ell_x} \leq
(1+\lambda^m B')\Gamma_{1,m}.
\end{equation*}
By the selection of $B$ and $B'$, they certainly satisfy the claim.
\end{proof}

Next, we will give an upper-bound of $|\Gamma_{2,m}|$. Recall that $b_0= C   \sum_{\ell\in L-L_f}|\alpha_{\ell}|$, where $C= 1+ (2\max|f|)^{|V_2|}$.

\begin{claim}\label{Gamma-2m}
It holds that $|\Gamma_{2,m}|\leq \lambda^m b_0$.
\end{claim}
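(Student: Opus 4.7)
The plan is to exploit the defining property of $L - L_f$: for every $\ell \in L - L_f$ there exists at least one ``bad'' index $x_0 \in \{0,1\}^k$ with $f(x_0) = 0$ (so $x_0 \notin AC$) and $\ell_{x_0} \geq 1$. The second clause of Condition (*) then yields $|g_m(x_0)| \leq \lambda^m$, and since $\ell_{x_0} \geq 1$ together with $\lambda^m < 1$ (which holds by our choice of $m$ since $\lambda^m a_0 < 1$), we conclude $|g_m(x_0)^{\ell_{x_0}}| \leq \lambda^m$. This single bad factor will supply the $\lambda^m$ decay we need.

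Next I would obtain a uniform (in $m$) bound on the remaining factors. For any $x \in AC$, the first clause of Condition (*) combined with $\lambda^m \leq 1$ gives $|g_m(x)| \leq (1 + \lambda^m) |f(x)| \leq 2\max|f|$; for any $x \notin AC$, we simply have $|g_m(x)| \leq \lambda^m \leq 1$. Hence every factor is at most $\max(1, 2\max|f|)$. Since $\sum_{x} \ell_{x} = p_f \leq |V_2|$, the product over all remaining indices is bounded by $\max(1, 2\max|f|)^{|V_2|}$, which in either case ($2\max|f| \geq 1$ or $< 1$) is at most $1 + (2\max|f|)^{|V_2|} = C$.

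Combining the two estimates, for each $\ell \in L - L_f$,
\[
\left| \prod_{x\in\{0,1\}^k,\, \ell_x > 0} g_m(x)^{\ell_x} \right| \;\leq\; \lambda^m \cdot C.
\]
Summing with the weights $|\alpha_\ell|$ across $L - L_f$ and invoking the triangle inequality yields
\[
|\Gamma_{2,m}| \;\leq\; \lambda^m \cdot C \cdot \sum_{\ell \in L - L_f} |\alpha_\ell| \;=\; \lambda^m b_0,
\]
which is the desired bound. The argument is essentially a bookkeeping exercise with no significant obstacle; the only subtlety is the dichotomy $2\max|f| \geq 1$ versus $< 1$ when absorbing $\max(1, 2\max|f|)^{|V_2|}$ into the term $1 + (2\max|f|)^{|V_2|}$, which explains the otherwise puzzling ``$1+$'' appearing in the definition of $C$.
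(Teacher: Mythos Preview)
Your proof is correct and follows essentially the same route as the paper: isolate one factor with $x_0 \notin AC$ to extract the $\lambda^m$, bound each remaining factor by $\max(1, 2\max|f|)$, and absorb the resulting power into $C = 1 + (2\max|f|)^{|V_2|}$. One minor slip: for $x \in AC$, Condition~(*) actually yields $|g_m(x)| \leq |f(x)|/(1-\lambda^m)$ rather than $(1+\lambda^m)|f(x)|$, but since the choice of $m$ ensures $\lambda^m < 1/a_0 \leq 1/2$ you still obtain $|g_m(x)| \leq 2\max|f|$, so your argument goes through unchanged.
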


\begin{proof}
For the time being, we fix a series $\ell\in L-L_f$ and conduct a basic analysis. For this series
$\ell$, there exists an element $x\in\{0,1\}^k$
such that $x\notin AC$ and $\ell_x>0$.
For convenience, we define $D=\{x\in\{0,1\}^k\mid \ell_x>0\}$ and further partition it into two sets: $D_1=\{x\in D\mid x\in AC\}$ and $D_2=\{x\in D\mid x\notin AC\}$. Notice that $D_2$ is nonempty. Since  $\lambda<1$ and $|g_m(x)|\leq \lambda^m$ for all $x\in D_2$, it follows that
\begin{equation*}
\left| \prod_{x\in D_2}g_m(x)^{\ell_x}  \right|
 =  \prod_{x\in D_2}|g_m(x)|^{\ell_x}
\leq \prod_{x\in D_2} \lambda^{m\ell_x}
\leq \lambda^m.
\end{equation*}
Condition (*) implies that $|g_m(x)|\leq |f(x)|/\min\{1+\lambda^mc,1+\lambda^md\}\leq 2|f(x)|$ for any $x\in AC$ because $|\lambda^mc|,|\lambda^md|<1/2$. If $\max|g_m|\geq1$, then we obtain
\[
\left| \prod_{x\in D_1}g_m(x)^{\ell_x} \right| \leq \prod_{x\in D_1} |g_m(x)|^{\ell_x} \leq (\max|g_m|)^{\sum_{x\in D_1} \ell_x}
\leq  (\max|g_m|)^{p_f}      \leq (2\max|f|)^{|V_2|}
\]
since $\sum_{x\in D_1}\ell_x \leq p_f\leq|V_2|$.
When $\max|g_m|<1$, we instead obtain  $\prod_{x\in D_1}|g_m(x)|^{\ell_x}\leq 1$.
Therefore, it holds that
\begin{equation*}
\left| \prod_{x\in D}g_m(x)^{\ell_x}  \right|
= \left| \prod_{x\in D_2}g_m(x)^{\ell_x}  \right|\cdot
\left| \prod_{x\in D_1}g_m(x)^{\ell_x}  \right|
\leq \lambda^m C.
\end{equation*}
The value $|\Gamma_{2,m}|$ is upper-bounded by
\begin{equation*}
|\Gamma_{2,m}| \leq \sum_{\ell\in L-L_f} \left|\alpha_{\ell}\right|  \left|\prod_{x\in D}g_m(x)^{\ell_x} \right|
\leq \lambda^m C \sum_{\ell\in L-L_f}  \left|\alpha_{\ell}\right|
\leq \lambda^m b_0.
\end{equation*}
This completes the proof of the claim.
\end{proof}

To finish the proof of Lemma \ref{constructibility}, we will present a randomized oracle computation that solves $\sharpcsp(f,\GG)$ with
a single query to the oracle $\sharpcsp(\Lambda,\GG)$. First, we want to define a special constant $d_0$ corresponding to $\Omega$. The definition of $d_0$ requires the following well-known lower bound of the absolute values of polynomials in algebraic real numbers.

\begin{lemma}\label{complex-lower-bound}{\rm \cite{Sto74}}\hs{1}
Let $\alpha_1,\ldots,\alpha_m\in\algebraic$ and let $c$ be the degree of $\rational(\alpha_1,\ldots,\alpha_m)/\rational$. There exists a constant $e>0$ that satisfies the following statement for any complex number $\alpha$ of the form $\sum_{k}a_{k}\left(\prod_{i=1}^{m}\alpha_i^{k_i}\right)$, where $k=(k_1,\ldots,k_m)$ ranges over $[N_1]\times\cdots\times[N_m]$, $(N_1,\ldots,N_m)\in\nat^{m}$, and $a_k\in\integer$. If $\alpha\neq0$, then $|\alpha|\geq \left(\sum_{k}|a_k|\right)^{1-c}\prod_{i=1}^{m}e^{-cN_i}$.
\end{lemma}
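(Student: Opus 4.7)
The plan is to derive the bound via the classical norm-of-a-nonzero-algebraic-integer trick, working inside the number field $K=\rational(\alpha_1,\ldots,\alpha_m)$, which has degree $c$ over $\rational$ by hypothesis. Let $\sigma_1=\mathrm{id},\sigma_2,\ldots,\sigma_c$ denote the embeddings of $K$ into $\complex$, and for each $i\in[m]$ choose a positive integer $d_i$ such that $d_i\alpha_i$ is an algebraic integer (take $d_i$ to be the leading coefficient of the minimal polynomial of $\alpha_i$). Also set $M_i=\max_{j}\max\{1,|\sigma_j(\alpha_i)|\}$, which is finite because $K/\rational$ has only finitely many embeddings.

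The first step is to clear denominators. Put $D=\prod_{i=1}^{m}d_i^{N_i}$, and consider $\beta=D\cdot\alpha$. Since $\alpha=\sum_{k}a_k\prod_i\alpha_i^{k_i}$ with $a_k\in\integer$, each monomial in the expansion of $\beta$ has the form $a_k\prod_i(d_i\alpha_i)^{k_i}\cdot d_i^{N_i-k_i}$, so $\beta$ is a sum of algebraic integers and thus itself an algebraic integer in $K$. By assumption $\alpha\neq0$, so $\beta\neq0$, and therefore its norm satisfies $|N_{K/\rational}(\beta)|=\prod_{j=1}^{c}|\sigma_j(\beta)|\geq 1$.

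The second step is to bound the non-identity conjugates. For each $j\geq2$, a direct triangle-inequality estimate yields
\[
|\sigma_j(\beta)|\leq D\sum_{k}|a_k|\prod_{i=1}^{m}|\sigma_j(\alpha_i)|^{k_i}\leq D\Big(\sum_{k}|a_k|\Big)\prod_{i=1}^{m}M_i^{N_i}.
\]
Combined with $|N_{K/\rational}(\beta)|\geq 1$, this gives
\[
|\beta|=|\sigma_1(\beta)|\geq\prod_{j=2}^{c}|\sigma_j(\beta)|^{-1}\geq\Big(\sum_{k}|a_k|\Big)^{1-c}\prod_{i=1}^{m}(d_iM_i)^{-(c-1)N_i}.
\]
Dividing by $D=\prod_i d_i^{N_i}$ and collecting exponents recovers
\[
|\alpha|\geq\Big(\sum_{k}|a_k|\Big)^{1-c}\prod_{i=1}^{m}d_i^{-cN_i}M_i^{-(c-1)N_i},
\]
so choosing $e=\max_{i}(d_i^{c}M_i^{c-1})^{1/c}$ (or any uniform bound depending only on $\alpha_1,\ldots,\alpha_m$ and $c$) converts the right-hand side into the form $(\sum_{k}|a_k|)^{1-c}\prod_{i=1}^{m}e^{-cN_i}$ stated in the lemma. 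The constant $e$ manifestly depends only on the minimal polynomials of the $\alpha_i$'s and on $c$, not on the choice of $\alpha$.

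The only delicate point is the bookkeeping of the exponents: one must verify that the single constant $e$ can absorb both the denominator contribution $d_i^{cN_i}$ coming from clearing denominators and the conjugate-size contribution $M_i^{(c-1)N_i}$, uniformly in $i$. Once this is checked, the result is immediate; no deeper transcendence-theoretic input is required beyond $|N_{K/\rational}(\beta)|\geq 1$ for nonzero algebraic integers.
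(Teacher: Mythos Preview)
Your proof is correct and follows the standard route (Liouville-type inequality via the norm of a nonzero algebraic integer). The paper itself does not prove this lemma at all: it is quoted verbatim from Stolarsky's book \cite{Sto74} and used as a black box in the Appendix to manufacture the constant $d_0$. So there is no ``paper's own proof'' to compare against; you have simply supplied the argument that the paper outsources to the reference.

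One minor remark on your bookkeeping: since $a_k\in\integer$ and $\alpha\neq0$ forces at least one $a_k\neq0$, we have $\sum_k|a_k|\geq1$, so the factor $(\sum_k|a_k|)^{1-c}$ with $1-c\leq0$ is well-defined and at most $1$; your inequalities tacitly use this but it is worth stating. Otherwise the clearing-denominators step, the conjugate estimate via $M_i\geq1$, and the absorption of $d_i^{cN_i}M_i^{(c-1)N_i}$ into a single $e^{cN_i}$ are all clean.
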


Following the proof of \cite[Lemma 9.2]{Yam10a} under the assumption that $csp_{\Omega}\neq0$, it is possible to set values of four series $\{N_i\}_{i}$, $\{a_k\}_{k}$, $\{\alpha_i\}_{i}$, and $\{k_i\}_{i}$ appropriately so that Lemma \ref{complex-lower-bound} provides two constants $c,e>0$ for which $|csp_{\Omega}|\geq (\sum_{k}|a_k|)^{1-c}\prod_{i}e^{-cN_i}$. The desired constant $d_0$ is now  defined to be $(\sum_{k}|a_k|)^{1-c}\prod_{i}e^{-cN_i}$. Notice that $d_0$ is an algebraic real number.

Let us describe our randomized approximation algorithm.

\begin{quote}
[Algorithm $\MM$] On input instance $(\Omega,1/\varepsilon)$, set $\delta = \varepsilon/2$ and find  in polynomial time an integer $m\geq1$ satisfying  that $\lambda^ma_0<\min\{1,\delta\}$ and $\lambda^m b_0<\min\{d_0,\delta\}$. Produce another constraint frame $\Omega_m$.
make a query with a query word $(\Omega_{m},1/\delta)$ to the oracle and let $w$ be an answer from the oracle. Notice that $w$ is a {\em random variable} since the oracle is a RAS. Compute $d_0$ defined above. If $|w|< d_0$, then output $0$; otherwise, output $w$.
\end{quote}

We want to prove that the above randomized algorithm $\MM$ approximately
solves $\sharpcsp(f,\GG)$ with high probability. Let us consider two cases separately.

\s

(1) For the first case where $csp_{\Omega}=0$, we need to prove that $M$
outputs $0$ with high probability. Let us evaluate the values $\Gamma_{1,m}$ and $\Gamma_{2,m}$. Obviously, Claim \ref{Gamma-1m-a0} implies $\Gamma_{1,m}=0$. By Claim \ref{Gamma-2m} and the choice of $m$, we derive  $|\Gamma_{2,m}|\leq \lambda^m b_0<d_0$. From $csp_{\Omega_m} = \Gamma_{1,m}+\Gamma_{2,m}$, it follows that $|csp_{\Omega_m}| <d_0$. This means that $\MM$ outputs $0$ with high probability.

\s

(2) Next, we consider the case where $csp_{\Omega}\neq0$.
We consider only the case where $csp_{\Omega}>0$ because the other
case $csp_{\Omega}<0$ can be similarly handled.
The choice of $d_0$ implies that $csp_{\Omega}\geq d_0$.
We then choose a number $\alpha$, not depending on $m$, for which $\alpha (\Gamma_{1,m}+sgn(\alpha)b_0) + b_0 \leq B\Gamma_{1,m}$ and $|\alpha|\leq \max\{a_0,b_0\}$. For this $\alpha$,  it holds by Claim \ref{Gamma-2m} that
\begin{eqnarray*}
(1+\lambda^m\alpha)csp_{\Omega_m} &=&  (1+\lambda^m\alpha)(\Gamma_{1,m}+\Gamma_{2,m}) \\
&\leq& \Gamma_{1,m}+\lambda^m\alpha\Gamma_{1,m} + (1+|\alpha|)\lambda^mb_0 \\
&=&  \Gamma_{1,m}+\lambda^m [\alpha(\Gamma_{1,m} + sgn(\alpha)b_0)+b_0] \\
&\leq& \Gamma_{1,m}+\lambda^m B\Gamma_{1,m} \;\;=\;\;  (1+\lambda^mB)\Gamma_{1,m}.
 \end{eqnarray*}
Similarly, we choose $\alpha'$ with $|\alpha'|\leq \max\{a_0,b_0\}$ such that $(1+\lambda^mB)\Gamma_{1,m}\leq (1+\lambda^m\alpha')(\Gamma_{1,m}+\Gamma_{2,m}) = (1+\lambda^m\alpha')csp_{\Omega_m}$.

For simplicity, let $\gamma = \max\{|\alpha|,|\alpha'|\}$. Note that $\delta\geq \lambda^m\gamma$. Since $\lambda^m\gamma<1$, it holds that $\log_2(1+\lambda^m\gamma)\leq \lambda^m\gamma\leq \delta$.
Thus, we conclude that $1+\lambda^m\alpha'\leq 2^{\log_2(1+\lambda^m\gamma)}\leq 2^{\delta}$. Moreover, since $\log_2(1-\lambda^m\gamma)\geq -\lambda^m\gamma$, it follows that  $1+\lambda^m\alpha \geq 2^{\log_2(1-\lambda^m\gamma)}\geq 2^{-\delta}$.
In conclusion, it holds that
$2^{-\delta} csp_{\Omega} \leq csp_{\Omega_m} \leq 2^{\delta} csp_{\Omega}$.  {}From this follows $csp_{\Omega_m}>0$.

If $w$ is any oracle answer, then it must satisfy that $2^{-\delta}csp_{\Omega_m}\leq w \leq 2^{\delta}csp_{\Omega_m}$ because $csp_{\Omega_m}>0$. Therefore, we derive that $w\leq 2^{\delta}csp_{\Omega_m}\leq 2^{2\delta}csp_{\Omega}$ and $w\geq 2^{-\delta}csp_{\Omega_m}\geq 2^{-2\delta}csp_{\Omega}$. Since $\varepsilon = 2\delta$, $\MM$ outputs a $2^{\varepsilon}$-approximate solution using any $2^{\delta}$-approximate solution for $(\Omega_m,1/\delta)$ as an oracle answer.

\s

This completes the proof of Lemma \ref{constructibility}

\let\oldbibliography\thebibliography
\renewcommand{\thebibliography}[1]{%
  \oldbibliography{#1}%
  \setlength{\itemsep}{0pt}%
}
\bibliographystyle{plain}

\end{document}